\newtheorem{de}{Definition}[section]
\newtheorem{theo}{Theorem}[section]
\newtheorem{prop}{Proposition}[section]
\newtheorem{lemme}{Lemma}[section]
\newtheorem{cor}{Corollary}[section]
\newtheorem{remark}{Remark}[section]
\renewcommand{\bigwedge}{\Lambda}
\newcommand{\Diff}{\text{Diff}}
\newcommand{\Cte}{\text{Cte}\,}
\DeclareMathOperator{\mdiv}{div}
\title[Varifolds]{The varifold representation of non-oriented shapes for diffeomorphic registration}
\author{Nicolas Charon, Alain Trouv\'{e}}
\begin{document}

\maketitle

\begin{abstract}
In this paper, we address the problem of orientation that naturally arises when representing shapes like curves or surfaces as currents. In the field of computational anatomy, the framework of currents has indeed proved very efficient to model a wide variety of shapes. However, in such approaches, orientation of shapes is a fundamental issue that can lead to several drawbacks in treating certain kind of datasets. More specifically, problems occur with structures like acute pikes because of canceling effects of currents or with data that consists in many disconnected pieces like fiber bundles for which currents require a consistent orientation of all pieces. As a promising alternative to currents, varifolds, introduced in the context of geometric measure theory by F. Almgren, allow the representation of any non-oriented manifold (more generally any non-oriented rectifiable set). In particular, we explain how varifolds can encode numerically non-oriented objects both from the discrete and continuous point of view. We show various ways to build a Hilbert space structure on the set of varifolds based on the theory of reproducing kernels. We show that, unlike the currents' setting, these metrics are consistent with shape volume (theorem \ref{theo:principal}) and we derive a formula for the variation of metric with respect to the shape (theorem \ref{theo:variation_formula}). Finally, we propose a generalization to non-oriented shapes of registration algorithms in the context of Large Deformations Metric Mapping (LDDMM), which we detail with a few examples in the last part of the paper.

\end{abstract}

\section{Introduction}
\label{sec:intro}
Statistical shape analysis, the study of shape variability among a group of subjects, has been an active research field since the early works of \cite{Dryden1998}, \cite{Grenander1998}  and \cite{Kendall1999}. Important applications have been found notably in medical imaging leading to the growing community of computational anatomy. The general approach of computational anatomy is to consider shape variability between subjects as resulting from deformations of the ambient space that one tries to estimate. Thus, a fundamental requirement for most algorithms is to have a relevant measure of residual shape difference. One of the main feature in this domain, though, is the wide variety of different shape structures coming from data measurements~: images, landmarks, unlabeled distributions of points, curves, curve bundles, surfaces... These last examples have been gathering many efforts, mostly due to the difficulty to design satisfying data attachment distances. In that regard, a particularly elegant framework has been proposed in \cite{Glaunes}, introducing in computational anatomy the notion of \emph{mathematical currents}, which we briefly present in section \ref{sec:cur}. The construction of Hilbert metrics on currents' spaces has lead to robust distances between shapes that are also very convenient from a numerical point of view, which has proved of high interest in terms of application to shape registration between curves and surfaces (\cite{Glaunes2},\cite{Durrlemann4}).\\

Despite such important successes in computational anatomy, several critics have recently emerged related the use of currents for shape analysis. In our point of view, most of them are the consequence of the orientation dilemma that is totally inherent to the modelling of shapes as currents. We have gathered a synthesis of those problems in section \ref{sec:cur}. The central question would be to provide a representation that is completely free of shape orientation. This issue, however, is not new in the field of geometric measure theory from which the notion of currents originally appeared. Alternative  concepts have been introduced and studied from a theoretical angle, notably the idea of \textit{varifolds} suggested by F.Almgren and later developed by W.Allard. The purpose of this paper is precisely to adapt varifolds to the problematics of computational anatomy, which are fairly different from geometric measure theory. This is not the only possible approach to mention though, since the concept of \textit{normal cycles} has also drawn recent attention in shape analysis \cite{Fu},\cite{Cohen-Steiner}. \\
In the third section of the paper, we first present the mathematical object of varifolds as measures on the Grassmannian bundle, in the spirit of \cite{Allard}. The real new challenge compared to geometric measure theory is to define metrics between varifolds that are simultaneously 'fitted' to the comparison of non-oriented shapes and easily computable from a practical point of view. As for currents, a convenient way is given by the reproducing kernels' setting. Section \ref{sec:var_kernel} is therefore focused on the definition of kernels for the space of varifolds and shows the fundamental result of theorem \ref{theo:principal}. We also give a theoretical formula of variation of such metrics with respect to the geometrical support of shapes in theorem \ref{theo:variation_formula}. Finally, we present a first set of applications of our framework with an extension of classical large deformation matching algorithm to unoriented curves and surfaces.

\section{Representation of shapes with currents~: strengths and limitations}
\label{sec:cur}

\subsection{Currents in computational anatomy}
In the following, we shall briefly present the approach of currents in computational anatomy as first introduced in this field in \cite{Glaunes} and developed later on in \cite{Durrleman}. The essential features of currents, as we shall see, is that it provides an embedding of the set of all shapes of given dimension into a common Banach space and gives an \textbf{intrinsic} representation, in the sense that it is independent of shape parametrization. \\
Throughout the text, we will adopt the following notations and definitions~:
\begin{itemize}
 \item $E$ : the embedding $n$-dimensional euclidean space of shapes.
 \item $\bigwedge^{p} E$ ($0 \leq p \leq n$) : $p$-times exterior product of $E$, which is a vector space of dimension $\binom{n}{p}$ spanned by the set of simple $p$-vectors $\xi_{1}\wedge..\wedge\xi_{p}$.
 \item $\bigwedge^{p} E$ is equipped with the usual euclidean metric given for two simple $p$-vectors $\xi=\xi_{1}\wedge..\wedge\xi_{p}$ and $\eta=\eta_{1}\wedge..\wedge\eta_{p}$ by the determinant of the Gram matrix $\langle \xi, \eta \rangle= \text{det}(\langle \xi_{i}, \eta_{j}\rangle)_{i,j}$. In particular, $|\xi|$ gives the volume of the corresponding parallelotope.
 \item $\Omega_{0}^{p}(E):=C_{0}^{0}(E,\bigwedge^{p} E^{\ast})$~: the set of continuous $p$-dimensional differential forms on $E$ vanishing at infinity. $\Omega_{0}^{p}(E)$ equipped with the infinite norm is thus a Banach space.
\end{itemize}
This leads to the following definition of $p$-currents on $E$~:
\begin{de}
 $\Omega_{0}^{p}(E)'$, the space of all continuous linear forms on the space of differential forms, is by definition the space of $p$-currents on $E$.
\end{de} 
\noindent Note that in the particular case $p=0$, the previous definition is exactly the one of usual distributions on $E$ (dual of the space of real-valued functions). Just as for distributions, simplest examples of currents are given by \textbf{Dirac currents} $\delta_{x}^{\xi}$ with $x \in E, \ \xi \in \bigwedge^{p} E$ such that for any differential for $\omega \in \Omega_{0}^{p}(E)$, we have $\delta_{x}^{\xi}(\omega)=\omega_{x}(\xi)$. \\
Now, the relationship between shapes and currents lies fundamentally in the fact that \textit{every d-dimensional oriented sub-manifold $X$ of $E$ of finite volume can be represented by an element of $\Omega_{0}^{d}(E)'$}. It is indeed a classical result from integration theory (\cite{Federer},\cite{Lang}) that any $d$-dimensional differential form can be integrated along $X$ and thus~:
\begin{equation}
\label{eq:def_rect_current}
 C_{X}: \ \omega \mapsto \int_{X} \omega 
\end{equation}
defines an element of $\Omega_{0}^{p}(E)'$. The space of $d$-currents is much larger though and contains many other interesting objects, among them \textit{rectifiable subsets} of $E$, which are the generalized submanifolds in the point of view of measure theory, can be treated as currents in the exact same way as submanifolds, by integrating differential forms~: such currents are called \textit{rectifiable currents} in the literature (cf \cite{Federer} and \cite{Morgan}). \\
Another major interest of the previous is the consistency of the currents representation between the continuous and the discrete setting. Any discrete shape given as a set of points with a mesh can be transcribed into a current by associating to each simplex of the mesh a Dirac current located at the center with a simple $d$-vector encoding the local volume element. It can be shown that this discrete representation converges to the continuous one given by equation \ref{eq:def_rect_current} as the mesh becomes finer~: we refer to \cite{Durrleman} for more details. In numerical terms, this means that a shape can be encoded as a finite collection of point positions and $d$-vectors. \\
The last important element to address is the construction of a metric on the space of currents that can be both computed explicitly (at least for discrete surfaces) and that transcribes as well as possible the intuitive idea of closeness between two shapes. A particularly nice framework was proposed in \cite{Glaunes} by joining currents with the theory of \textbf{reproducing kernel Hilbert space} (RKHS). This approach consists in defining a vector kernel on E ($K:E \times E \rightarrow \mathcal{L}(\bigwedge^{p}E)$) and its associated RKHS $W$. It is then shown that, under some regularity assumptions on the kernel, the space of $p$-currents is continuously embedded in the dual $W'$ which is also a Hilbert space. As a result, currents and therefore shapes can be compared using a Hilbert norm which has the additional property that the inner product between two Dirac currents writes~:
\begin{equation*}
 \langle \delta_{x_{1}}^{\xi_{1}}, \delta_{x_{2}}^{\xi_{2}}\rangle_{W'}=\langle \xi_{1}, K(x_{1},x_{2})\xi_{2} \rangle
\end{equation*}
allowing explicit computations when considering discrete shapes represented as finite sums of diracs. Such norms can be also proved to have several interesting theoretical properties. For instance, kernel norms of a rectifiable current is dominated by its Hausdorff measure (or its $d$-dimensional volume)~:
\begin{prop}
\label{norm_current_Hausdorff}
 If $C_{X}$ is a rectifiable current associated to a $d-$dimensional rectifiable set $X$, then~:
\begin{equation*}
 \|C_{X}\|_{W'} \leq \Cte \mathcal{H}^{d}(X)
\end{equation*}
where $\Cte$ is a constant depending only the choice of the kernel.  
\end{prop}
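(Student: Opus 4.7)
The plan is to work directly from the duality definition of the norm on $W'$, then exploit the pointwise bound available in any RKHS that is continuously embedded in a space of continuous differential forms.

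First I would write, for any $\omega \in W$,
\begin{equation*}
C_X(\omega) \;=\; \int_X \omega \;=\; \int_X \omega_x(\tau(x))\, d\mathcal{H}^d(x),
\end{equation*}
where $\tau(x)$ is the unit simple $d$-vector orienting the approximate tangent space $T_x X$, which exists for $\mathcal{H}^d$-almost every $x \in X$ by rectifiability. Since $|\tau(x)| = 1$, the integrand is bounded by the operator norm $|\omega_x|_{\mathrm{op}}$ of $\omega_x \in \bigwedge^d E^\ast$.

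Next I would invoke the reproducing kernel structure of $W$. The hypothesis that $W$ is continuously embedded in $\Omega_0^d(E)$ gives a pointwise evaluation inequality of the form
\begin{equation*}
|\omega_x(\xi)| \;\leq\; \|\omega\|_W \sqrt{\langle \xi, K(x,x)\xi\rangle}
\end{equation*}
for any simple $d$-vector $\xi$, which is the standard reproducing property applied to the Dirac current $\delta_x^{\xi}$. The regularity assumption on the kernel (namely $\sup_{x \in E} \|K(x,x)\|_{\mathrm{op}} < \infty$, which is precisely what makes the embedding $W \hookrightarrow \Omega_0^d(E)$ continuous) yields a uniform constant $\Cte$ such that $|\omega_x|_{\mathrm{op}} \leq \Cte \,\|\omega\|_W$ for all $x$.

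Combining these two bounds gives $|C_X(\omega)| \leq \Cte\,\|\omega\|_W\,\mathcal{H}^d(X)$, and taking the supremum over $\|\omega\|_W \leq 1$ produces the claimed inequality. The only delicate point is the measurability and well-posedness of $x \mapsto \omega_x(\tau(x))$ on the rectifiable set $X$, which is a standard consequence of the existence of approximate tangent planes $\mathcal{H}^d$-a.e. and of the continuity of $\omega$; once this is granted, the proof is a one-line application of the reproducing kernel bound, so the main conceptual work lies in making the uniform kernel estimate $\|K(x,x)\|_{\mathrm{op}} \leq \Cte^2$ explicit in the hypotheses rather than in the estimate itself.
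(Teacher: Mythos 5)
Your proof is correct and follows essentially the same route as the paper: bound $|C_X(\omega)|$ by $|\omega|_{\infty}\,\mathcal{H}^{d}(X)$ and then control $|\omega|_{\infty}$ by $\Cte\,\|\omega\|_{W}$ via the continuous embedding of $W$ into $\Omega_{0}^{d}(E)$. The only difference is that you unpack that embedding explicitly through the reproducing-kernel evaluation bound $|\omega_{x}(\xi)| \leq \|\omega\|_{W}\sqrt{\langle \xi, K(x,x)\xi\rangle}$ and the uniform bound on $K(x,x)$, which the paper leaves implicit in the embedding constant.
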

\begin{proof}
For all $\omega \in W$, we have~:
\begin{eqnarray*}
 |C_{X}(\omega)| &\leq& \int_{X} \|\omega\|_{\infty} \\
 &\leq& |\omega|_{\infty} \mathcal{H}^{d}(X) \\
 &\leq& \Cte \|\omega\|_{W} \mathcal{H}^{d}(X)
\end{eqnarray*}
where the last inequality results from the continuous embedding of $\Omega_{0}^{d}(E)$ into $W$. Thus, $\|C_{X}\|_{W'} \leq \Cte \mathcal{H}^{d}(X)$.
\end{proof}
\noindent However, due to orientation, the converse domination does not hold, as we shall see in the following. \\
Such kernel metrics on currents have been successfully used to derive data attachment distances, allowing an extension of Large Deformation Diffeomorphic Metric Mapping (LDDMM) algorithms for registration of curves or surfaces (\cite{Glaunes},\cite{Glaunes2},\cite{Durrlemann4}) as well as statistical estimation of templates. Nevertheless, currents are intrinsically modelling oriented objects~: we shall show in the following subsection the limitations that this induces in terms of classical applications in shape analysis.

\subsection{Currents and the issue of shape orientation}
As mentioned previously, the current associated to a given $d-$dimensional rectifiable subset of $E$ depends on the orientation of $X$. If for instance $X$ is connected and $\check{X}$ denotes the same set but with opposite orientation then, in the space of currents, $C_{\check{X}}=-C_{X}$. It results that, in order to compare two shapes in the currents' setting, it is absolutely compulsory to have orientations of both shapes which must be in addition consistent with each other. A second point is that currents with opposite orientation within a small space domain may \textbf{cancel} each other with respect to the kernel metric. Indeed, for the case of a Gaussian kernel with scale $\sigma$ ~:  
\begin{equation}
\label{currents_cancelling}
 \|\delta_{x}^{\xi}-\delta_{y}^{-\xi}\|_{W'} = 2|\xi|^2 \left( 1-e^{-|x-y|^2/\sigma^2} \right)
\end{equation}
which vanishes whenever $|x-y|$ is small compared to $\sigma$. This trivial fact has important consequences in practice. \\
Difficulties can first occur with non-orientable shapes on which integration of differential forms like in equation (\ref{eq:def_rect_current}) is not well-defined and thus currents cannot handle unorientable shapes in a satisfying way. This first drawback remains anecdotal though, since most shapes in computational anatomy are orientable. However, even if orientable, orientating shapes \textit{consistently} can be either a difficult or even ill-posed problem in certain datasets. This is notably the case for fiber bundles in the 2D or 3D space consisting of many different and possibly disconnected pieces of curves. Indeed, if a given shape has $N$ connected components, there are $2^{N}$ different possible orientations. In the example of figure \ref{matching_misoriented_fibres}, we have shown what could occur in terms of matching if bundles of small curves are oriented randomly. Using currents with such objects requires a way to propagate orientation from one part of the shape to all the rest. Technically, this leads to additional pre-processing which can be particularly tricky for highly disconnected bundle of curves with many different directions like the ones corresponding to white matter fibers estimated from DTI in brain imaging, we show the example of one subject in figure \ref{brain_fibers}. This example also emphasizes another difficulty appearing with sets of curves crossing each other for which the very notion of 'consistent' orientation may become meaningless. In such cases, the orientation is clearly irrelevant and one would like to treat objects as sets of unoriented shapes, which is impossible with currents.

\begin{figure}
\leftskip -1cm
\begin{tabular}{cc}
\includegraphics[width=7cm,height=7cm]{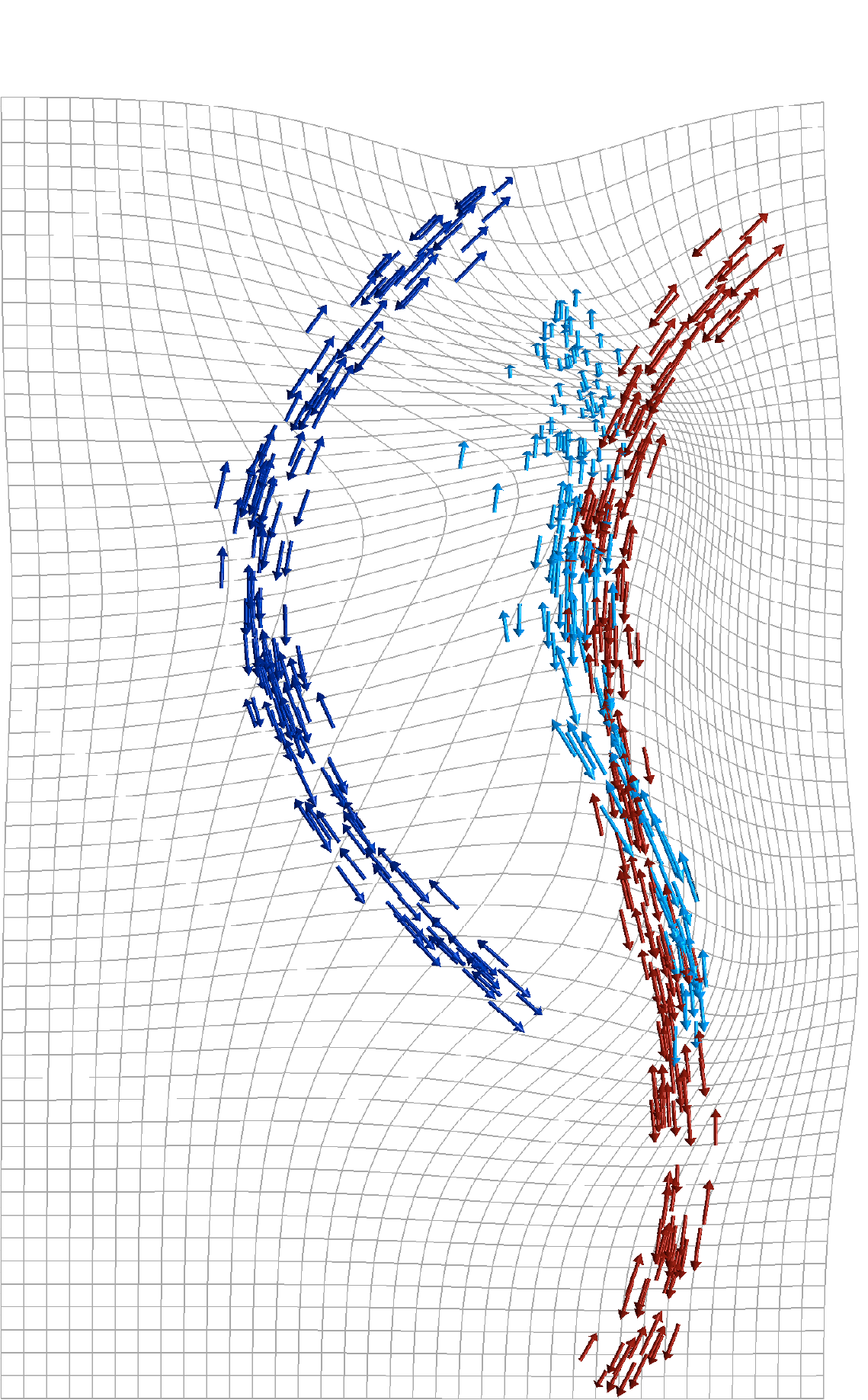} & \includegraphics[width=7cm,height=7cm]{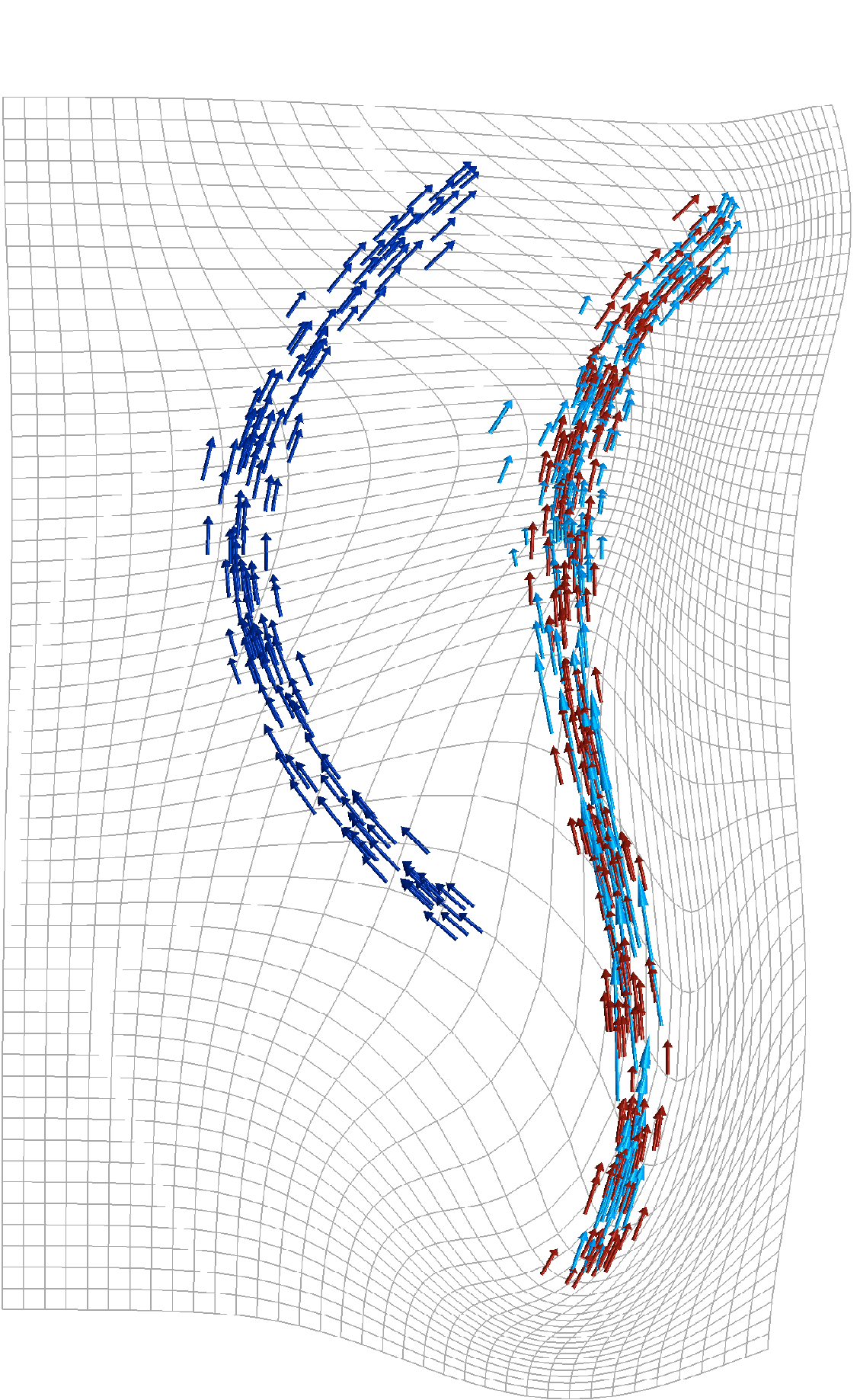}
\end{tabular}
\caption{Example of a matching between two fiber bundles consisting of many different pieces of curves, each of them being oriented randomly (left figure) versus a consistent orientation of all pieces (right figure). The source shape is in dark blue, the target in red and the deformed source is in light blue. We see that a random orientation does not provide a satisfying matching result}
\label{matching_misoriented_fibres}
\end{figure}

\begin{figure}
\includegraphics[width=6cm]{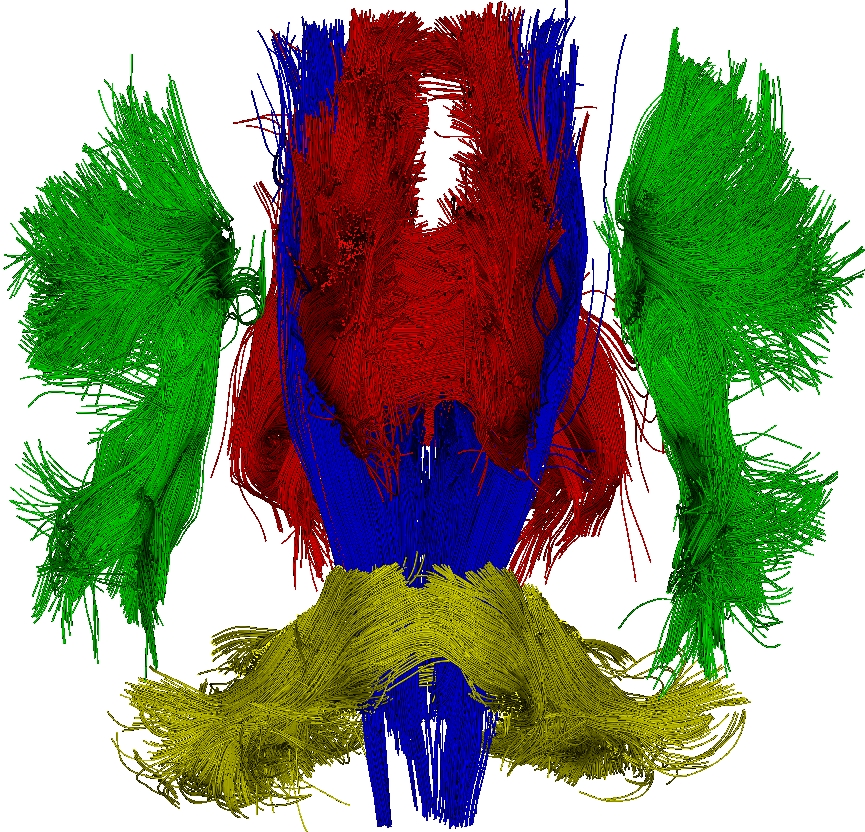}
\caption{An example of white matter fiber bundle estimated from Diffusion Tensor Imaging (DTI) illustrating the potential difficulty of consistent orientation of all different fibers.}\label{brain_fibers}
\end{figure} 

Finally, orientation may represent an obstacle even in the simplest case of usual oriented and connected submanifolds because  structures like sharp spines or tails naturally lead, when represented in a kernel Hilbert space of currents, to annihilation of some parts of the shapes. This comes again from equation (\ref{currents_cancelling}), or, to state it in a more general way, from the fact that, while the $W'$-norm of a set is always dominated by its Hausdorff measure as we saw in proposition \ref{norm_current_Hausdorff}, there exists sets of given Hausdorff measure but with arbitrarily small $W'$-norm. The plane curve of figure \ref{CEX_currents} is an example. Indeed, the $1$-Hausdorff measure being the length of the curve~: $\mathcal{H}^{1}(\gamma)=2(1+\epsilon)\geq 2$. On the other hand, if $\gamma$ is parametrized on an interval $I$, then we have~: 
\begin{equation*}
 \|C_{\gamma}\|_{W'}^{2} = \int_{I \times I} \gamma'(s)^{T}K(\gamma(s),\gamma(t))\gamma(t) ds dt
\end{equation*} 
Let's assume, to simplify computations, that kernel $K$ is a translation invariant kernel of the form $K(x,y)=k(x-y)\,\text{Id}_{\mathbb{R}^2}$ with Lipschitz regularity. Then, by expanding the integrals as the sum of the four pieces of curves $\gamma_{v_{1}}$, $\gamma_{v_{2}}$, $\gamma_{h_{1}}$, $\gamma_{h_{2}}$, we can eventually simplify the result thanks to the translation-invariance, which gives~:
\begin{equation*}
 \|C_{\gamma}\|_{W'}^{2} = 2(\|C_{\gamma_{v_{1}}}\|_{W'}^2-\langle C_{\gamma_{v_{1}}}, C_{\gamma_{v_{2}}}\rangle_{W'})+2(\|C_{\gamma_{h_{1}}}\|_{W'}^2-\langle C_{\gamma_{h_{1}}}, C_{\gamma_{h_{2}}}\rangle_{W'}) 
\end{equation*}
Since $\|C_{\gamma_{h_{1}}}\|_{W'}^2 = \|C_{\gamma_{h_{2}}}\|_{W'}^2 = \iint_{[0,1]\times [0,1]} k(\gamma(s)-\gamma(t)) \epsilon^2 ds dt$, it is straightforward that $\|C_{\gamma_{h_{1}}}\|_{W'}^2-\langle C_{\gamma_{h_{1}}}, C_{\gamma_{h_{2}}}\rangle_{W'}=O(\epsilon^2)$. Moreover, 
\begin{equation*}
 \|C_{\gamma_{v_{1}}}\|_{W'}^2-\langle C_{\gamma_{v_{1}}}, C_{\gamma_{v_{2}}}\rangle_{W'} = \iint_{[0,1]\times [0,1]} \left [ k(\gamma(s)-\gamma(t))- k(\gamma(s)-\gamma(t)-(\epsilon,0)) \right ] ds dt
\end{equation*} 
Since the kernel $k$ is assumed to be Lipschitz, 
$$\left [ k(\gamma(s)-\gamma(t))- k(\gamma(s)-\gamma(t)-(\epsilon,0)) \right ] \leq \Cte\epsilon$$ 
Therefore, we have eventually proved that $\|C_{\gamma}\|_{W'} = O(\sqrt{\epsilon})$ whereas for all $\epsilon$, $\mathcal{H}^{1}(\gamma)\geq 2$. In practical terms, this means that some meaningful structures of shapes, modeled as elements of a RKHS of currents, can vanish in this representation. We will detail, in a coming section, what important limitations this can induce in matching certain examples of curves or surfaces having such kind of structures.
\begin{figure}
\includegraphics[width=10cm]{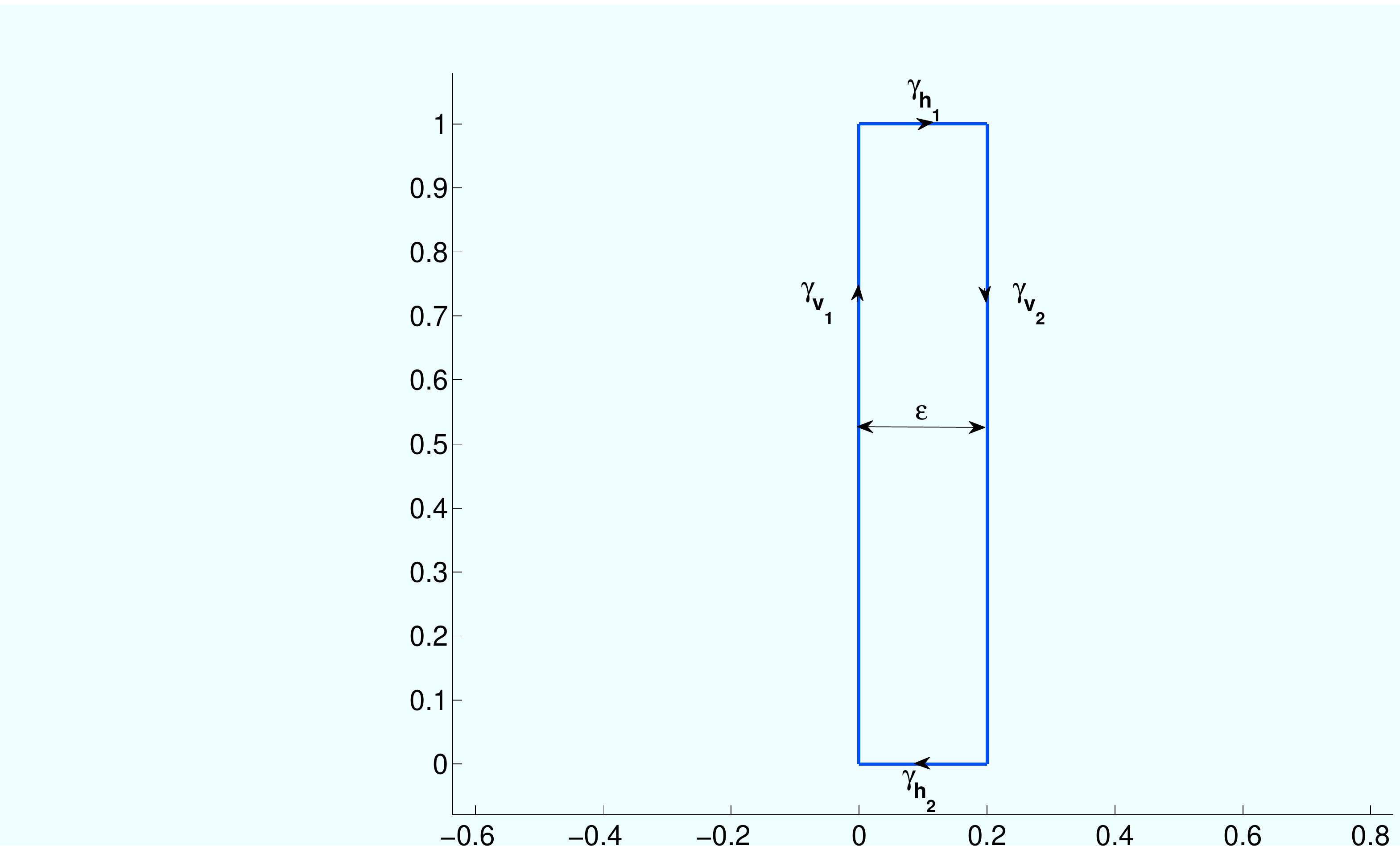}
\caption{Counter-example to the reciprocal inequality of proposition \ref{norm_current_Hausdorff}. For all $\epsilon$, $\mathcal{H}^1(\gamma)= 2(1+\epsilon) \geq 2$ whereas $\|C_{\gamma}\|_{W'}\underset{\epsilon \to 0}{\longrightarrow} 0$}\label{CEX_currents}
\end{figure}    

\section{Varifolds}
\label{sec:var}
In this section, we present varifolds first as a theoretical object, following the original work of Almgren in \cite{Almgren} and later of Allard in \cite{Allard}. We then explain in detail, in a spirit very similar to currents, the fundamental relationship between varifolds and unoriented rectifiable sets and show how this representation can be efficiently transcribed in a practical way. The construction of kernel metrics on varifolds is postponed to the next section.

\subsection{Grassmannian, varifolds~: definitions and basic properties}
\label{sec:Grass}
Varifolds have been first introduced in the context of geometric measure theory as a convenient way to address Plateau's problem of finding least area surfaces with a prescribed boundary. These developments clearly do not enter in the scope of this paper. Our purpose here is rather to connect conveniently varifolds to the framework and language of computational anatomy. Therefore, in the following, we will focus mainly on definitions of such objects and explain in what respect these definitions are computationally relevant.The basic idea behind varifolds is to represent any rectifiable set (orientable or not) as a distribution of unoriented tangent spaces spread in the embedding space $E$. As we shall see, varifolds encompass not only manifolds and rectifiable sets but more generally sets of directions in the space. Let's take again the conventions of section \ref{sec:cur}, $E$ being a vector space of dimension $n$ and $d$ an integer with $0\leq d \leq n$. What we first need is a way to represent tangent spaces of dimension $d$ in the space $E$. For this, we introduce briefly, in the following, Grassmann manifolds as it will prove useful later on. 
\begin{de}
 The Grassmann manifold of dimension $d$ in $E$, denoted $G_{d}(E)$, is the set of all $d$-dimensional subspaces of $E$. It can be identified to the quotient space of all families of $d$ independent vectors of $E$ by the equivalence relation obtained by identifying families that generate the same subspace.
\end{de}    
As a quotient space, one can show that $G_{d}(E)$ inherits a structure of compact Riemannian manifold of dimension $d(n-d)$ (cf \cite{Boothby} and \cite{Wong}). Another way to think of an element in the Grassmann manifold is to identify a subspace $V$ to the orthogonal projection on $V$, which embeds $G_{d}(E)$ into the space of linear homomorphisms $\mathcal{L}(E)$. For the following, we shall also need a practical way to think of the tangent spaces of $G_{d}(E)$ and compute variations within this manifold. If $V$ is an element of $G_{d}(E)$, we can consider $\mathcal{U}_{V}=\{W \in G_{d}(E) \ | \ W \cap V^{\bot}=\{0\} \}$, which is an open subset of the Grassmann manifold. Every element in $W \in \mathcal{U}_{V}$ can be written as $W=\{v + l(v), \ v \in V\}$ for a certain linear function $l \in \mathcal{L}(V,V^{\bot})$ and $l$ is uniquely determined by $W$, as proved in \cite{Piccione}. Thus, we get a bijective map $\psi_{V} \ : \ \mathcal{U}_{V} \rightarrow \mathcal{L}(V,V^{\bot})$. Referring again to \cite{Piccione}, $(\mathcal{U}_{V},\psi_{V})$ defines an atlas on the manifold $G_{d}(E)$ and, as a consequence, there is a natural isomorphism between the tangent space $T_{V}G_{d}(E)$ and $\mathcal{L}(V,V^{\bot})$. Now, if for $t\in ]-\epsilon,\epsilon[$, $V(t)$ is a differentiable curve on $G_{d}(E)$ with $V(0)=V_{0}$, how do we represent the derivative $V'(0)$ ? Considering any curve $v(t)$ with $v(t) \in V(t)$ for all $t$, it is easy to see that $V'(0).v(0)=p_{V_{0}^{\bot}}(v'(0))$ and this gives the definition of $V'(0)$ as an element of $\mathcal{L}(V_{0},V_{0}^{\bot})$.

Finally, let's mention a last way to embed the Grassmannian which is closer to the approach of section \ref{sec:cur}. This is formulated by the Plücker embedding property below ~:
\begin{prop}
\label{Plucker}
The following application 
\begin{eqnarray*}
 i_{P} : \ G_{d}(E) &\longrightarrow& P\left( \bigwedge^{d}E \right ) \\
 vect(v_{1},...,v_{d}) &\longmapsto& \left[ v_{1}\wedge...\wedge v_{d} \right] 
\end{eqnarray*}
where $P\left( \bigwedge^{d}E \right )$ is the real projective space of $\bigwedge^{d}E$, is an embedding. It is even a homeomorphism in the cases $d=1$ or $d=n-1$.  
\end{prop}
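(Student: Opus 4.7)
The plan is to verify in turn well-definedness, injectivity, the embedding property, and then surjectivity in the two special cases. Well-definedness is the observation that if two families $(v_{1},\ldots,v_{d})$ and $(v'_{1},\ldots,v'_{d})$ span the same $d$-plane, there exists an invertible $A$ with $v'_{i}=\sum_{j}A_{ij}v_{j}$, so by multilinearity $v'_{1}\wedge\cdots\wedge v'_{d}=\det(A)\,v_{1}\wedge\cdots\wedge v_{d}$, yielding the same class in $P(\bigwedge^{d}E)$. Injectivity rests on the standard criterion $v\in\mathrm{span}(v_{1},\ldots,v_{d})\iff v\wedge v_{1}\wedge\cdots\wedge v_{d}=0$: if two simple $d$-vectors are proportional, wedging each generator of one with the other gives zero, forcing each $v_{i}$ into $\mathrm{span}(w_{j})$, hence equality of subspaces.

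For the embedding property, I would work in the chart $(\mathcal{U}_{V},\psi_{V})$ introduced just above. Continuity of $i_{P}$ is clear since, writing a nearby $W\in\mathcal{U}_{V}$ as $\{v+l(v),\ v\in V\}$ for $l\in\mathcal{L}(V,V^{\bot})$, the image $(e_{1}+l(e_{1}))\wedge\cdots\wedge(e_{d}+l(e_{d}))$ is polynomial in $l$ for any fixed basis $(e_{i})$ of $V$. Compactness of $G_{d}(E)$ (quotient of a Stiefel manifold) together with the Hausdorff property of $P(\bigwedge^{d}E)$ then automatically promotes the continuous injection $i_{P}$ to a topological embedding. To upgrade to a smooth embedding one checks that the differential at $V$ is injective: the derivative in the direction $l\in\mathcal{L}(V,V^{\bot})\simeq T_{V}G_{d}(E)$ equals $\sum_{i}e_{1}\wedge\cdots\wedge l(e_{i})\wedge\cdots\wedge e_{d}$, whose vanishing modulo the scaling line $\mathbb{R}\cdot e_{1}\wedge\cdots\wedge e_{d}$ forces each $l(e_{i})=0$ by identifying components in $V^{\bot}\otimes\bigwedge^{d-1}V$.

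The two special cases are handled separately. For $d=1$ we have $G_{1}(E)=P(E)=P(\bigwedge^{1}E)$ and $i_{P}$ is literally the identity. For $d=n-1$ the claim reduces to showing that every nonzero $(n-1)$-vector $\omega$ is decomposable. I would consider the linear map $\varphi\colon v\mapsto v\wedge\omega$ from $E$ into the one-dimensional space $\bigwedge^{n}E$; it is not identically zero (otherwise $\omega=0$), so its kernel $K$ is a hyperplane. Choosing a basis $(v_{1},\ldots,v_{n-1})$ of $K$ and any $w\notin K$, one expands $\omega$ in the induced basis of $\bigwedge^{n-1}E$ and exploits $v_{i}\wedge\omega=0$ for each $i$ to conclude that only the coefficient of $v_{1}\wedge\cdots\wedge v_{n-1}$ can survive. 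This gives surjectivity of $i_{P}$, and combined with bijectivity, compactness of the source and Hausdorffness of the target, yields a homeomorphism onto $P(\bigwedge^{n-1}E)$.

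The main obstacle is the decomposability of arbitrary $(n-1)$-vectors in the second special case, which is a genuine piece of multilinear algebra rather than a formal verification. By contrast, the injectivity of the differential of $i_{P}$ is essentially a bookkeeping exercise once the chart $\psi_{V}$ is available, and the topological embedding step is free from compactness.
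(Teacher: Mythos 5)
The paper states this proposition without proof, treating it as a classical fact and pointing the reader to the surrounding references on Grassmannians, so there is no in-text argument to compare yours against; what you have written is a self-contained proof, and it is correct. Well-definedness via $v'_{1}\wedge\cdots\wedge v'_{d}=\det(A)\,v_{1}\wedge\cdots\wedge v_{d}$ and injectivity via the criterion $v\wedge v_{1}\wedge\cdots\wedge v_{d}=0\iff v\in\mathrm{span}(v_{1},\ldots,v_{d})$ are the standard route. Your use of compactness of $G_{d}(E)$ plus Hausdorffness of $P(\bigwedge^{d}E)$ to upgrade a continuous injection to a topological embedding is exactly the cheap way to get the set-topological statement, and the immersion check is sound: the derivative $l\mapsto\sum_{i}(-1)^{i-1}l(e_{i})\wedge e_{1}\wedge\cdots\widehat{e_{i}}\cdots\wedge e_{d}$ lands in the graded piece $V^{\bot}\wedge\bigwedge^{d-1}V$, which meets the scaling line $\mathbb{R}\cdot e_{1}\wedge\cdots\wedge e_{d}$ (sitting in $\bigwedge^{d}V$) only at the origin, so vanishing modulo that line really does force each $l(e_{i})=0$ by independence of the $e_{1}\wedge\cdots\widehat{e_{i}}\cdots\wedge e_{d}$. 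The $d=n-1$ case is the only substantive algebraic input, and your argument --- that $\varphi:v\mapsto v\wedge\omega$ is a nonzero functional whose kernel $K$ is a hyperplane, and that expanding $\omega$ in a basis adapted to $K$ kills every coefficient except that of $v_{1}\wedge\cdots\wedge v_{n-1}$ --- correctly establishes that every nonzero $(n-1)$-vector is decomposable, hence surjectivity and the homeomorphism. One cosmetic remark: your closing sentence that the topological embedding step is ``free from compactness'' reads as the opposite of what you mean (it is free \emph{because of} compactness); you should reword that before relying on it.
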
  
\noindent As a result, we can think of an unoriented $d$-dimensional tangent space as an element of $P\left( \bigwedge^{d}E \right )$, which is a space of dimension $\binom{n}{d}-1$, much bigger in general than $G_{d}(E)$.\\
Following closely \cite{Allard}, varifolds are now defined precisely as distributions of unoriented tangent spaces in $E$~:
\begin{de}
 A $d$-dimensional varifold on $E$ is a Borel finite measure (or distribution) on the product space $E \times G_{d}(E)$, i.e an element of $C_{0}(E \times G_{d}(E))'$.
\end{de} 
Note that this differs from the original definition of varifolds given by Almgren but his definition is equivalent to the previous one as explained in the preface of \cite{Almgren}. Diracs in the space of varifolds are of the form $\delta_{(x,V)}$ with $x \in E$ and $V\in G_{d}(E)$ and act on functions of $C_{0}(E \times G_{d}(E))$ by the relation~:
\begin{equation*}
 \forall \omega \in C_{0}(E \times G_{d}(E)), \ \delta_{(x,V)}(\omega)= \omega(x,V)
\end{equation*}
In this context, a dirac consists in the data of a position $x$ in the space attached to a $d$-dimensional (non-oriented) subspace $V$ that will play the role of tangent space in the case of rectifiable sets.  

\subsection{Non-oriented shapes as varifolds}
In section \ref{sec:cur}, we have seen that oriented rectifiable sets of dimension $d$ are canonically represented as $d$-currents. In the case of non-oriented shapes, the right notion is precisely the one of varifolds. Indeed, let $X$ be a non-oriented rectifiable set of $E$ of dimension $d$. To $X$, one can associate a varifold $\mu_{X}$, which, in the measure point of view, is given by~: $\mu_{X}(A) =  \mathcal{H}^{d}\left( \{ x \in X \ | \ (x,T_{x}X) \in A \} \right)$ for all Borel subset $A \subset E \times G_{d}(E)$. By Riesz representation theorem, $\mu_{X}$ has its unique equivalent in terms of continuous linear form on $C_{0}(E \times G_{d}(E))$. For all $\omega \in C_{0}(E \times G_{d}(E))$~:
\begin{equation}
\label{eq:shape_varifold1}
 \mu_{X}(\omega)= \int_{E \times G_{d}(E)} \omega(x,V) d\mu_{X}(x,V)  
\end{equation}
Such special varifolds are called rectifiable varifolds. For more explicit expression, assume that $X$ is a smooth compact submanifold of $E$ with a parametrization $\gamma : U\rightarrow E$ with $U$ an open subset of $\mathbb{R}^{d}$, then for all $\omega \in C_{0}(E \times G_{d}(E))$ 
\begin{equation}
\label{eq:shape_varifold2}
 \mu_{X}(\omega)= \int_{U} \omega(\gamma(u),\eta(u)) |\gamma'(u)| du  
\end{equation}
with the notations $\gamma'(u)=\bigwedge_{i=1}^{d} \partial \gamma / \partial u_{i} \in \bigwedge^{d}(E)$, \ $\eta(u)=[\gamma'(u)] \in G_{d}(E)$. Note that the integral of equation (\ref{eq:shape_varifold2}) is, as expected, independent of \textbf{any reparametrization} of $X$, positively or negatively oriented.  
Now, in discrete geometry, shapes are given as polygonal sets of points which are also encompassed in the category of rectifiable subsets and therefore representable as varifolds. In the same spirit as section \ref{sec:cur}, any mesh set can be coded as a finite sum of dirac varifolds of the form $\sum_{i=1}^{n} c_{i}.\delta_{(p_{i},V_{i})}$. Again, $n$ is the number of cells of the mesh, $p_{i} \in E$ are the centers of each cell, $V_{i} \in G_{d}(E)$ the unoriented tangent space to the shape at point $p_{i}$ and $c_{i}\in \mathbb{R}_{+}^{*}$ the volume of the $i$-th cell. More specifically, let's examine the two most usual examples in practice. First, consider a curve $X$ given as a set of points $\{x_{k}\}_{k=1..N}$ and a set of edges $(f_{i}^{1},f_{i}^{2}) \in \{1,..,N\}^2$ for $i=1,..,n$. Then, with the previous conventions~:
\begin{equation}
\label{eq:discrete_curve_varifold}
\left\{
  \begin{array}[h]{l}
 p_{i}=\dfrac{x_{f_{i}^{1}}+x_{f_{i}^{2}}}{2} \\
\\
 V_{i}=(x_{f_{i}^{1}} x_{f_{i}^{2}}) \\
\\
 c_{i}=\| \overrightarrow{x_{f_{i}^{1}} x_{f_{i}^{2}}} \|\,.
  \end{array}\right.
\end{equation}
Note that all the previous equations remain unchanged if for any $i$ we take $(f_{i}^{2},f_{i}^{1})$ instead of $(f_{i}^{1},f_{i}^{2})$ as a face, which is consistent with the idea of non-orientation. In a similar way, if $X$ is now for instance a triangulated surface in $E=\mathbb{R}^3$ given by a set of points $\{x_{k}\}_{k=1..N}$ with a set of $n$ triangles $(f_{i}^{1},f_{i}^{2},f_{i}^{3}) \in \{1,..,N\}^3$ for $i=1,..,n$ then we get~:
\begin{equation}
\label{eq:discrete_surface_varifold}
\left\{
  \begin{array}[h]{l}
 p_{i}=\dfrac{x_{f_{i}^{1}}+x_{f_{i}^{2}}+x_{f_{i}^{3}}}{3} \\
\\
 V_{i}= \left [ \overrightarrow{x_{f_{i}^{1}} x_{f_{i}^{2}}},\overrightarrow{x_{f_{i}^{1}} x_{f_{i}^{3}}} \right ] \\
\\
 c_{i}=\dfrac{1}{2}.\| \overrightarrow{x_{f_{i}^{1}} x_{f_{i}^{2}}} \wedge \overrightarrow{x_{f_{i}^{1}} x_{f_{i}^{3}}} \|\,.
  \end{array}\right.
\end{equation}
Here $\left [ \overrightarrow{x_{f_{i}^{1}} x_{f_{i}^{2}}},\overrightarrow{x_{f_{i}^{1}} x_{f_{i}^{3}}} \right ]$ denotes the $2-$dimensional space generated by $\overrightarrow{x_{f_{i}^{1}} x_{f_{i}^{2}}}$ and $\overrightarrow{x_{f_{i}^{1}} x_{f_{i}^{3}}}$. Again, all these equations are not dependent on the orientation given to each triangle. \\
A last important point for the following is to be able to express the transport of varifolds by diffeomorphism in a way that is compatible with the transport of shapes. Let's fix a varifold $\mu \in C_{0}(E\times G_{d}(E))'$ and $\phi \in \Diff(E)$, we define the transport $\phi_{\ast} \mu$ of $\mu$ by $\phi$ by pull-back and push-forward operations~:
\begin{equation}
\label{eq:push_var}
 \forall \omega \in C_{0}(E\times G_{d}(E)), \ \left ( \phi_{\ast} \mu \right )(\omega)= \mu \left ( \phi^{\ast}\omega \right )
\end{equation}
where $\phi^{\ast}\omega$ is the pull-back of $\omega$ by $\phi$. For any $x \in E$ and $V \in G_{d}(E)$ a $d$-dimensional subspace with orthonormal basis $(u_{1},...,u_{d})$, $\phi^{\ast}\omega$ is defined by the relation~:
\begin{equation}
\label{eq:pull_var}
 \left ( \phi^{\ast}\omega \right )(x,V)=\left | d_{x}\phi(u_{1})\wedge...\wedge d_{x}\phi(u_{d}) \right |\,\omega(\phi(x),d_{x}\phi.V)\,.
\end{equation}
In the last equation, $d_{x}\phi.V$ denotes the element of $G_{d}(E)$ that is the image of $V$ by $d_{x}\phi$, the term $|d_{x}\phi(u_{1})\wedge...\wedge d_{x}\phi(u_{d})|$ is the $d$-dimensional Jacobian of $\phi$ on the subspace $V$ and represents the local change of $d$-dimensional volume of the transformation. With equations (\ref{eq:push_var}) and (\ref{eq:pull_var}), it is straightforward to compute the transport of a dirac $\delta_{(x,V)}$~:
\begin{equation}
\label{eq:transport_dirac}
 \phi_{\ast} \delta_{(x,V)} = \left | d_{x}\phi(u_{1})\wedge...\wedge d_{x}\phi(u_{d}) \right |.\delta_{(\phi(x),d_{x}\phi.V)}
\end{equation}
if $(u_{1},...,u_{d})$ is an orthonormal basis of $V$. Now, with the previous definitions, we can show that the varifold representation of a rectifiable set commutes with the transport by diffeomorphism in the sense given by the following proposition~:
\begin{prop}
\label{prop:commutation}
If $X$ is a $d$-dimensional rectifiable subset of $E$ and $\phi \in \Diff(E)$, then~:
\begin{equation*}
 \phi_{\ast} \mu_{X} = \mu_{\phi(X)}\,.
\end{equation*}
\end{prop}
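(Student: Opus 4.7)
The plan is to test both sides of the identity against an arbitrary $\omega \in C_{0}(E \times G_{d}(E))$ and check that the scalars agree, using equation (\ref{eq:push_var}) to unfold $(\phi_{\ast}\mu_{X})(\omega)=\mu_{X}(\phi^{\ast}\omega)$ and then matching with the integral representation of $\mu_{\phi(X)}$. The whole argument reduces to the change of variables formula, so the bookkeeping of the Jacobian factor of equation (\ref{eq:pull_var}) is the only nontrivial point.

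First I would treat the case where $X$ is a smooth compact submanifold. Using a parametrization $\gamma : U \to E$ as in (\ref{eq:shape_varifold2}), I would write
\begin{equation*}
\mu_{X}(\phi^{\ast}\omega) = \int_{U} \bigl(\phi^{\ast}\omega\bigr)(\gamma(u),[\gamma'(u)])\,|\gamma'(u)|\,du,
\end{equation*}
and then apply (\ref{eq:pull_var}) with $(u_{1},\ldots,u_{d})$ an orthonormal basis of $V(u)=[\gamma'(u)]$. The crucial multilinear identity is
\begin{equation*}
|d_{\gamma(u)}\phi(u_{1})\wedge\cdots\wedge d_{\gamma(u)}\phi(u_{d})|\cdot|\gamma'(u)| = |d_{\gamma(u)}\phi(\partial_{1}\gamma)\wedge\cdots\wedge d_{\gamma(u)}\phi(\partial_{d}\gamma)|,
\end{equation*}
which comes from expanding the $\partial_{i}\gamma$'s in the ONB $(u_{j})$ and using that the determinant of the change-of-basis matrix equals $\pm|\gamma'(u)|$. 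The right-hand side is exactly $|(\phi\circ\gamma)'(u)|$, and $d_{\gamma(u)}\phi\cdot V(u) = [(\phi\circ\gamma)'(u)]$. Thus $\mu_{X}(\phi^{\ast}\omega)$ equals the integral computing $\mu_{\phi(X)}(\omega)$ with $\phi\circ\gamma$ as a parametrization, establishing the equality in this case. For $X$ covered by an atlas, one localizes with a partition of unity.

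The main obstacle is extending this beyond smooth submanifolds to general rectifiable $X$, where no global parametrization is available. My plan is to invoke the standard decomposition of a $d$-rectifiable set (see \cite{Federer}, \cite{Morgan}): up to an $\mathcal{H}^{d}$-negligible set, $X$ is a countable disjoint union of $C^{1}$ pieces $X_{k}$ on which an (approximate) tangent space $T_{x}X$ exists $\mathcal{H}^{d}$-a.e. The measure $\mu_{X}$ decomposes as $\sum_{k}\mu_{X_{k}}$, the diffeomorphism $\phi$ sends each $X_{k}$ to a $C^{1}$ piece with approximate tangent $d_{x}\phi\cdot T_{x}X$, and the pushforward commutes with countable sums by monotone convergence. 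Applying the smooth case piecewise and summing then yields $\phi_{\ast}\mu_{X} = \sum_{k}\mu_{\phi(X_{k})} = \mu_{\phi(X)}$, which concludes the proof.
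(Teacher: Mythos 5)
Your proof is correct, and it reaches the same identity $(\phi_{\ast}\mu_{X})(\omega)=\mu_{X}(\phi^{\ast}\omega)=\mu_{\phi(X)}(\omega)$ that the paper establishes, but by a different route for the change-of-variables step. The paper works directly with the a.e.\ tangent space of the rectifiable set, writes $\mu_{X}(\phi^{\ast}\omega)=\int_{X}\omega(\phi(x),d_{x}\phi(T_{x}X))\,|d_{x}\phi.T_{x}X|\,d\mathcal{H}^{d}(x)$, and then disposes of the Jacobian bookkeeping in one stroke by citing the generalized change of variables for rectifiable sets (corollary 3.2.20 in \cite{Federer}). You instead prove the smooth parametrized case by hand --- your multilinear identity $|d\phi(u_{1})\wedge\cdots\wedge d\phi(u_{d})|\cdot|\gamma'(u)|=|(\phi\circ\gamma)'(u)|$ is exactly right, and it has the side benefit of confirming that the pullback formula of equation (\ref{eq:pull_var}) is independent of the chosen orthonormal basis --- and then rebuild the rectifiable case from the decomposition of $X$ into countably many $C^{1}$ pieces. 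What you buy is a self-contained, checkable computation in the smooth case; what it costs is that the second half essentially re-derives the area formula that the paper simply cites. Two small points to make explicit if you write this up: the $C^{1}$ pieces $X_{k}$ of the decomposition are measurable subsets of submanifolds rather than parametrized open patches, so your chart computation must be restricted to a measurable subset of the parameter domain (harmless, since the integral identity survives restriction); and the $\mathcal{H}^{d}$-negligible exceptional set must be observed to remain $\mathcal{H}^{d}$-negligible after applying $\phi$, which holds because $\phi$ is Lipschitz on compacts. Neither is a real gap, and also note that dominated convergence (using finiteness of $\mu_{X}$ and boundedness of $\omega$) is the cleaner justification for interchanging the countable sum with a signed test function.
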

\begin{proof}
If $X$ is a rectifiable subset of $E$, there exists (cf \cite{Allard}) at almost every point of $X$ a tangent space $T_{x}X$ and we have for all $\omega$ ~:
\begin{equation*}
 \mu_{X}(\omega)= \int_{X} \omega(x,T_{x}X) d \mathcal{H}^{d}(x)
\end{equation*}
and thus 
\begin{eqnarray*}
 \phi_{\ast}\mu_{X}(\omega) &=& \mu_{X}(\phi^{\ast}\omega) \\
 &=& \int_{X} (\phi^{\ast}\omega)(x,T_{x}X) d \mathcal{H}^{d}(x) \\
 &=& \int_{X} \omega(\phi(x),d_{x}\phi(T_{x}X)) |d_{x}\phi.T_{x}X| d \mathcal{H}^{d}(x)
\end{eqnarray*}
where $|d_{x}\phi.T_{x}X|$ is the $d$-dimensional Jacobian of $\phi$ along the tangent space $T_{x}X$, as previously. For almost all $x \in X$, $d_{x}\phi(T_{x}X)$ is a tangent space of $\phi(X)$ and by the generalization of the change of variables for rectifiable subsets (corollary 3.2.20 in \cite{Federer}), we obtain~: 
\begin{equation*}
 \phi_{\ast}\mu_{X}(\omega)= \int_{\phi(X)} \omega(y,T_{y}\phi(X)) d \mathcal{H}^{d}(y)\,.
\end{equation*}
Thus, for any $\omega$, $\phi_{\ast}\mu_{X}(\omega) = \mu_{\phi(X)}(\omega)$ and we have proved proposition \ref{prop:commutation}.  
\end{proof}
In conclusion to this section, we have seen that varifolds offer a satisfying framework to represent non-oriented shapes both from the continuous setting and in the computational cases of meshed curves, surfaces... Up to this point, we have shown how to represent computationally a meshed shape as a finite set of diracs, each of them carrying a local information of position, unoriented tangent space and local volume and derive the equations of varifolds' transport by deformation. In the purpose of adapting large deformation matching to varifolds (section \ref{sec:var_LDDMM}), an appropriate metric still needs to be defined, a topic that is discussed in the next section.

\section{Kernel metrics on varifolds}
\label{sec:var_kernel}
The use of reproducing kernels is a fundamental step when working with currents because it provides regularized metrics for the comparison of shapes which has the additional advantage of having an underlying Hilbert space structure. We have also mentioned in section \ref{sec:cur} that kernels are particularly well-fitted to compute distances between discretized shapes because of the simple expression of dot product between two diracs. All these reasons motivate a similar approach in the treatment of unoriented shapes through the varifold setting that has been presented. It's worth mentioning that other approaches could be possible, by working with more general Riemannian metrics on distribution of tangent spaces, generalizing for instance what is done for the Grassmann manifold in \cite{Absil}. We argue however that RKHS metrics are very convenient from a computational point of view, especially in our applications to shape matching and analysis. In this section, we propose a generic and simple way to build relevant reproducing kernels on the space of varifolds, by basically making tensor products of kernels on $E$ and on $G_{d}(E)$. 

\subsection{Kernels on the Grassmann manifold}
\label{sec:kernel_Grassmann}
We start by the construction of kernels on $G_{d}(E)$. This problem is not totally new since examples of such kernels have been provided in several recent works related to machine learning, notably in \cite{Hamm} and \cite{Wolf}. Such constructions are usually based on the notion of \textit{principle angles} between two subspaces. An alternative way, which has the advantage of offering a very wide class of possible kernels on the Grassmannian, is to simply use the embedding of $G_{d}(E)$ into $\mathcal{L}(E)$ mentioned in section \ref{sec:var}. It identifies any subspace $V\in G_{d}(E)$ with the orthogonal projection on $V$, which we denote by $p_{V}$. As a consequence, one can induce straightforwardly a kernel on $G_{d}(E)$ by restriction of a positive kernel defined on $\mathcal{L}(E)$. Since $\mathcal{L}(E)$ is a finite-dimensional vector space, there are no difficulties in defining such kernels. Typical choices are given by~:
\begin{equation}
\label{eq:kernel_CB1}
 k_{CB}(V,W)=\langle p_{V},p_{W} \rangle^{k}
\end{equation}
with $k \in \mathbb{N}^{\ast}$ and $\langle .,.\rangle$ the usual Frobenius metric in $\mathcal{L}(E)$. More generally, any function $P(\langle p_{V},p_{W} \rangle)$, $P$ being a polynomial with positive coefficients, is a positive kernel on $\mathcal{L}(E)$. Other important possibilities are the kernels induced by Gaussians in $\mathcal{L}(E)$, namely~:
\begin{equation}
\label{eq:kernel_G1}
 k_{G}(V,W)=e^{-\dfrac{|p_{V}-p_{W}|^{2}}{\sigma^{2}}}
\end{equation}
Such kernels allow the comparison of subspaces with respect to a certain scale given by parameter $\sigma$. Many other could be defined with this method. It's also possible to express these kernels using the principle angles. If $V$ and $W$ are two $d$-dimensional subspaces of $E$, the $d$ principle angles $\theta_{1},..,\theta_{d}$ between them are defined recursively by the relations~:
\begin{eqnarray*}
 \cos(\theta_{k}) &=& \max_{v_{k}\in V} \max_{w_{k} \in W} \langle \frac{v_{k}}{|v_{k}|},\frac{w_{k}}{|w_{k}|} \rangle \ with \\
 & & \forall i \in \{1,..,k-1\}, \ v_{k}\bot v_{i}, \ w_{k}\bot w_{i}
\end{eqnarray*}
Thus, there exists orthonormal frames $(v_{1},..,v_{d})$ and $(w_{1},..,w_{d})$ of $V$ and $W$ such that $\cos(\theta_{i})=\langle v_{i}, w_{i} \rangle$. As easily seen,  see e.g. \cite{Wong}, there are at most $\min(d,n-d)$ non-zero principle angles and they completely determine the relative position between the two subspaces. Now it's an easy verification that~: 
\begin{equation*}
 \langle p_{V},p_{W} \rangle = 2\sum_{i=1}^{d} \langle v_{i},w_{i} \rangle^{2} = 2\sum_{i=1}^{d} \cos^{2}(\theta_{i})
\end{equation*}
It results that, up to a scaling factor, the kernel of equation (\ref{eq:kernel_CB1}) has the following expression~:
\begin{equation}
\label{eq:kernel_CB2}
 k_{CB}(V,W)=\left( \sum_{i=1}^{d} \cos^{2}(\theta_{i}) \right )^{k}
\end{equation}
For $k=1$, we see that the kernel we obtain is precisely the classical Cauchy-Binet kernel on the Grassmannian, as in \cite{Hamm}. The Gaussian kernel also has the alternative expression~:
\begin{equation}
\label{eq:kernel_G2}
 k_{G}(V,W)=e^{-\frac{4}{\sigma^{2}}.\sum_{i=1}^{d} (1-\cos^{2}(\theta_{i})) }=\prod_{i=1}^{d} e^{-\frac{4}{\sigma^{2}}\sin^{2}(\theta_{i})}
\end{equation}
This formulation with principle angles offers, in our sense, a nicely interpretable way to understand the behavior of such kernels with respect to the subspaces' relative position. The general construction we have proposed thus provides a wide variety of induced kernels on $G_{d}(E)$ that are effectively computable (either by the expression of the projection matrix or by computing the principle angles).
\begin{remark}
 We also want to mention, without entering into details, another possible way of building kernels on $G_{d}(E)$, relying on the Plücker embedding of section \ref{sec:Grass} instead of the identification with projectors. Indeed, one can induce as previously a kernel on the Grassmann manifold from a kernel on $P(\bigwedge^{d}E)$. Since this projective space can be also identified to the unit sphere of $\bigwedge^{d}(E)$ quotiented by the reorientation group $\{\pm 1\}$, it is not difficult to provide kernels on $P(\bigwedge^{d}E)$ by defining kernels on the vector space $\bigwedge^{d}E$, invariant with respect to the action of $\{\pm 1\}$. For curves or hypersurfaces, the Plücker embedding being a homeomorphism, this approach is relevant and provides kernels quite similar to our previous method. However, for high dimension and codimension, we see that the embedding space becomes very high-dimensional $\binom{n}{d}-1$ compared to the actual dimension of $G_{d}(E)$ and we believe that this can be a limiting factor both from a theoretical and computational point of view.   
\end{remark}

\subsection{Construction of kernels on varifolds}
\label{sec:var_kernel_tensprod} 
The issue of kernels on $G_{d}(E)$ being addressed, we now move to the case of varifolds. Since we are considering functions defined on a product space, a natural way to build appropriate kernels is the tensor product trick. We first remind a well-known general property from reproducing kernels' theory:
\begin{lemme}
\label{lemma:tensor_product}
 Let $A$ and $B$ be two sets and $k_{A}$, $k_{B}$ two positive real kernels respectively on $A$ and $B$. Then, the tensor product $k_{A}\otimes k_{B}$ defined by~:
 \begin{equation*}
  k_{A} \otimes k_{B} \left ( (a,b),(a',b') \right )= k_{A}(a,a')\,k_{B}(b,b')
 \end{equation*}
  is a positive real kernel on $A \times B$.
\end{lemme}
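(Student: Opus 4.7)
My plan is to unfold the definition of positive kernel and reduce the statement to a well-known fact about Hadamard products of positive semi-definite matrices. Recall that a real kernel $k$ on a set $X$ is positive when, for every finite family $x_1,\dots,x_N\in X$ and every real coefficients $\alpha_1,\dots,\alpha_N$, one has $\sum_{i,j}\alpha_i\alpha_j\,k(x_i,x_j)\ge 0$. So I would fix an arbitrary finite family $(a_i,b_i)_{1\le i\le N}$ in $A\times B$ and scalars $\alpha_i$, and form the Gram matrices $K_A=(k_A(a_i,a_j))_{ij}$ and $K_B=(k_B(b_i,b_j))_{ij}$. Both are symmetric positive semi-definite by hypothesis, and the Gram matrix of $k_A\otimes k_B$ at these points is exactly the Hadamard (entrywise) product $K_A\circ K_B$.

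The key step would then be to invoke Schur's product theorem, which states that the Hadamard product of two symmetric PSD matrices is itself PSD. Rather than citing it as a black box, I would prove it on the spot: diagonalise $K_B=\sum_k \lambda_k u_k u_k^{\top}$ with $\lambda_k\ge 0$, so that
\begin{equation*}
\sum_{i,j}\alpha_i\alpha_j\,k_A(a_i,a_j)\,k_B(b_i,b_j)\;=\;\sum_{k}\lambda_k\sum_{i,j}(\alpha_i u_k^{(i)})(\alpha_j u_k^{(j)})\,k_A(a_i,a_j),
\end{equation*}
and each inner sum is nonnegative by positivity of $k_A$. Multiplying by $\lambda_k\ge 0$ and summing yields the desired inequality.

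An equivalent route, which I would flag as a conceptually nicer alternative, is to use feature maps: writing $k_A(a,a')=\langle\varphi_A(a),\varphi_A(a')\rangle_{H_A}$ and $k_B(b,b')=\langle\varphi_B(b),\varphi_B(b')\rangle_{H_B}$ in the associated RKHS, the map $\varphi(a,b):=\varphi_A(a)\otimes\varphi_B(b)$ into $H_A\otimes H_B$ is a feature map for $k_A\otimes k_B$ since $\langle\varphi(a,b),\varphi(a',b')\rangle=k_A(a,a')k_B(b,b')$, and positivity is immediate from the Hilbertian structure. I do not anticipate any real obstacle here: the statement is a standard textbook result and both arguments are essentially one-line computations once the definitions are in place. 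The only care needed is to make sure one works with real PSD matrices (which matches the paper's convention of real kernels), so that Schur's theorem applies without any complex-conjugation subtlety.
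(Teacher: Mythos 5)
Your proposal is correct and follows essentially the same route as the paper: reduce positivity of the Gram matrix of $k_A\otimes k_B$ to the Hadamard product of the two Gram matrices and invoke the Schur product theorem. The only difference is cosmetic — you prove Schur's theorem inline via diagonalisation and mention the feature-map alternative, whereas the paper simply cites the theorem.
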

\begin{proof}
 For any $N \in \mathbb{N}^{\ast}$, $a_{1},..,a_{N} \in A$ and $b_{1},...,b_{N} \in B$ then, since $k_{A}$ and $k_{B}$ are positive kernels, the matrices $\left [ k_{A}(a_{i},a_{j}) \right ]_{i,j}$ and $\left [ k_{B}(b_{i},b_{j}) \right ]_{i,j}$ are positive by definition. The matrix $\left [ k_{A}\otimes k_{B}((a_i,b_i),(a_j,b_j)) \right ]_{i,j}$ being the Hadamard product of the two previous matrices, it follows from Schur product theorem that it is also positive. Thus, $k_{A}\otimes k_{B}$ is a positive kernel.   
\end{proof}
We remind that to the kernel $k_{A} \otimes k_{B}$ is associated a unique Hilbert space of real-valued functions on $A \times B$ (the RKHS of the kernel) on which all linear forms $\delta_{(a,b)} : f \ \mapsto f(a,b)$ are continuous. Now, going back to the case of varifolds, the following holds~:
\begin{prop}
 Assume that we are given a positive real kernel $k_{e}$ on the space $E$ such that $k_{e}$ is continuous, bounded and for all $x \in E$, the function $k_{e}(x,.)$ vanishes at infinity. Assume that a second kernel $k_{t}$ is defined on the manifold $G_{d}(E)$ and is also continuous. Then the RKHS $W$ associated to the kernel $k_{e} \otimes k_{t}$ is continuously embedded into the space $C_{0}(E\times G_{d}(E))$. 
 \label{prop:kernel_var}
\end{prop}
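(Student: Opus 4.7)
The plan is to verify the standard sufficient conditions for an RKHS to embed continuously into $C_0$: namely, that the reproducing kernel is bounded on the diagonal, that the kernel is jointly continuous, and that every kernel section $K(z_0,\cdot)$ already belongs to $C_0$. From there a density argument closes the proof.

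First I would write out the reproducing kernel $K := k_e \otimes k_t$ on $Z := E \times G_d(E)$ and note, using the reproducing property, that for every $f\in W$ and every $z = (x,V) \in Z$
\begin{equation*}
 |f(z)| \;=\; |\langle f, K(z,\cdot)\rangle_W| \;\le\; \|f\|_W \sqrt{K(z,z)} \;=\; \|f\|_W \sqrt{k_e(x,x)\,k_t(V,V)}.
\end{equation*}
Because $k_e$ is bounded by hypothesis and $k_t$ is continuous on the compact manifold $G_d(E)$, the quantity $\sqrt{K(z,z)}$ is bounded by some constant $C$, yielding a continuous embedding $W\hookrightarrow L^\infty(Z)$ with $\|f\|_\infty \le C\|f\|_W$.

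Next I would show that each $f\in W$ is continuous on $Z$. Again using the reproducing property,
\begin{equation*}
 |f(z)-f(z')|^2 \;\le\; \|f\|_W^{2}\,\bigl(K(z,z) - 2K(z,z') + K(z',z')\bigr),
\end{equation*}
and the right-hand side tends to $0$ as $z'\to z$ by joint continuity of $K$, itself an immediate consequence of the continuity of $k_e$ and of $k_t$ combined with their tensor-product structure.

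Finally, I have to promote "continuous and bounded" into "vanishing at infinity". For any fixed $z_0 = (x_0,V_0)\in Z$, the kernel section
\begin{equation*}
 K(z_0,\cdot) : (x,V)\mapsto k_e(x_0,x)\,k_t(V_0,V)
\end{equation*}
lies in $C_0(Z)$: the factor $k_e(x_0,\cdot)$ vanishes at infinity on $E$ by assumption, the factor $k_t(V_0,\cdot)$ is continuous hence bounded on the compact space $G_d(E)$, and vanishing at infinity on $Z$ reduces to vanishing as $|x|\to\infty$ since $G_d(E)$ is compact. Therefore every finite linear combination of kernel sections lies in $C_0(Z)$. Such combinations are dense in $W$, and the continuous embedding $W\hookrightarrow L^\infty(Z)$ established above implies $\|f_n - f\|_\infty \to 0$ whenever $f_n\to f$ in $W$. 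Since $C_0(Z)$ is closed in $L^\infty(Z)$ for the sup norm, the limit $f$ stays in $C_0(Z)$. This proves both the inclusion $W\subset C_0(E\times G_d(E))$ and the continuity of the embedding, with the same constant $C$ already obtained. The only subtle point is this last closure step; everything else is a routine application of the reproducing property and of the tensor-product structure of the kernel.
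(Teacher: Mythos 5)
Your proof is correct and follows essentially the same route as the paper: kernel sections lie in $C_0(E\times G_d(E))$ by the hypotheses, the reproducing property plus Cauchy--Schwarz and boundedness of the kernel give the sup-norm control, and density of the span of kernel sections yields the inclusion. If anything, your explicit closure step (finite combinations are dense in $W$, $W$-convergence implies uniform convergence, and $C_0$ is sup-norm closed) makes rigorous a point the paper passes over when it asserts $W\subset C_0(E\times G_d(E))$ directly from $W$ being ``generated by'' the kernel sections.
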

\begin{proof}
By definition~:
  \begin{equation}
\label{eq:def_newkernel}
  k \left ((x,V),(y,\tilde{V}) \right ) = k_{e}(x,y)\,k_{t}(V,\tilde{V})
 \end{equation} 
 and so, thanks to the assumptions on the kernels, $k((x,V),.)$ is continuous on $E \times G_{d}(E)$ and belongs to $C_{0}(E\times G_{d}(E))$. $W$ being the Hilbert space generated by the functions $k((x,V),.)$, we have $W \subset C_{0}(E\times G_{d}(E))$. Moreover, if $\omega \in W$~:
 \begin{equation*}
  \omega(x,V)=\delta_{(x,V)}(\omega)=\langle k((x,V),.), \omega \rangle_{W}
 \end{equation*}
With Cauchy-Schwarz inequality~: $|\omega(x,V)| \leq \|k((x,V),.) \|_{W}.\|\omega\|_{W}$. In addition, $\|k((x,V),.) \|_{W}=\sqrt{k((x,V),(x,V))}$ and both kernels $k_{e}$ and $k_{t}$ are bounded so that $k$ is also bounded. We conclude that $|\omega|_{\infty} \leq \sqrt{|k|_{\infty}}.\|\omega\|_{W}$, which proves that the inclusion embedding $\imath: \ W\hookrightarrow C_{0}(E\times G_{d}(E))$ is indeed continuous.
\end{proof}
Consequently, there exists a continuous mapping of the space of varifolds $C_{0}(E\times G_{d}(E))'$ into the dual of $W$. Just as for currents, we can compare varifolds and unoriented subsets through the Hilbert norm of $W'$. As a result of the kernel property, if $x_{1},x_{2} \in E$ and $V_{1},V_{2} \in G_{d}(E)$ ~:
 \begin{equation}
\label{eq:dot_product_diracs}
  \langle \delta_{(x_{1},V_{1})}, \delta_{(x_{2},V_{2})} \rangle_{W'}=k_{e}(x_{1},x_{2})\,k_{t}(V_{1},V_{2})
 \end{equation}
We have therefore a generic way to build kernels for varifolds which are \textit{separable} since such kernels are tensor products of a kernel on the ambient space $E$ with a kernel on the set of all tangent spaces $G_{d}(E)$. Building kernels on the euclidean space $E$ raises no difficulties (examples were already given in section \ref{sec:cur}) and we have presented a way to build kernels on $G_{d}(E)$. However, it still remains unclear whether the dual application $i^*: \ C_{0}(E\times G_{d}(E))'\rightarrow W'$ is always an embedding. In general, this is actually not the case~: although $W\hookrightarrow C_{0}(E\times G_{d}(E))$ is an embedding, the dual application need not be injective. For instance, choosing the Cauchy-Binet kernel of equation (\ref{eq:kernel_CB1}) on the Grassmannian makes $i^*$ not injective. The fact that $i^*$ is an embedding is called the \textbf{$C_{0}$-universality} property of the kernel and, as proved in \cite{Carmeli}, it is equivalent to the property of $W$ being dense in $C_{0}(E\times G_{d}(E))$. With respect to varifolds' kernels, we can still show a general but weaker result~: under a few assumptions, $i^*$ is injective on the set of finite unions of submanifolds. The result is the following~:
\begin{prop}
 Let $k=k_{e}\otimes k_{t}$ be a kernel as in proposition \ref{prop:kernel_var}. Assume that kernel $k_{e}$ is $C_{0}$-universal and that the kernel $k_{t}$ is such that $k_{t}(V,V)> 0$ for all $V \in G_{d}(E)$. Let $X=\bigcup_{i=1}^{N} X_{i}$ and $Y=\bigcup_{j=1}^{M} Y_{j}$ be two finite union of compact $d$-dimensional submanifolds of $E$. If $\| \mu_{X} - \mu_{Y} \|_{W'}=0$ then $X=Y$.
 \label{prop:injectivity_submanifold}
\end{prop}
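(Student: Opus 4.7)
The plan is to exploit the tensor product structure $k = k_e \otimes k_t$ of the kernel together with the $C_0$-universality of $k_e$ in order to promote the varifold equality in $W'$ to an identity that holds against all continuous test functions on $E$, and then to derive a contradiction by choosing a bump function at a smooth point of $X$ lying outside $Y$.

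Fix $V_0 \in G_d(E)$. For every $f$ in the RKHS $W_e$ of $k_e$, the elementary tensor $\omega(x,V) := f(x)\,k_t(V_0,V)$ belongs to the tensor-product RKHS $W$, so the hypothesis $\|\mu_X-\mu_Y\|_{W'}=0$ together with formula (\ref{eq:shape_varifold1}) yields
\begin{equation*}
\int_X f(x)\,k_t(V_0,T_x X)\,d\mathcal{H}^d(x) \;=\; \int_Y f(y)\,k_t(V_0,T_y Y)\,d\mathcal{H}^d(y).
\end{equation*}
Both sides are continuous linear functionals of $f \in C_0(E)$, since $\mathcal{H}^d(X)$ and $\mathcal{H}^d(Y)$ are finite and $k_t$ is bounded. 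Because $k_e$ is $C_0$-universal, $W_e$ is dense in $C_0(E)$, and the identity therefore extends to every $f \in C_0(E)$. Equivalently, for each $V_0 \in G_d(E)$ the finite Radon measures $k_t(V_0,T_x X)\,d\mathcal{H}^d|_X$ and $k_t(V_0,T_y Y)\,d\mathcal{H}^d|_Y$ on $E$ coincide.

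Suppose, towards a contradiction, that $X \neq Y$. After possibly exchanging the roles of the two shapes, choose $x_0 \in X \setminus Y$. Since $Y$ is compact, $x_0$ admits an open neighborhood $U \subset E$ with $U \cap Y = \emptyset$. The set of singular points of $X$ (points lying on more than one of the submanifolds $X_i$) is closed in $X$ with empty interior, so $X \setminus Y$, being relatively open and non-empty in $X$, contains a regular point of $X$; I may take $x_0$ to be such a point. Near $x_0$, $X$ is a single smooth $d$-submanifold with continuously varying tangent plane, and by the hypothesis on $k_t$ we have $k_t(V_0,V_0) > 0$ for $V_0 := T_{x_0}X$. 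Continuity of $k_t$ and of $x \mapsto T_x X$ then allows me to shrink $U$ so that $k_t(V_0,T_x X) \geq c > 0$ on $X \cap U$.

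Picking any non-negative $f \in C_0(E)$ supported in $U$ with $f(x_0) > 0$ gives $\int_X f(x)\,k_t(V_0,T_x X)\,d\mathcal{H}^d(x) > 0$, whereas $\int_Y f(y)\,k_t(V_0,T_y Y)\,d\mathcal{H}^d(y) = 0$ because $f$ vanishes on $Y$. This contradicts the measure identity established above, so $X = Y$. The crux of the argument is the density extension from $W_e$ to $C_0(E)$ provided by $C_0$-universality; the selection of a smooth point $x_0 \in X \setminus Y$ together with the strict positivity $k_t(V_0,V_0) > 0$ is then what enables a purely local bump-function test to close the proof.
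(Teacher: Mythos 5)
Your argument follows the same route as the paper's proof: test $\mu_X-\mu_Y$ against elementary tensors $f\otimes k_t(V_0,\cdot)$, use the $C_0$-universality of $k_e$ (density of $W_e$ in $C_0(E)$ for the sup norm) to upgrade this to an equality of the finite Radon measures on $E$ with densities $k_t(V_0,T_xX)$ and $k_t(V_0,T_yY)$ against $\mathcal{H}^d$ restricted to $X$ and $Y$, and then contradict that equality with a bump function supported near a point of $X\setminus Y$ at which $k_t(V_0,T_xX)$ is bounded below by a positive constant. All of that is sound, and the localization is genuinely necessary since $k_t$ is only assumed positive on the diagonal and may take negative values off it.

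The one step that does not hold as stated is the claim that the set of points lying on more than one $X_i$ is closed with empty interior in $X$. It is closed, but its interior can be non-empty: two of the $X_i$ may coincide on a relatively open piece (two circles sharing an arc, or the degenerate case $X_i=X_j$), and $X\setminus Y$ could then consist entirely of such points, so your selection of a regular point fails. This is exactly the situation the paper's recursive ball-shrinking argument is built to handle: it produces a ball $B(\hat x,\rho)$ on which $\mu_X$ agrees with $\mu_{X_j}$ for a single index $j$, by distinguishing the alternative ``the extra component coincides with $X_{j_1}$ on the ball'' from ``some sub-ball avoids the extra component''. Your proof is repairable, because what you actually need is only a point of $X\setminus Y$ near which $X$ is locally a single embedded $C^1$ submanifold (points where several components overlap identically are harmless for your local analysis), and the set of such points is indeed open and dense in $X$ --- but establishing that density requires a Baire-category or recursive argument of the same flavor as the paper's, not the one-line assertion you gave.
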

\begin{proof}
 We will denote by $W_{e}$ and $W_{t}$ the RKHS associated to kernels $k_{e}$ and $k_{t}$. Let's start by the case where $X$ is a single submanifold of $E$. If $\| \mu_{X} - \mu_{Y} \|_{W'}=0$ and $X\neq Y$, one can find $x_{0} \in \mathring{X} \backslash Y$ and there exists $r>0$ such that $B(x_{0},r) \cap Y = \emptyset$. Let $a \in C_{0}(E,\mathbb{R})$ be any continuous function such that the support of $a$ is included in $B(x_{0},r)$. Let $\omega_{t}=k_{t}(T_{x_{0}}X,.) \in W_{t}$. For all functions $\omega_{e} \in W_{e}$, $\omega_{e} \otimes \omega_{t} \in W$ and thus $(\mu_{X}-\mu_{Y})(\omega_{e} \otimes \omega_{t})=0$, i.e
 \begin{equation*}
  \int_{X} \omega_{e}(x)\omega_{t}(T_{x}X) d\mathcal{H}^{d}(x) - \int_{Y} \omega_{e}(y)\omega_{t}(T_{y}Y) d\mathcal{H}^{d}(y) = 0
 \end{equation*}
Now, since $W_{e}$ is dense in $C_{0}^{0}(E,\mathbb{R})$, we can approximate uniformly $a$ by functions in $W_{e}$. By uniform convergence in the previous integral and the fact that $a=0$ in $Y$, we get~:
 \begin{equation*}
  \int_{X} a(x)\omega_{t}(T_{x}X) d\mathcal{H}^{d}(x) = 0
 \end{equation*}
This holds for all continuous functions $a$ supported in $B(x_{0},r)$, which is clearly impossible since $x\mapsto \omega_{t}(T_{x}X)$ is continuous and $\omega_{t}(T_{x_{0}}X)=k_{t}(T_{x_{0}}X,T_{x_{0}}X) \neq 0$. \\
If $X$ is now a finite reunion of compact submanifolds and $X\neq Y$ then, as previously, we can find $x_{0} \in \bigcup_{i=1}^{N} \mathring{X_{i}}$ with $B(x_{0},r_{0}) \cap Y = \emptyset$. $B(x_{0},r_{0}) \cap X$ is itself a non-empty finite reunion of submanifolds. In order to get a similar proof as for the one submanifold case, we need to show the following~: there exists a point $\hat{x} \in B(x_{0},r_{0}) \cap X$, $\rho >0$ with $B(\hat{x},\rho) \subset B(x_{0},r_{0})$ and $j \in \{1,..,N\}$ such that for all function $\omega$ supported in $B(\hat{x},\rho) \times G_{d}(E)$, $\mu_{X}(\omega)=\mu_{X_{j}}(\omega)$. This can be proved recursively. \\
Assuming the result for $N-1$, we can find $x_{1}$, $r_{1}$,$j_{1}$ with $B(x_{1},r_{1}) \subset B(x_{0},r_{0})$ and $\mu_{\bigcup_{i=1}^{N-1}X_{i}}=\mu_{X_{j_{1}}}$ for functions supported in $B(x_{1},r_{1})$. Now, there are two distinct cases~: either $B(x_{1},r_{1}) \cap X_{j_{1}} = B(x_{1},r_{1}) \cap X_{N}$ or we can assume for instance that there exists $x_{2} \in B(x_{1},r_{1}) \cap X_{j_{1}}$, $r_{2}>0$ such that $B(x_{2},r_{2}) \subset B(x_{1},r_{1})$ and $B(x_{2},r_{2}) \cap X_{N}= \emptyset$. In the first case, since $X_{N}$ coincides with $X_{j_{1}}$ on $B(x_{1},r_{1})$, we have still $\mu_{\bigcup_{i=1}^{N}X_{i}}=\mu_{X_{j_{1}}}$ for functions supported in $B(x_{1},r_{1}) \times G_{d}(E)$ and we take $\hat{x}=x_{1}$, $\rho=r_{1}$, $j=j_{1}$. In the second case, one can take $\hat{x}=x_{2}$, $\rho=r_{2}$, $j=j_{1}$ and the result holds as well. The rest of the proof is then exactly similar to the case of a unique submanifold.
\end{proof}
\noindent Even though there is no general injectivity for the dual application, proposition \ref{prop:injectivity_submanifold} ensures that the metrics we use are at least able to distinguish finite reunion of submanifolds. We believe that the result could be generalized to rectifiable subsets of $E$ but the proof might be more involved and technical. \\
In the specific case of the Gaussian kernel on $G_{d}(E)$ given in section \ref{sec:kernel_Grassmann}, we can actually recover the injectivity of $i^*$ on $C_{0}(E\times G_{d}(E))'$ itself, which is stated in the next proposition~:
\begin{prop}
 If $k=k_{e}\otimes k_{G}$ with $k_{e}$ a $C_{0}$-universal kernel on $E$ and $k_{G}$ the restriction of a Gaussian kernel on $\mathcal{L}(E)$ given in equation (\ref{eq:kernel_G1}), then $k$ is a $C_{0}$-universal kernel on $E \times G_{d}(E)$.
 \label{prop:universal_kernel}
\end{prop}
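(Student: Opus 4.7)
The plan is to prove $C_0$-universality by showing directly that $W$, the RKHS of $k = k_e \otimes k_G$, is dense in $C_0(E \times G_d(E))$ (this is the characterization of $C_0$-universality cited from \cite{Carmeli} just before the proposition). The strategy has two main ingredients: first, establish that the factor kernel $k_G$ already has the universality property on $G_d(E)$; second, promote universality of the two factors to universality of the tensor product via a Stone--Weierstrass argument.

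For the first ingredient, I would use the identification $V \mapsto p_V$ that embeds $G_d(E)$ as a compact subset of the finite-dimensional vector space $\mathcal{L}(E)$. By construction $k_G$ is the restriction to $G_d(E)$ of the genuine Gaussian kernel $K_G(A,B) = e^{-|A-B|^2/\sigma^2}$ on $\mathcal{L}(E)$, and it is classical that Gaussian kernels on finite-dimensional Euclidean spaces are $C_0$-universal. Denoting by $\mathcal{H}_G$ the RKHS of $K_G$ on $\mathcal{L}(E)$, the RKHS $W_G$ of $k_G$ consists precisely of the restrictions $f|_{G_d(E)}$ of elements $f \in \mathcal{H}_G$ (standard restriction property of reproducing kernels). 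Since $G_d(E)$ is compact, given any $g \in C(G_d(E))$ I extend it to a compactly supported $\tilde g \in C_c(\mathcal{L}(E))$ by Tietze, approximate $\tilde g$ uniformly by elements of $\mathcal{H}_G$, and restrict back: this yields density of $W_G$ in $C(G_d(E)) = C_0(G_d(E))$.

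For the second ingredient, I would invoke the classical identification of the RKHS of a product kernel with the Hilbert tensor product: the algebraic tensor product $W_e \odot W_G$ is contained in $W$. It then suffices to show $W_e \odot W_G$ is uniformly dense in $C_0(E \times G_d(E))$. For this, observe that the set of finite sums $\sum_i f_i \otimes g_i$ with $f_i \in C_0(E)$, $g_i \in C(G_d(E))$ is a subalgebra of $C_0(E \times G_d(E))$ which separates points and vanishes nowhere, hence by the locally compact version of Stone--Weierstrass it is dense in $C_0(E \times G_d(E))$. To conclude, it is enough to approximate each elementary tensor $f \otimes g$ uniformly by tensors $\tilde f \otimes \tilde g$ with $\tilde f \in W_e$ (using $C_0$-universality of $k_e$) and $\tilde g \in W_G$ (using step one); the bound
\[
\|\tilde f \otimes \tilde g - f \otimes g\|_\infty \leq \|\tilde f - f\|_\infty \|\tilde g\|_\infty + \|f\|_\infty \|\tilde g - g\|_\infty
\]
closes the argument, the crucial point being that $\|\tilde g\|_\infty$ stays uniformly controlled precisely because $G_d(E)$ is compact.

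The main delicate point, and where I expect the argument to require the most care, is the interplay between $C_0$ on the non-compact factor $E$ and $C$ on the compact factor $G_d(E)$: one has to check that the Stone--Weierstrass hypotheses genuinely hold on the locally compact product (vanishing nowhere and separation of points), and that the sup-norm approximation of elementary tensors is legitimate despite the unboundedness of $E$. Compactness of $G_d(E)$ is what rescues this step, since it forces uniform boundedness of approximations on the second factor. Once these technical checks are made, the chain of density statements $W_e \odot W_G \subset W \subset C_0(E \times G_d(E))$ with $W_e \odot W_G$ dense gives the desired conclusion.
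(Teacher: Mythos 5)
Your proposal is correct, and its overall skeleton coincides with the paper's: first establish $C_0$-universality of $k_G$ on $G_d(E)$ by viewing it as the restriction of the Gaussian kernel on $\mathcal{L}(E)$ to the compact set of rank-$d$ orthogonal projections, then pass from universality of the two factors to universality of the tensor product. The difference is in how the two sub-steps are justified. For the first, the paper simply invokes Corollary 3 of \cite{Carmeli} (restriction of a $C_0$-universal kernel to a closed subset is $C_0$-universal), whereas you give a self-contained argument via the restriction property of RKHS combined with a Tietze extension; this is valid here precisely because $G_d(E)$ is compact, and it buys independence from the cited corollary at no real cost. For the second, the paper declares it ``clear'' that density of $W_e$ in $C_0(E)$ and of $W_t$ in $C_0(G_d(E))$ forces density of the completion of $W_e\otimes W_t$ in $C_0(E\times G_d(E))$; your Stone--Weierstrass argument on the locally compact product, followed by the elementary-tensor approximation bound, is exactly the missing justification, and your attention to the uniform control of $\|\tilde g\|_\infty$ on the compact factor is the right technical point to flag. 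In short, you prove the same statement by the same decomposition but supply complete elementary proofs where the paper delegates to a citation or an assertion; nothing in your argument fails.
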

\begin{proof}
 The proof is mainly based on two results about RKHS. The first one is the very well-known property that a Gaussian kernel on a finite dimensional vector space is $C_{0}$-universal. In our context, the Gaussian kernel $K$ defined on $\mathcal{L}(E) \times \mathcal{L}(E)$ by $K(l_{1},l_{2})= e^{-\frac{|l_{1}-l_{2}|^{2}}{\sigma^{2}}}$ is thus $C_{0}$-universal. Now, the kernel $k_{G}$ that we have defined in equation (\ref{eq:kernel_G1}) is the restriction of $K$ to the subset of all orthogonal projections on $d$-dimensional subspaces (which is identified to $G_{d}(E)$). This subset is closed in $\mathcal{L}(E)$ and it is proved in \cite{Carmeli} (corollary 3) that $k_{G}$ is then a $C_{0}$-universal kernel. If $W_{e}\subset C_{0}(E,\mathbb{R})$ and $W_{t}\subset C_{0}(G_{d}(E),\mathbb{R})$ are the RKHS corresponding to kernels $k_{e}$ and $k_{G}$, it results that we have $W_{e}$ dense in $C_{0}(E,\mathbb{R})$ (because $k_{e}$ is assumed to be $C_{0}$-universal) and $W_{t}$ dense in $C_{0}(G_{d}(E),\mathbb{R})$. It is then clear that $W$, which is the completion of $W_{e} \otimes W_{t}$, is dense in $C_{0}(E \times G_{d}(E),\mathbb{R})$.
\end{proof}
\noindent In summary, we have explained in this subsection how to define Hilbert metrics on varifolds that are computed from tensor products between kernels on $E$ and kernels on the Grassmann manifold. This does not provide necessarily real distances on varifolds because of the possible non-injectivity of the application between the space of varifolds and the dual of the RKHS $W$. Yet, we have shown that for a very wide class of such kernels, the resulting distances are separating finite unions of submanifolds and we have argued in favor of Gaussian kernels on the Grassmann manifold for which we obtain the real injectivity and thus distances on the space of all varifolds. We will elaborate a little more on the interest of such Gaussian kernels in specific situations in the next subsection (cf figure \ref{dist_CB_Gaussian}).

\subsection{Properties of RKHS norms on varifolds}
\label{sec:kernel_var_properties}
There are several interesting remarks and properties that we can state about the previously defined norms, notably if we compare it to the framework of currents that was exposed in the first section. First of all, let's consider the limit case of an infinite scale for the Gaussian kernel on the space $E$, i.e $k_{e}(x,y)=Id_{E}$. If $X$ is any $d$-dimensional submanifold of $E$, the norm of the current $C_{X}$ becomes~:
 \begin{equation*}
  \|C_{X} \|_{W'}^2 = \iint_{X \times X} \langle \xi(x), \xi(y) \rangle = \langle \int_{X} \xi(x), \int_{X} \xi(x) \rangle
 \end{equation*}
Now, from Stokes' formula, $\int_{X} \xi(x)$ is a term that only depends on the boundary of $X$. Therefore, at large scale, RKHS norms on currents are only sensitive to the boundaries of objects, and all shapes with no boundary vanishes in this representation. \\
In the varifold case, the behavior is fundamentally different. At large scale for the kernel $k_{e}$, the RKHS distance becomes a distance between the distributions of tangent space directions depending on the kernel $k_{t}$. More specifically~:
 \begin{eqnarray*}
  \|\mu_{X} \|_{W'}^2 &=& \iint_{X \times X} k_{t}(T_{x}X,T_{y}X) d\mathcal{H}^{d}(x) d\mathcal{H}^{d}(y) \\
  &=& \iint_{G_{d}(E) \times G_{d}(E)} k_{t}(u,v) d\nu_{X}(u) d\nu_{X}(v)
 \end{eqnarray*}
 where $\nu_{X}:=\mathcal{H}^{d} \circ Tan^{-1}$ is the image measure of $\mathcal{H}^{d}$ by the application $Tan: X\rightarrow G_{d}(E), \ x\mapsto T_{x}X$. In other words, at infinite scale for $k_{e}$, $\|\mu_{X} \|_{W'}^2$ represents a metric on the distribution of the non-oriented tangent spaces to the shape $X$, which we can see, in a certain way, as a histogram metric on $G_{d}(E)$. Therefore, even at infinite scale for $k_{e}$, something of the shape still remains in the varifold representation contrarily to currents. In those situations, the choice of the kernel on the Grassmann part of varifolds is quite decisive. In particular, the $C_{0}$-universality issue discussed previously becomes important. We show an illustration of such phenomenon in figure \ref{dist_CB_Gaussian} for the two kernels explicitly given in section \ref{sec:kernel_Grassmann}~: the Cauchy-Binet kernel of equation (\ref{eq:kernel_CB1}), which is not $C_{0}$-universal and the Gaussian kernel of equation (\ref{eq:kernel_G1}), which was proved to be. The conclusion that we can be drawn in general is that Gaussian kernels are obviously more effective in situations of multiple directions crossing at nearby points, as might appear when treating fibers or tree-like structures.    \\
 \\
\begin{figure}
\begin{tabular}{cc}
\includegraphics[width=5cm,height=5cm]{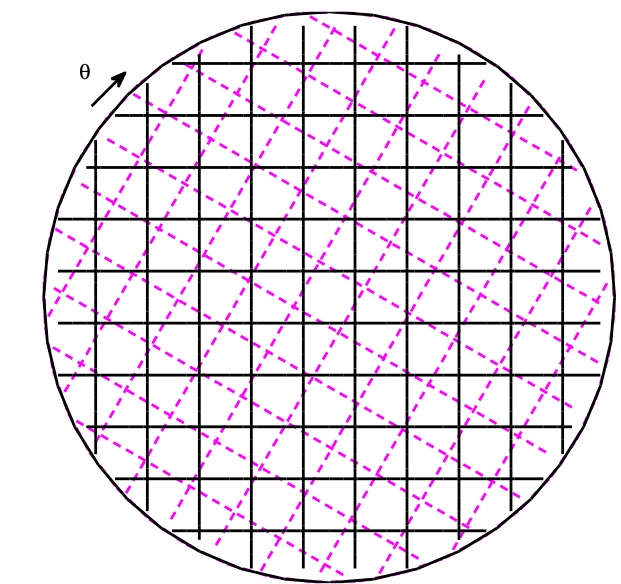} & \includegraphics[width=7cm,height=5cm]{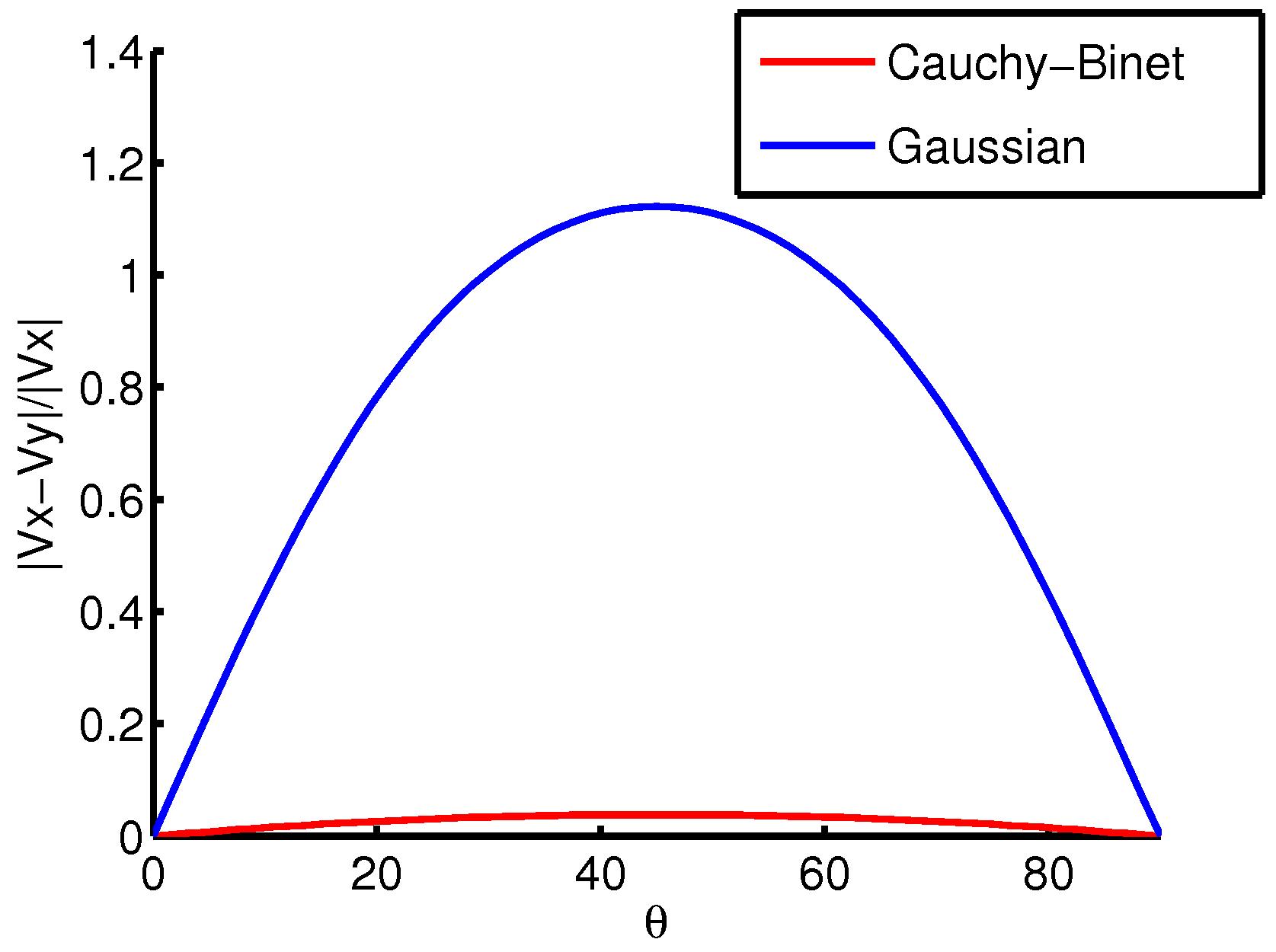}
\end{tabular}
\caption{Computation of varifolds' distance between a 2D grid-like shape (in black) and its rotated version (in magenta) for various rotation angles and for both the Cauchy-Binet kernel and the Gaussian kernel on the Grassmann manifold, all in the case of a very large-scale kernel on the space $E$. The graph displays the ratio between the varifold distances and the norm of the original shape as a function of the rotation angle. With the Cauchy-Binet kernel, the values remain very small for all angles, and it is nearly unable to distinguish both shapes whereas the Gaussian kernel shows an expected behavior with a maximum distance for a 45$^{\circ}$ angle.}
\label{dist_CB_Gaussian}
\end{figure}
 
\noindent We now come the problem raised by our discussion related to figure \ref{CEX_currents}. We show in which precise sense RKHS metrics on varifolds successfully avoid the cancellation phenomenon of currents. This is formulated in the theorem below and its corollary~:
 \begin{theo}
 \label{theo:principal}
  Let $\mu$ be a positive $d$-dimensional varifold of $E$ such that $\mu \in W'$ and $Supp(\mu)\subset B(0,1)\times G_{d}(E)$. Let $k$ be a reproducing kernel on $E\times G_{d}(E)$ and $W$ its RKHS, defined like in section \ref{sec:var_kernel_tensprod} by $k((x,u),(y,v))=k_{e}(x,y).k_{t}(u,v)$ where $k_{e}$ is a reproducing kernel on $E$ and $k_{t}$ a reproducing kernel on the Grassmannian. We make the following additional assumptions~: \\
  1. $k_{e}$ is a radial scalar kernel so that we can write $k_{e}(x,y)=h_{e}(|x-y|)$. $h_{e}$ is assumed to be a continuous positive function, with $h_{e}(0)>0$.\\
  2. $k_{t}$ is a continuous and positive function on $G_{d}(E) \times G_{d}(E)$ such that for all $u \in G_{d}(E)$, $k_{t}(u,u)>0$.\\ 
  Then there exists a constant $\Cte$ independent of $\mu$ such that~:
  \begin{equation*}
  \|\mu\|_{W'} \geq \Cte\mu(E \times G_{d}(E))\,.
  \end{equation*}
 \end{theo}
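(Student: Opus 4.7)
The plan is to bound $\|\mu\|_{W'}^{2}$ from below by exploiting the positivity of $\mu$ to write the squared norm as a double integral against $k$, then producing uniform lower bounds piece by piece for $k_{e}$ and $k_{t}$. First, since $\mu\in W'$ is positive, the standard Riesz representation in RKHS form gives
\begin{equation*}
\|\mu\|_{W'}^{2} = \iint_{(E\times G_{d}(E))^{2}} k_{e}(x,y)\,k_{t}(u,v)\,d\mu(x,u)\,d\mu(y,v),
\end{equation*}
reducing the problem to estimating a double integral of $k_{e}\otimes k_{t}$ against $\mu\otimes\mu$.

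Second, because $\mu$ is supported in $B(0,1)\times G_{d}(E)$, one has $|x-y|\leq 2$ on the support of $\mu\otimes\mu$. The continuity and positivity of $h_{e}$ on $[0,+\infty)$ yield $c_{1}:=\min_{r\in[0,2]}h_{e}(r)>0$, and combined with $k_{t}\geq 0$ this gives
\begin{equation*}
\|\mu\|_{W'}^{2} \geq c_{1}\iint_{G_{d}(E)^{2}} k_{t}(u,v)\,d\nu(u)\,d\nu(v),
\end{equation*}
where $\nu$ is the $G_{d}(E)$-marginal of $\mu$ and $\nu(G_{d}(E))=\mu(E\times G_{d}(E))=:m$.

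The main obstacle is that $k_{t}$ need not be bounded below by a single positive constant on all of $G_{d}(E)\times G_{d}(E)$, only on the diagonal. To get around this I would use the compactness of $G_{d}(E)$: for each $u_{0}\in G_{d}(E)$, continuity of $k_{t}$ together with $k_{t}(u_{0},u_{0})>0$ yields an open neighborhood $U_{u_{0}}$ on which $k_{t}\geq k_{t}(u_{0},u_{0})/2$. Extracting a finite subcover $U_{u_{1}},\dots,U_{u_{N}}$ of $G_{d}(E)$ and setting $c_{2}:=\tfrac{1}{2}\min_{i}k_{t}(u_{i},u_{i})>0$, I would refine this cover into a Borel partition $A_{1},\dots,A_{N}$ with $A_{i}\subset U_{u_{i}}$. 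Using $k_{t}\geq 0$ to discard the cross terms,
\begin{equation*}
\iint k_{t}\,d\nu\,d\nu \geq \sum_{i=1}^{N}\iint_{A_{i}\times A_{i}} k_{t}\,d\nu\,d\nu \geq c_{2}\sum_{i=1}^{N}\nu(A_{i})^{2}.
\end{equation*}

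Finally, by Cauchy--Schwarz $\sum_{i}\nu(A_{i})^{2}\geq m^{2}/N$, which chains with the previous estimates to give $\|\mu\|_{W'}^{2}\geq (c_{1}c_{2}/N)\,m^{2}$, so the theorem follows with $\Cte=\sqrt{c_{1}c_{2}/N}$, a constant depending only on the kernels and on the radius of the support of $\mu$.
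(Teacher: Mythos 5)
Your treatment of the Grassmannian factor is correct and in fact more elementary than the paper's: where you extract a finite cover of $G_{d}(E)$ by neighborhoods on which $k_{t}$ is bounded below, refine it to a Borel partition, drop the cross terms and apply discrete Cauchy--Schwarz, the paper instead averages a bump function over the Haar measure of $SO(E)$ acting transitively on $G_{d}(E)$ to build a regularizer $\psi$ supported near the diagonal. Both routes exploit exactly the same two facts (continuity of $k_{t}$, strict positivity on the diagonal, compactness of $G_{d}(E)$) and yield a constant of the same nature; yours avoids any group-theoretic machinery.

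There is, however, a genuine gap in your handling of the spatial kernel. You set $c_{1}:=\min_{r\in[0,2]}h_{e}(r)$ and assert $c_{1}>0$, but the hypothesis only states that $h_{e}$ is a continuous \emph{nonnegative} function with $h_{e}(0)>0$ (if strict positivity everywhere were intended, the clause $h_{e}(0)>0$ would be redundant). Nothing prevents $h_{e}$ from vanishing at some $r\in(0,2]$ --- compactly supported radial reproducing kernels such as Wendland kernels do exactly this --- in which case your $c_{1}=0$ and the chain of inequalities collapses to the trivial bound $\|\mu\|_{W'}^{2}\geq 0$. The fix is precisely the device you already use on the Grassmannian, now applied in the spatial variable: continuity at $0$ gives $\delta>0$ with $k_{e}(x,y)\geq h_{e}(0)/2$ whenever $|x-y|\leq\delta$; cover $B(0,1)$ by $M$ cells of diameter less than $\delta$, keep only the diagonal blocks $S_{i}\times S_{i}$ of the double integral (legitimate since $k_{e}k_{t}\geq 0$), apply your Grassmannian estimate to each $\mu\vert_{S_{i}}$, and conclude by discrete Cauchy--Schwarz that $\sum_{i}\mu(S_{i})^{2}\geq \mu(E\times G_{d}(E))^{2}/M$. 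This is exactly the paper's Step~2, and with it your argument goes through with constant $\Cte=\sqrt{c_{2}h_{e}(0)/(2NM)}$.
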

 
\begin{proof}
For any varifold $\mu \in W'$, we have~:
 \begin{equation}
 \label{eq.control1}
  \|\mu\|_{W'}^2 = \iint_{(E \times G_{d}(E))^{2}} k_{e}(x,y).k_{t}(u,v) d\mu(x,u) d\mu(y,v)\,.
 \end{equation} 
For the proof of theorem \ref{theo:principal}, we shall first examine the case of a constant kernel for $k_{e}$. \\
 \textbf{Step 1:} We first assume that $k_{e}(x,y)=1$ for all $x,y$. Let's denote by $p : E \times G_{d}(E) \rightarrow G_{d}(E)$ the application $(x,u)\mapsto u$. We introduce the image measure $\nu:=\mu \circ p^{-1}$ defined on $G_{d}(E)$. Then~:
 \begin{equation*}
  \|\mu \|_{W'}^2 = \iint_{G_{d}(E) \times G_{d}(E)} k_{t}(u,v) d\nu(u) d\nu(v)\,.
 \end{equation*} 
Note that $\nu(G_{d}(E))=\mu(p^{-1}(G_{d}(E))) = \mu(E\times G_d{E})$. Now, the compact group $SO(E)$ of direct isometries of $E$ acts transitively on $G_{d}(E)$ by the relation \\
$g.\text{Span}(e_{1},..,e_{d}):=\text{Span}(g(e_{1}),..,g(e_{d}))$, making $G_{d}(E)$ a homogeneous space. Since $SO(E)$ is a compact group, we can consider its unique bi-invariant Haar measure which we denote $\lambda$ with the convention $\lambda(SO(E))=1$. If we fix a particular element $u_{0} \in G_{d}(E)$, $\lambda$ induces in turn a measure on $G_{d}(E)$ defined for all $B$ by $\lambda_{Gr}(B)=\lambda(\{g \in SO(E) \ | \ g.u_{0} \in B \})$. One can check that the right-invariance of $\lambda$ implies that $\lambda_{Gr}$ does not depend on the choice of $u_{0} \in G_{d}(E)$. In the same way, thanks to the left-invariance of $\lambda$, $\lambda_{Gr}$ is invariant by the action of $G$ i.e $\lambda_{Gr}(g.B)=\lambda_{Gr}(B)$ for all $g$. 

A second important element is that, as a homogeneous space under the action of a compact Lie group (on which there exists a bi-invariant metric), $G_d(E)$ can be equipped by projection with a left-invariant distance with respect to the action of $SO(E)$, which we will denote $d$. Let's now consider a continuous, positive $L^1$ (for the measure $\lambda_{Gr}$) function $\phi$ on $G_{d}(E)$. Since $\lambda_{Gr}(G_{d}(E))=\lambda(SO(E))=1 < + \infty$, $\phi$ is also in $L^2$ and we assume that $\|\phi\|_{2}=1$. We can assume in addition that the support $\text{Supp}(\phi)$ of $\phi$ is included in a certain ball of radius $\delta$ centered at $u_{0}$. We also introduce the regularization function $\psi$ on $G_{d}(E)\times G_{d}(E)$ defined by ~:
 \begin{equation*}
  \psi(u,v) = \int_{SO(E)} \phi(g.u) \phi(g.v) d\lambda(g)\,.
 \end{equation*}
It's obvious that $\psi$ is also a positive continuous function and Cauchy-Schwarz inequality gives~: 
 \begin{equation*}
\forall (u,v)\in G_{d}(E)^{2}, \ \psi(u,v) \leq \int_{SO(E)} \phi(g.u)^{2} d\lambda(g)\,.
 \end{equation*}
The definition and invariance properties of $\lambda_{Gr}$ show that for all $u \in G_{d}(E)$, if $\tau_{u} : g\mapsto g.u$ then, $\lambda \circ \tau_{u}^{-1}=\lambda_{Gr}$ and it results that~:
\begin{equation*}
\int_{SO(E)} \phi(g.u)^{2} d\lambda(g) = \int_{G_{d}(E)} \phi(v)^{2} d\lambda_{Gr}(v) = \|\phi\|_{2}^{2}=1\,,
 \end{equation*}
therefore $\psi(u,v) \leq 1$. In addition, if $d(u,v) \geq 2\delta$, we have $d(g.u,g.v) \geq 2\delta$ for all $g$ since $d$ is left-invariant. Therefore $d(g.u,u_{0})+d(g.v,u_{0}) \geq d(g.u,g.v) \geq 2\delta$. It results that either $d(g.u,u_{0}) \geq \delta$ or $d(g.v,u_{0}) \geq \delta$ and anyhow $\phi(g.u) \phi(g.v)=0$. Consequently, $\psi$ vanishes for $d(u,v) \geq 2\delta$. Now, since $k_{t}$ is positive  
 \begin{equation*}
  \|\mu\|_{W'}^2 \geq \iint_{G_{d}(E) \times G_{d}(E)} k_{t}(u,v) \psi(u,v) d\nu(u) d\nu(v)\,.
 \end{equation*} 
Moreover, $k_{t}$ is also continuous and $k_{t}(u,u)>0$ for all $u$ so, by compactness of $G_{d}(E)$, there exists $\alpha>0$ and an open domain $D\subset G_{d}(E) \times G_{d}(E)$ of the form $D=\{(u,v) \ | \ d(u,v)<\epsilon\}$ such that for all $u,v \in D$, $k_{t}(u,v) \geq \alpha$. We can also assume, by choosing appropriately the function $\phi$, that $\delta \leq \frac{\epsilon}{2}$ and therefore $\psi(u,v)=0$ outside $D$. It results that~:
  \begin{equation}
  \label{eq.control2}
  \|\mu\|_{W'}^2 \geq \alpha \iint_{G_{d}(E) \times G_{d}(E)} \psi(u,v) d\nu(u) d\nu(v)\,.
 \end{equation}
 Using Fubini's theorem in the last integral, we can write 
 \begin{align}
 \label{eq.control3}
  \iint_{G_{d}(E) \times G_{d}(E)} \psi(u,v) & d\nu(u) d\nu(v)\nonumber\\
& = \int_{SO(E)} \left ( \iint_{G_{d}(E) \times G_{d}(E)} \phi(g.u) \phi(g.v) d\nu(u) d\nu(v) \right ) d\lambda(g) \nonumber \\
  & = \int_{SO(E)} \left ( \int_{G_{d}(E)} \phi(g.u) d\nu(u) \right)^{2} d\lambda(g) \nonumber \\
  & \geq \left ( \int_{SO(E)} \int_{G_{d}(E)} \phi(g.u) d\nu(u) d\lambda(g) \right )^{2}
 \end{align}
 the last estimate resulting from Cauchy-Schwartz inequality and the fact that $\lambda(SO(E))=1$. Using again Fubini's theorem,  
  \begin{equation*}
  \int_{SO(E)} \int_{G_{d}(E)} \phi(g.u) d\nu(u) d\lambda(g) = \int_{G_{d}(E)} \left ( \int_{SO(E)} \phi(g.u) d\lambda(g) \right ) d\nu(u)\,.
 \end{equation*}
 Making the change of variable $\tau_{u}: \ g\mapsto g.u$ in the inside integral and using the same argument as previously, we obtain that
 \begin{equation*}
  \int_{SO(E)} \phi(g.u) d\lambda(g) = \int_{G_{d}(E)} \phi(w) d\lambda_{Gr}(w) = \|\phi\|_{1}>0
 \end{equation*}
 Inserting the previous in equation (\ref{eq.control3}), 
 \begin{equation*}
  \iint_{G_{d}(E) \times G_{d}(E)} \psi(u,v) d\nu(u) d\nu(v) \geq \|\phi\|_{1}^{2} \nu(G_{d}(E))^{2}= \|\phi\|_{1}^{2} \mu(E\times G_d{E})^{2}\,.
 \end{equation*} 
 So that we eventually obtain~:
 \begin{equation*}
  \|\mu_{X}\|_{W'}^2 \geq \alpha \|\phi\|_{1}^{2}. \mu(E\times G_d{E})^{2}
 \end{equation*} 
 and since $\beta:=\alpha \|\phi\|_{1}^{2}$ is a constant that does not depend on $\mu$, this concludes the proof in that case. \\
 \textbf{Step 2:} We now move to the proof for a general kernel $k_{e}$. From the hypotheses, the support of $\mu$ is included in $B(0,1) \times G_{d}(E)$ and $k_{e}$ is a continuous and positive radial scalar kernel on $E$. Thus, there exists a real number $\delta <1$ such that for any $x,y \in E$ with $|x-y|\leq \delta$, we have $k_{e}(x,y) \geq h_{e}(0)/2:=\kappa$. Let's now consider a covering of the unit ball with a set of cubes $\{C_{i}\}_{i=1,..,M}$ of diameter smaller than $\delta$. Then the subsets $S_{i}:=(C_{i}\times G_{d}(E))\cap \text{Supp}(\mu)$ form a partition of $\text{Supp}(\mu)$. Moreover, 
 \begin{align*}
  \|\mu\|_{W'}^2 & = \iint_{(E\times G_{d}(E))^{2}} k_{e}(x,y)\,k_{t}(u,v) d\mu(x,u) d\mu(y,v) \\
  & \geq \sum_{i=1}^{M} \iint_{S_{i} \times S_{i}} k_{e}(x,y)\,k_{t}(u,v) d\mu(x,u) d\mu(y,v) \\
  & \geq \kappa \sum_{i=1}^{M} \iint_{S_{i} \times S_{i}} k_{t}(u,v) d\mu(x,u) d\mu(y,v)\,.
 \end{align*}
We can apply the result of step 1 to each of these integrals, thus obtaining
 \begin{align*}
  \|\mu\|_{W'}^2 & \geq \beta \kappa \sum_{i=1}^{M} \mu(S_{i})^{2} \\
  & \geq \beta \kappa. \dfrac{1}{M} \left ( \sum_{i=1}^{M} \mu(S_{i}) \right )^{2} \\
  & \geq \dfrac{\beta \kappa}{M} \mu(E\times G_d{E})^{2}\,,
 \end{align*}
 the second inequality being a discrete Cauchy-Schwartz and the last one resulting from the fact that $\{S_{i}\}$ is a partition of $\text{Supp}(\mu)$. This ends the proof because the constants $\beta$ and $\kappa$ are both independent of $\mu$, and so is $M$ which only depends on the kernel $k_{e}$. 
\end{proof}
\noindent The hypotheses of theorem \ref{theo:principal} on the kernels are mostly technical but are not particularly restrictive in practice since one can check easily that all the kernels of equations (\ref{eq:kernel_CB1}) and (\ref{eq:kernel_G1}) comply to the requirements on $k_{t}$. As a direct corollary, we also have the following result ~:
\begin{cor}
 Let $X$ be a rectifiable subset of $E$ included in the unit ball. We make the same hypotheses on the kernel as in theorem \ref{theo:principal}. Then there exists a constant $\Cte$ independent of $X$ such that
  \begin{equation*}
  \|\mu_{X}\|_{W'} \geq \Cte\mathcal{H}^{d}(X)\,.
  \end{equation*} 
  \label{cor:rect_var}
\end{cor}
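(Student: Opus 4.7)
The plan is to derive the corollary as an immediate specialization of Theorem \ref{theo:principal} applied to the particular varifold $\mu = \mu_X$. The key observation is that the rectifiable varifold $\mu_X$ associated to a rectifiable subset $X \subset B(0,1)$ automatically satisfies both hypotheses of the theorem, so no new analytic argument is needed; only a bookkeeping check is required.

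First, I would verify the support condition. Since $X$ is included in the unit ball, the measure $\mu_X$ defined by $\mu_X(A) = \mathcal{H}^d(\{x \in X \mid (x,T_xX) \in A\})$ is supported on $X \times G_d(E) \subset B(0,1) \times G_d(E)$, which is the support hypothesis of the theorem. Positivity of $\mu_X$ is clear from its definition as a nonnegative Borel measure.

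Second, I would identify the total mass. By taking $\omega \equiv 1$ in the expression
\begin{equation*}
\mu_X(\omega) = \int_{E \times G_d(E)} \omega(x,V)\, d\mu_X(x,V),
\end{equation*}
one obtains $\mu_X(E \times G_d(E)) = \mathcal{H}^d(X)$, since integrating the constant function $1$ against $\mu_X$ just recovers the $d$-dimensional Hausdorff measure of $X$.

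Applying Theorem \ref{theo:principal} to $\mu_X$ then yields the inequality $\|\mu_X\|_{W'} \geq \Cte\, \mu_X(E \times G_d(E)) = \Cte\, \mathcal{H}^d(X)$, with constant depending only on the kernel, not on $X$. There is essentially no obstacle here: the only subtlety is the (purely notational) verification that $\mu_X$ really lies in the class of varifolds covered by the theorem, which follows from the measure-theoretic definition of $\mu_X$ recalled in equation (\ref{eq:shape_varifold1}).
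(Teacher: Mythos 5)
Your proposal is correct and follows essentially the same route as the paper's own proof: apply Theorem \ref{theo:principal} to $\mu_X$ after noting that $\mathrm{Supp}(\mu_X)\subset B(0,1)\times G_d(E)$ and that $\mu_X(E\times G_d(E))=\mathcal{H}^d(X)$ by the definition of the rectifiable varifold. Nothing further is needed.
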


\begin{proof}
 This is essentially a special case of theorem \ref{theo:principal} for rectifiable varifolds. Indeed, the support of $\mu_{X}$ is included in $B(0,1) \times G_{d}(E)$ and thanks to the theorem, $\|\mu_{X}\|_{W'} \geq \Cte\mu_{X}(E \times G_{d}(E))$. In addition, from the very definition of $\mu_{X}$, $\mu_{X}(E \times G_{d}(E))=\mathcal{H}^{d}(X)$.
\end{proof}
\noindent This result is theoretically essential because it shows that pathological cases as the one of figure \ref{CEX_currents} cannot happen when shapes are represented as varifolds. In practical applications, it ensures the consistency of the kernel norm we use with the actual volume of the shapes so that artificial elimination of mass during registration process or template estimation are not likely to occur in this setting.  Note that when $X$ is of codimension one and is defined as the boundary of a bounded domain, the notion of volume of a shape $X$ is the area of the boundary of the domain and not the volume of the domain. In particular, the corollary says that when a plain shape is represented by its boundary, its norm for generic varifold kernel is larger than the area of its boundary.  We will show more specific examples of this in the next section.

A last fundamental issue to mention is the question of the variation of kernel norms with respect to the geometrical support of shapes, since registration and template estimation algorithms rely on the computations of gradients of such norms. We shall detail the technical computations of such gradients for discrete shapes in appendix B but it is also very valuable to have a theoretical and general interpretation of how the metric on shape varies when the support is deformed. Namely, if $X$ is a shape and a RKHS $W$ on varifolds is set, we want to express the variation with respect to $X$ of terms like $\langle \mu_{X}, \mu' \rangle_{W'}$ with $\mu'$ a certain element of $W'$ or equivalently of $\mu_{X}(\omega)$ for $\omega \in W$. The proper formulation is to express the derivative of $\mu_{X}(\omega)$ for small variations of $X$ obtained by flowing a vector field from $X$. With some hypotheses on $X$ and $W$, it is possible to derive a formula that generalizes in our context the notion \textit{first variation of a varifold} studied in \cite{Allard}. The result is summed up below~:
\begin{theo}
 \label{theo:variation_formula}
Let $X$ be an orientable compact submanifold and $\mu_{X}$ its associated varifold.  Let $v$ any $C^{1}$ vector field with compact support defined on $E$ and consider the associated one-dimensional subgroup $t\mapsto\phi_t$ generated by the flow of $v$. Then, if $X_t\doteq \phi_t(X)$ is the transported manifold, we have for any $C^{1}$ function $(x,V)\mapsto \omega(x,V)$ on $E \times G_{d}(E)$~:
  \begin{equation*}
   \left. \dfrac{d}{dt} \right \vert_{t=0}\mu_{X_t}(\omega)= \int_{X} \left ( \dfrac{\partial \omega}{\partial x} - \mdiv_{X} \left (\dfrac{\partial \omega}{\partial V} \right ) - \omega H_{X} | v^{\bot} \right ) + \int_{\partial X} \langle \nu, \omega v^{\top} + \left( \dfrac{\partial \omega}{\partial V} | v^\bot \right )\rangle
  \end{equation*}
where $v^\bot$ and $v^\top$ denote the tangential and normal part on $v$ along $X$ and $\nu$ is the unit outward normal along $\partial X$.
 \end{theo}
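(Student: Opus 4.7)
The plan is to leverage Proposition~\ref{prop:commutation}, which yields $\mu_{X_t}=(\phi_t)_{\ast}\mu_X$, so that
\[
\mu_{X_t}(\omega)=\mu_X(\phi_t^{\ast}\omega)=\int_X (\phi_t^{\ast}\omega)(x,T_xX)\,d\mathcal{H}^d(x).
\]
Since $X$ is compact and the integrand is jointly $C^1$ in $(t,x)$, I would differentiate under the integral sign and reduce the problem to the pointwise computation of $\frac{d}{dt}\big|_{t=0}(\phi_t^{\ast}\omega)(x,T_xX)$.

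To carry out that pointwise computation, I would expand the explicit pullback formula (\ref{eq:pull_var}): $(\phi_t^{\ast}\omega)(x,V)=|d_x\phi_t(u_1)\wedge\cdots\wedge d_x\phi_t(u_d)|\,\omega(\phi_t(x),d_x\phi_t.V)$ for $(u_1,\dots,u_d)$ an orthonormal basis of $V$. By Leibniz's rule the derivative at $t=0$ splits into three contributions: (i) the derivative of the $d$-dimensional Jacobian, which by a direct expansion of the Gram matrix equals the tangential divergence $\mdiv_V v(x)=\sum_i\langle d_xv(u_i),u_i\rangle$; (ii) the derivative of the base point, $\frac{d}{dt}|_{t=0}\phi_t(x)=v(x)$, contributing $\langle \partial\omega/\partial x,v\rangle$; and (iii) the derivative of the image plane $d_x\phi_t.V$ read in the Grassmannian chart $\psi_V$ of Section~\ref{sec:Grass}, which under $T_V G_d(E)\simeq \mathcal{L}(V,V^\bot)$ equals $p_{V^\bot}\circ d_xv|_V$, producing the pairing of $\partial\omega/\partial V$ with this linear map. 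Specializing to $V=T_xX$ yields
\begin{equation*}
\frac{d}{dt}\Big|_{t=0}\mu_{X_t}(\omega)=\int_X \Big[\omega\,\mdiv_X v + \big\langle \tfrac{\partial\omega}{\partial x},v\big\rangle + \big(\tfrac{\partial\omega}{\partial V}\,\big|\,p_{T_xX^\bot}\circ d_xv\big)\Big]\,d\mathcal{H}^d.
\end{equation*}

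Then I would decompose $v=v^{\top}+v^{\bot}$ along $X$ and apply the tangential divergence theorem twice. The identity $\mdiv_X v=\mdiv_X v^{\top}-\langle H_X,v^{\bot}\rangle$ immediately supplies the mean-curvature term $-\omega\langle H_X,v^{\bot}\rangle$ in the bulk, while integration by parts on the tangential field $\omega v^{\top}$ gives the boundary contribution $\int_{\partial X}\omega\langle v^{\top},\nu\rangle$ together with an interior piece $-\int_X\langle d_X\tilde\omega,v^{\top}\rangle$, where $\tilde\omega(x)=\omega(x,T_xX)$. The chain rule for $\tilde\omega$ produces a $\partial\omega/\partial x$ piece that exactly cancels the tangential component of contribution (ii), leaving only $\langle \partial\omega/\partial x,v^{\bot}\rangle$ in the bulk, and a $\partial\omega/\partial V$ piece that, combined with the tangential part of contribution (iii), assembles into $\mdiv_X$ of the $V$-valued vector field $(\partial\omega/\partial V\,|\,v^{\bot})$. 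A final divergence-theorem application on this vector field yields the $-\mdiv_X(\partial\omega/\partial V)$ interior term paired with $v^{\bot}$ and the second boundary contribution $\int_{\partial X}\langle \nu,(\partial\omega/\partial V\,|\,v^{\bot})\rangle$.

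The main obstacle is the careful bookkeeping for the tangent-plane derivative. Along a tangential direction $u\in T_xX$ the differential of the Gauss map $x\mapsto T_xX$ is given by the shape operator of $X$, and one must verify that the sum of the Grassmannian-chart term $p_{V^\bot}\circ d_xv|_V$ and the shape-operator contribution arising from differentiating $\tilde\omega$ along $v^{\top}$ regroups cleanly so that its tangential part appears as the tangential divergence of the single $V$-valued field $(\partial\omega/\partial V\,|\,v^{\bot})$, leaving only the normal component $v^{\bot}$ in the interior integrand. One can confirm this either by an explicit local-frame computation in coordinates adapted to $X$ or, more conceptually, by invoking the parametrization independence of $\mu_{X_t}$ to argue that every term involving $v^{\top}$ must collapse into a boundary contribution. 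Once this algebraic reduction is checked, collecting all bulk and boundary integrals gives the claimed formula.
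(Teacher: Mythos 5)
Your proposal follows essentially the same route as the paper's proof in Appendix A: pull back $\omega$ by $\phi_t$, differentiate the three factors (Jacobian, base point, Grassmannian image) under the integral, decompose $v=v^{\top}+v^{\bot}$, and apply the tangential divergence theorem twice together with the chain rule for $\tilde\omega(x)=\omega(x,T_xX)$. The "regrouping" you flag as the main obstacle is exactly what the paper settles by the explicit frame identity $\left(\frac{\partial\omega}{\partial V}\,\middle|\,\nabla v\right)=\mdiv_X\left(\frac{\partial\omega}{\partial V}\,\middle|\,v\right)-\left(\mdiv_X\left(\frac{\partial\omega}{\partial V}\right)\middle|\,v\right)$ applied to both $v$ and $v^{\top}$, so your outline is correct and matches the paper's argument.
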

We refer to appendix B for the precise definition of all terms and for the complete proof of this formula. We will just make a few qualitative comments since there are a few noteworthy consequences to mention. The first important remark is that the variation of varifold metric is controlled only by the vector field $v$ and not its derivatives. This is not straightforward since a varifold kernel $K((x,V),(x',V'))$ is encoding in an arbitrary non-linear way first order information through the inclusion of Grassmannian variables $V$ and $V'$. Interestingly, the result is valid even if the dimension or the codimension of $X$ is different from $1$. In return, we see that the formula involves some terms on the \textbf{boundary} of $X$, one of them expressing the tangential extension of $X$ along its boundary and a second one related to the variation in the tangent space direction on the boundary in the normal direction $v^\bot$. In the interior of $X$, we see that the variation depends only on the orthogonal component $v^{\bot}$ of $v$, which, in other terms, shows that the gradient of the attachment distance is orthogonal to the shape.        

\section{Large deformation matching of unoriented shapes~: an algorithm}
\label{sec:var_LDDMM}
At this point, we have defined a theoretical background to represent and compare unoriented shapes through varifolds and kernels on varifolds. What we have obtained is a distance (actually a whole class of distances provided by different kernels) between the objects that can serve as an attachment term in practically any matching process. In the rest of the article, we will focus on one particular model of large deformations called \textbf{LDDMM}, which has already proved its interest for various sorts of data, from images \cite{Trouve2}, curves and surfaces \cite{Glaunes2} \cite{Durrlemann3} to fiber bundles \cite{Durrlemann4} and more recently functional shapes \cite{Charon1}. The purpose of the following is to derive the equations needed for the numerical implementation of LDDMM on varifolds. In all this section, we will restrict ourselves to the usual cases of curves and surfaces living in the 3-dimensional euclidean space $\mathbb{R}^{3}$. 

\subsection{Description of the algorithm}
\label{sec:var_LDDMM_algo}
Let's briefly remind the principal features of LDDMM modelling. Given two objects $\mathcal{O}_{1}$ (the source) and $\mathcal{O}_{2}$ (the target) of the ambient space $E$ (two curves, two surfaces...), the registration problem consists in finding an 'optimal deformation' that transforms the source object onto the target one, or at least approximately. In the LDDMM framework, we consider a group of deformations that are diffeomorphisms given as the flow of time-varying vector fields of the space. If $V$ is a certain Hilbert space of vector fields on $E$ and $L^{2}_{V}([0,1])$ denotes the space of time-varying vector fields  $v_{t}$ with $\forall t \in [0,1], \ v_{t}\in V$ and $\int_{0}^{1} |v_{t}|_{V}^{2} dt < \infty$, then the group of deformations associated with $V$ denotes $G_V$ is the set of all $\phi_{1}^{v}$ satisfying $\phi_{0}^{v}=Id$ and the differential equation 
$$\frac{\partial \phi_{t}^{v}}{\partial t} = v_{t}\circ \phi_{t}^{v}\,.$$
The underlying geometrical setting is to consider $G_V$ as equipped with a right invariant metric induced by $V$ playing the role of the tangent space at identity \cite{trouve98:_diffeom}. Matching the two objects can be formulated as finding the solution of the variational problem~:
 \begin{equation}
\label{eq:LDDMM_general}
 \inf_{v \in L^{2}_{V}([0,1])} J(v), \ J(v)=\frac{1}{2}\int_{0}^{1} |v_{t}|_{V}^{2} dt + \frac{\gamma}{2} d(\phi_{1}^{v}.\mathcal{O}_{1},\mathcal{O}_{2})^2 
 \end{equation} 
This functional $J$ can be interpreted as the sum of two terms, the first one that constrains the square length of the deformation path (which is the meaning of the word optimal deformation used above) while the second term drives the matching by measuring the distance of the deformed source $\phi_{1}^{v}.\mathcal{O}_{1}$ to the target. $\gamma$ is a trade-off parameter between the two terms. On the optimality of the deformations due to the minimization of the first term, much has been said and proved in the past, notably with the interpretation as geodesics in shape spaces (cf \cite{Younes}). This resulted in two main algorithms for the resolution of the variational problem, namely a \textit{gradient descent} scheme in the space $L^{2}_{V}([0,1])$ and later a \textit{geodesic shooting} algorithm. Both these schemes are essentially related to the dynamics of the deformations and not to the nature of the objects and can be therefore adapted to the varifold case in a similar way. To fix ideas, in the following, we shall focus on the gradient descent algorithm for non-oriented shapes but the equations given for the data attachment distance and its gradient could be plugged into a geodesic shooting procedure almost straightforwardly.  \\
If the source object $\mathcal{O}_{1}$ is given as a finite set of points in the ambient space $\{q^{i}\}_{i=1,..,N_{1}}$ and if the vector field's space $V$ is taken as a RKHS with kernel $K_{V}$, then it has been shown in \cite{Glaunes} that the optimal vector field in the problem (\ref{eq:LDDMM_general}) can be searched under the particular form~: 
 \begin{eqnarray}
\label{eq:LDDMM_vecfieldform}
 v_{t}(x)&= &  \sum_{i=1}^{N_{1}} K_{V}(q_{t}^{i},x)\alpha_{t}^{i}\,, \\
  \text{ with } & & q_{t}^{i}=q^{i}+\int_{0}^{t} v_{s}(q_{s}^{i}) ds\,. \nonumber
 \end{eqnarray} 
This means that the optimal vector field at all times $t$ is completely parametrized by the set of vectors $\alpha_{i}^{t} \in E$ that we call the \textit{momenta} of the deformation. The optimization of $J$ is therefore equivalent to the minimization of the following functional with respect to $\alpha=(\alpha_{t}^{i})$~:
\begin{equation}
\label{eq:LDDMM_alpha}
\left\{
  \begin{array}[h]{l}
  \tilde{J}(\alpha)=\frac{1}{2}\int_{0}^{1} \sum_{ij}\langle K_{V}(q_{t}^i,q_{t}^i)\alpha_{t}^j,\alpha^i_t\rangle dt + g(q_{1}) \\
\\
 \dot{q}_{t}^i=\sum_{j}K_{V}(q_{t}^i,q_{t}^j)\alpha_{t}^j\,.
  \end{array}\right.
\end{equation}
This leads to finite dimensional optimal control problem where the vector $\alpha_t$ is the control variable. As detailed in \cite{Trouve2011}, the variation of $\tilde{J}$ with respect to a variation $\delta \alpha$ of $\alpha$ gives:
$$\delta \tilde{J}(\alpha)=\int_0^1\sum_{i}\langle \sum_{j}K_V(q^i_t,q^j_t)(\alpha^j_t-\eta^j_t),\delta \alpha^i_t\rangle dt$$
where the covariable $\eta$ is solving 
the backward integration scheme~:
\begin{equation}
\label{eq:LDDMM_backward}
\left\{
  \begin{array}[h]{l}
  \dot{\eta}_{t}=-dv^{\ast}(q)\eta_{t} \\
\\
 \eta_{1}+\nabla g(q_{1})=0
  \end{array}\right.
\end{equation} 
where $dv^*$ is the adjoint of $dv$ and $v$ is given by (\ref{eq:LDDMM_vecfieldform}). Since $(K_V(q^i_t,q^j_t)_{ij})$ is positive definite, $t\mapsto \eta_t-\alpha_t$ as a descent direction for $\tilde{J}$.

At this point, we must insist on the fact that the differential equations of (\ref{eq:LDDMM_alpha}) and (\ref{eq:LDDMM_backward}) are focused on the dynamics of optimal diffeomorphisms, independently from the nature of the objects, whether they be sets of landmarks, oriented or unoriented curves and surfaces, measures... The only requirement is to specify the attachment quantity $g(q_{1})$ for the computation of the total energy and its gradient $\nabla g(q_{1})$ for the initialization of $\eta$ in the backward equation. Adapting LDDMM to unoriented sets represented as varifolds therefore consists essentially in computing these terms. We refer to appendix B where we have detailed such computations for the case of embedded curves and surfaces in $\mathbb{R}^3$ compared through a kernel metric on the space of varifolds. The rest of the algorithm, similarly to LDDMM for landmarks or currents, is a gradient descent on the $\alpha_{t}$ computed through equation (\ref{eq:LDDMM_backward}).

\subsection{Some simulations and results}
We now come to a few results of varifold LDDMM algorithm, in which we want to emphasize the benefit in situations that traditionally involve orientation issues when objects are matched with the currents' framework. In all the following experiments, we chose the Gaussian kernel \\ 
$k_{e}(x,y)=e^{-\frac{|x-y|^{2}}{\sigma_{e}^{2}}}$ on the space $E$. As discussed previously, there are also many possibilities for the kernel $k_{t}$ on the Grassmann manifold~: we have focused essentially on the Cauchy-Binet kernel and the Gaussian kernel. Even though both kernels were proved to induce a distance on the set of reunion of submanifolds, we have also explained why the Gaussian kernel has better separation properties, especially when the scale $\sigma_{e}$ of the kernel on $E$ is chosen to be large. Still, the Cauchy-Binet kernel has the advantage of not introducing an additional scale parameter and also leads to fast numerical computations for shapes with high number of points. We have therefore used the Gaussian kernel with experiments on curves but rather the Cauchy-Binet kernel for 3D surfaces which are usually more sampled. \\
The example of figure \ref{matching_ninja_star} shows a situation with pikes' structures that are typically not well matched using LDDMM with currents. This example can be also used to show the robustness of varifolds when we increase the scale parameter of the spatial kernel $k_{e}$, as illustrated in figure \ref{matching_ninja_scale}. It is particularly remarkable that even at scales much larger than the size of the object, the algorithm is still able to capture the overall description of the target shape, simply based on the distribution of tangent spaces (cf discussion at the beginning of section \ref{sec:kernel_var_properties}). 
\begin{figure}
\begin{tabular}{cc}
\includegraphics[width=5cm,height=5cm]{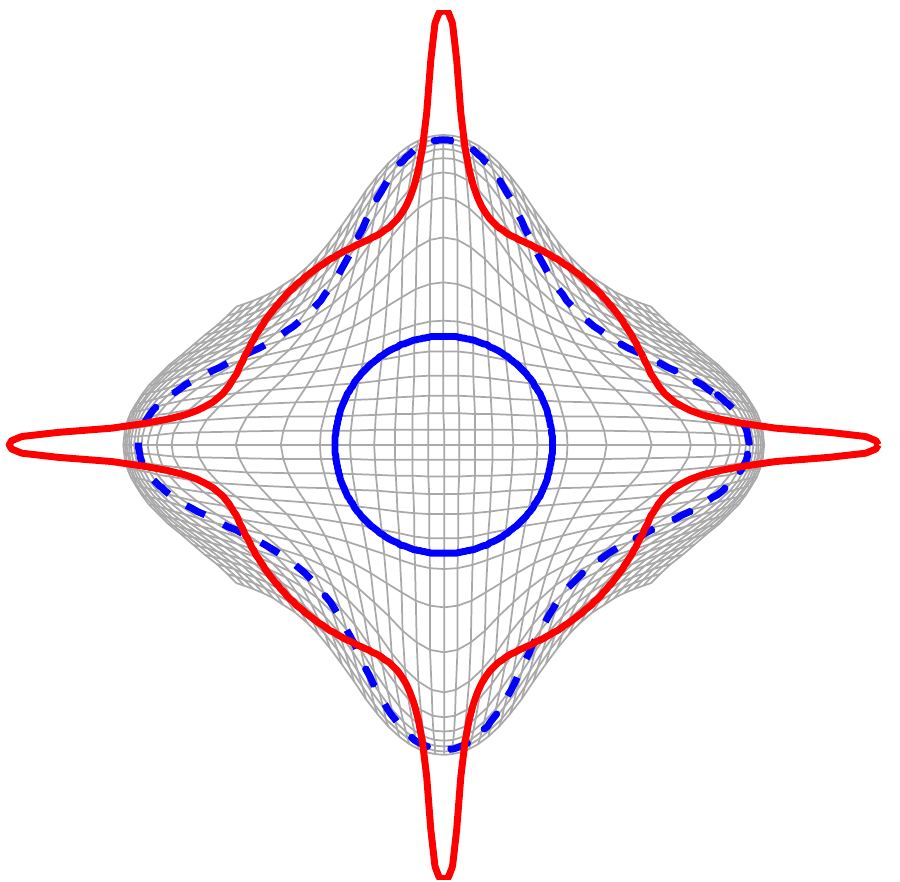} & \includegraphics[width=5cm,height=5cm]{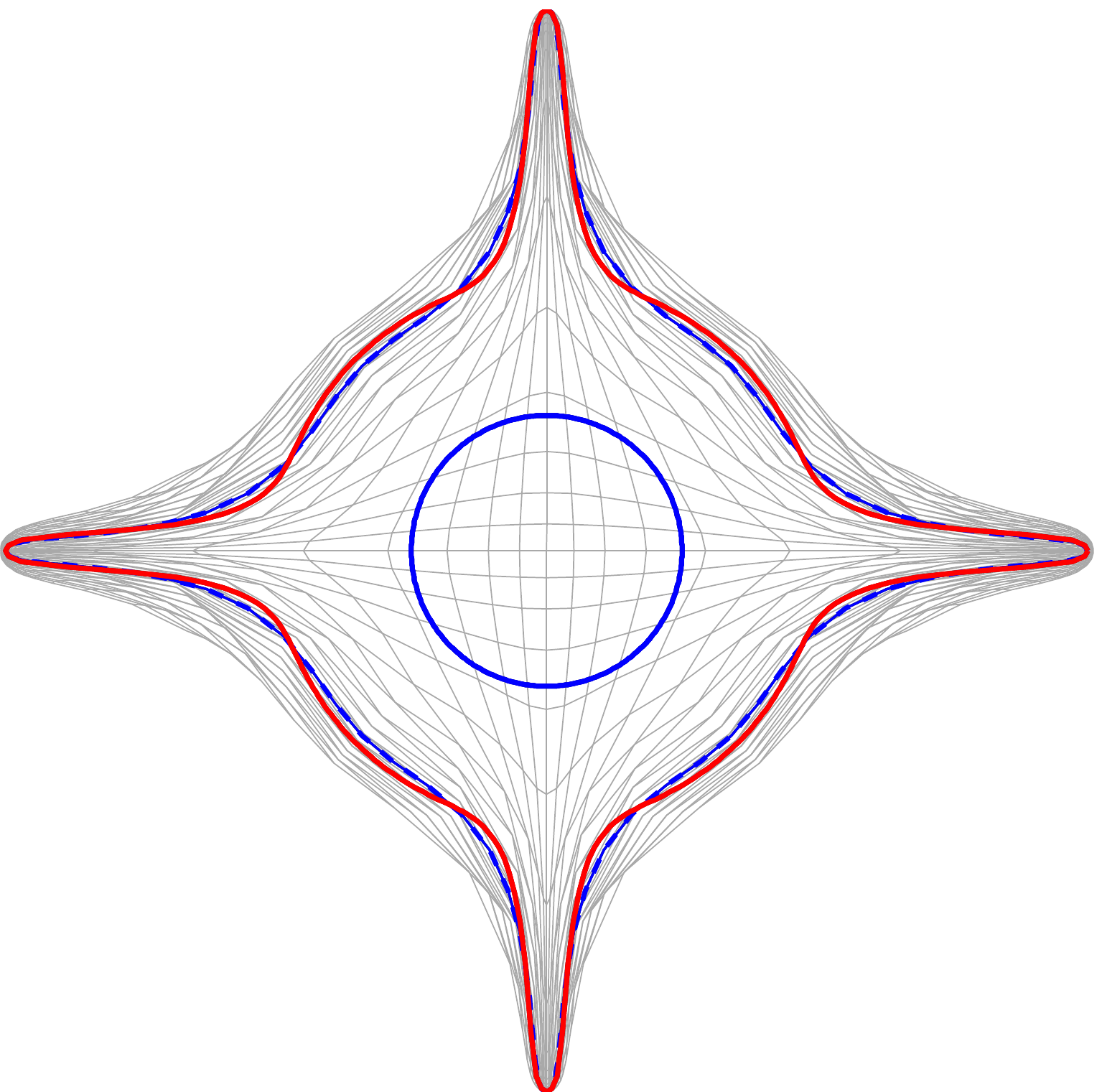}
\end{tabular}
\caption{Example of registration between 2D curves. The source shape is the blue circle, the target is the red star with four narrow branches. On the left, the matching is performed with the approach of currents. On the right, with the approach of varifolds exposed in section \ref{sec:var_LDDMM_algo} with the same parameters. We see that the branches are well recovered with varifolds whereas the current's metric is nearly insensitive to them.}
\label{matching_ninja_star}
\end{figure}

\begin{figure}
\begin{tabular}{cc}
\includegraphics[width=5cm,height=4cm]{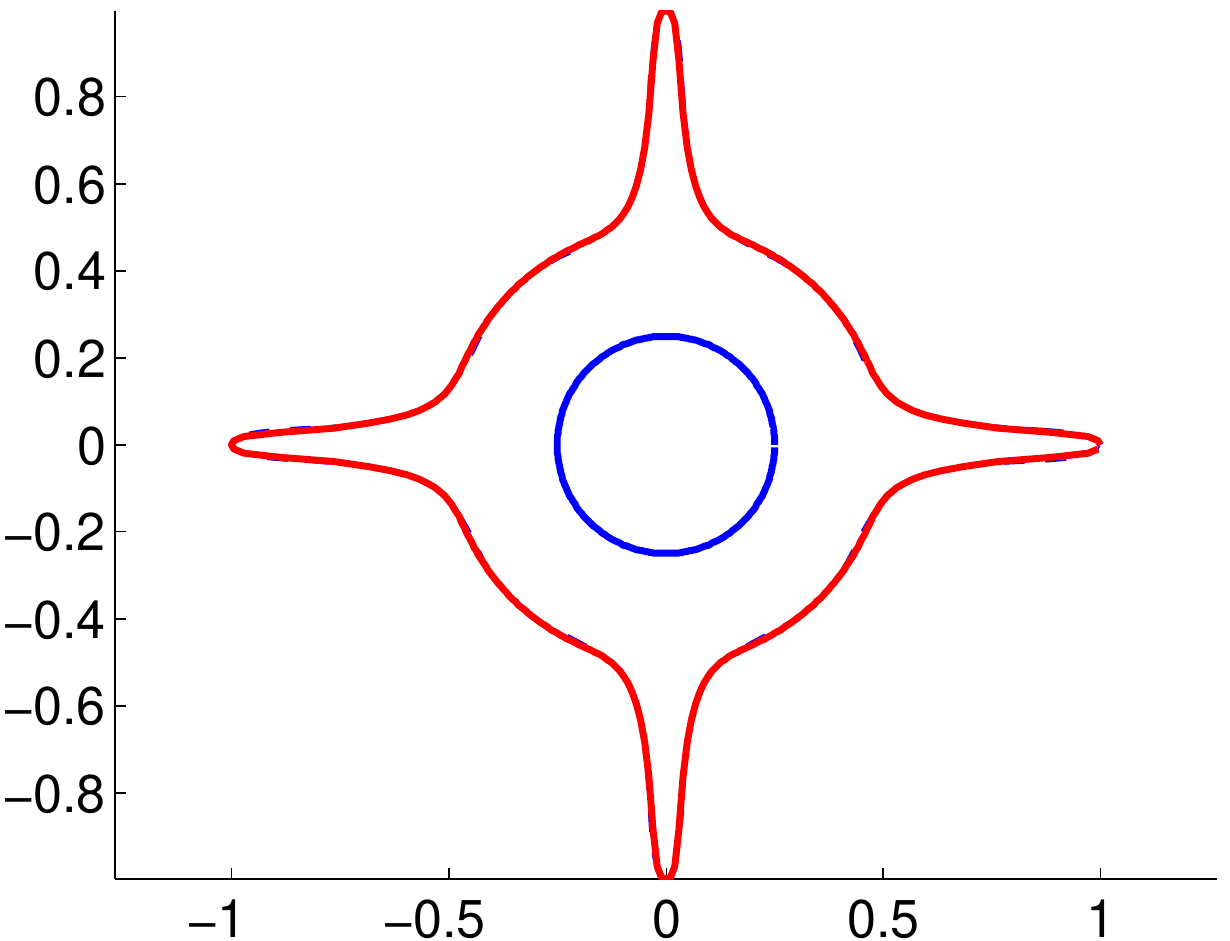} & \includegraphics[width=5cm,height=4cm]{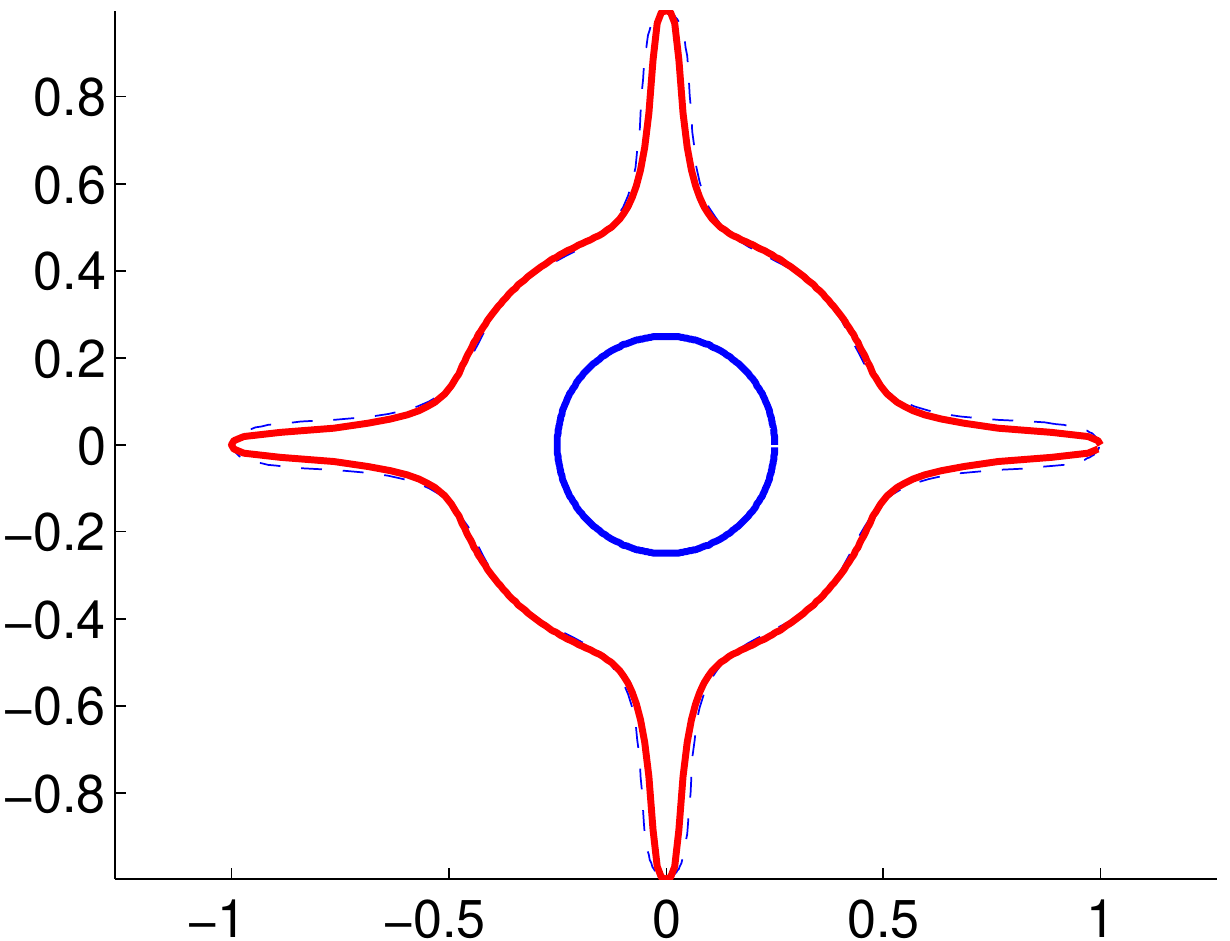} \\
$\sigma_{e}=0.5$ & $\sigma_{e}=1$ \\
\includegraphics[width=5cm,height=4cm]{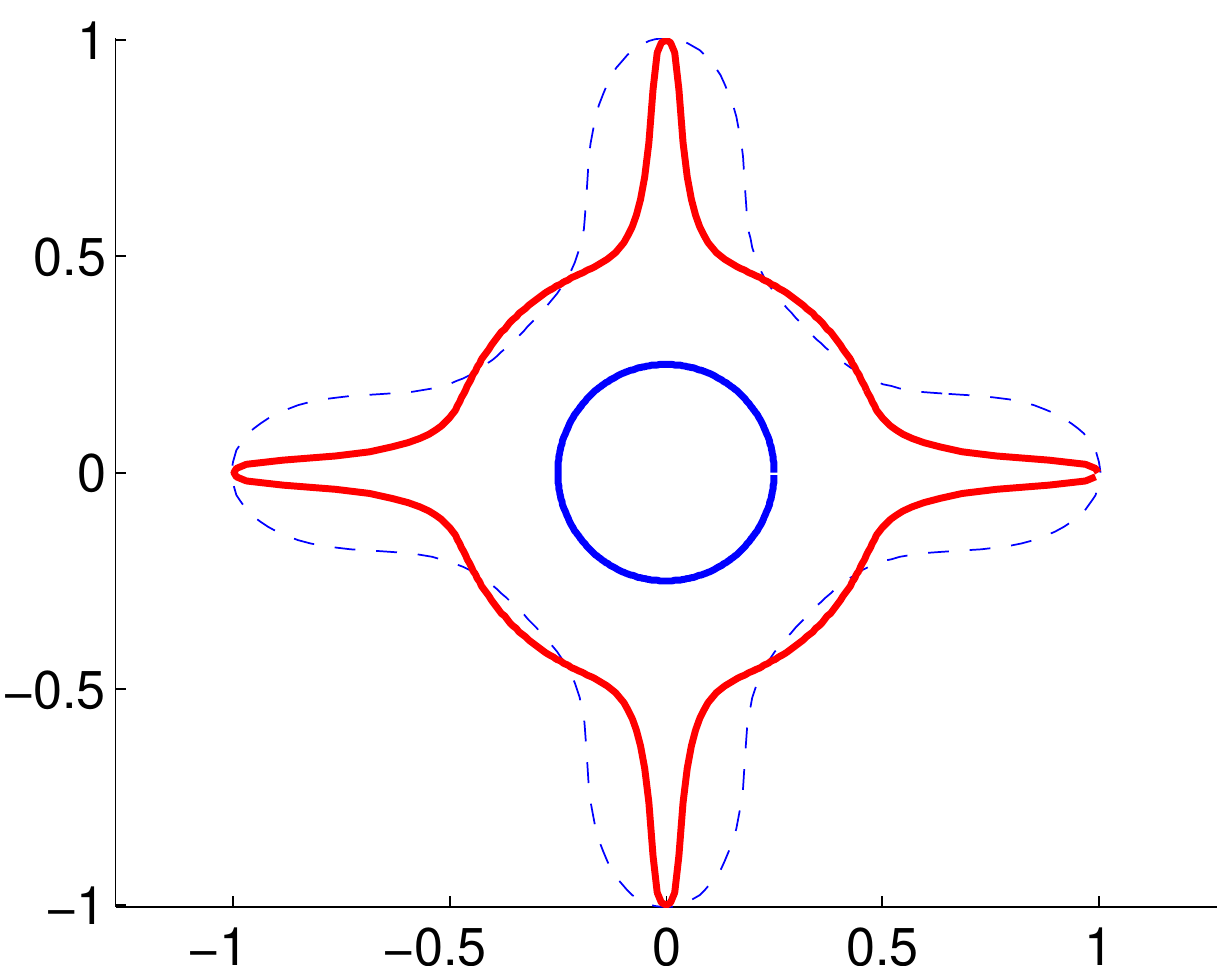} & \includegraphics[width=5cm,height=4cm]{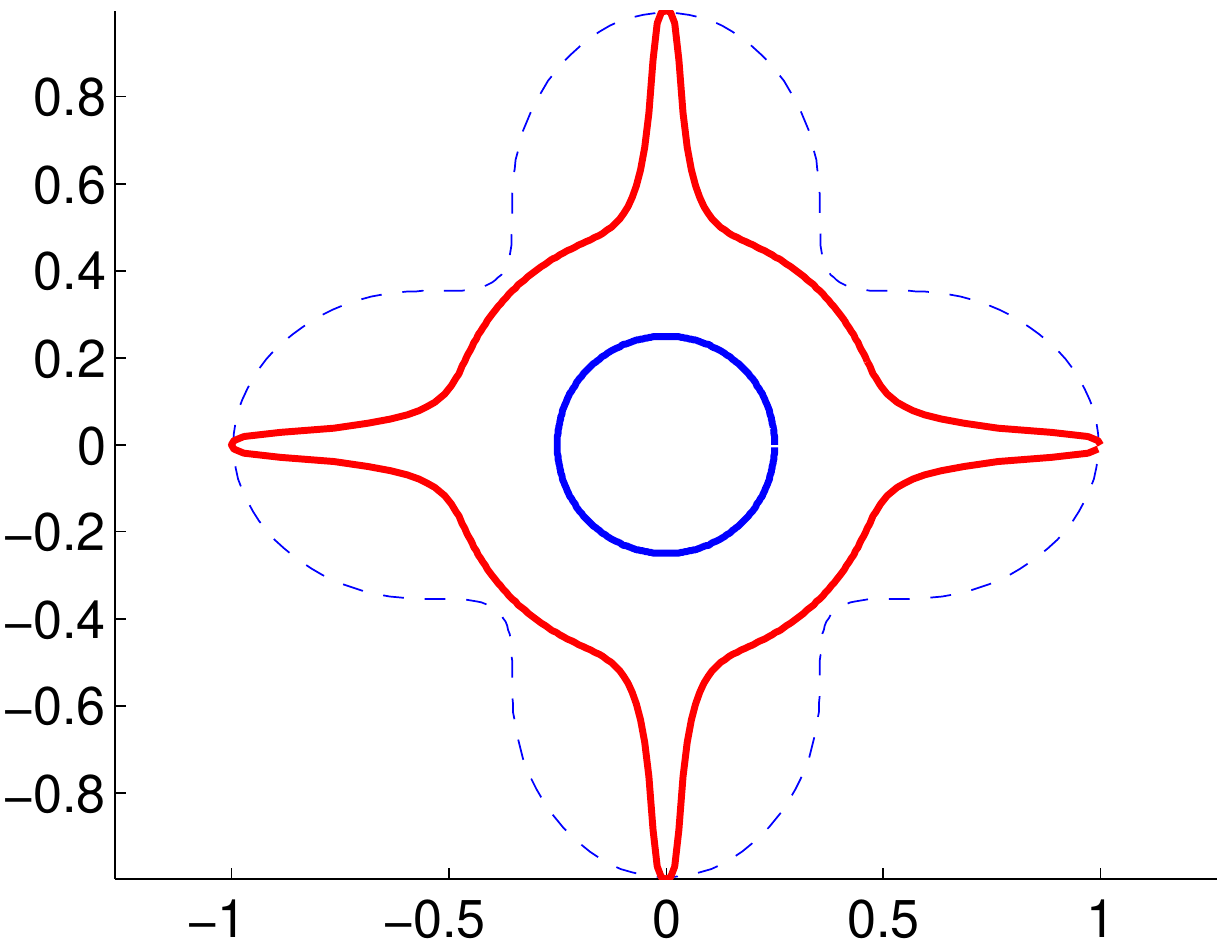} \\
$\sigma_{e}=1.5$ & $\sigma_{e}=10$
\end{tabular}
\caption{The same example for varifolds but performed at different scales $\sigma_{e}$.}
\label{matching_ninja_scale}
\end{figure}

In figure \ref{matching_misoriented_fibres_var}, we go back to the example of figure \ref{matching_misoriented_fibres} where shapes are constituted of many disconnected pieces of curves which are not consistently oriented, and show the results of both methods (currents and varifolds). Other interesting situations are given by tree-like structures for which orientating the different branches consistently for registration can become a nearly intractable problem when faced with many intersecting branches. We give a simple insight of this phenomenon in figure \ref{matching_trees}. \\
We also present some results of surface matching done with our varifold algorithm. The first one is a matching between two thin envelopes. Due to the proximity of the lower and upper membranes, current-norm based registrations have a tendency to squeeze both membranes together in order to eliminate unmatched parts of the shapes, which we can see on figure \ref{matching_enveloppes}. Finally, in figure \ref{matching_bunny}, another surface registration is applied between a standard sphere and the Stanford bunny model. We observe again some mismatch artifacts in the currents's result~: the ears are not fully recovered as well as some small details in the head and we see the apparition of undesirable membranes at the basis of the ears. In contrast, the varifold algorithm matches almost perfectly all parts of the shapes.

\begin{figure}
\leftskip -1cm
\begin{tabular}{cc}
\includegraphics[width=7cm,height=7cm]{matching_misoriented_fibres_current.pdf} & \includegraphics[width=7cm,height=7cm]{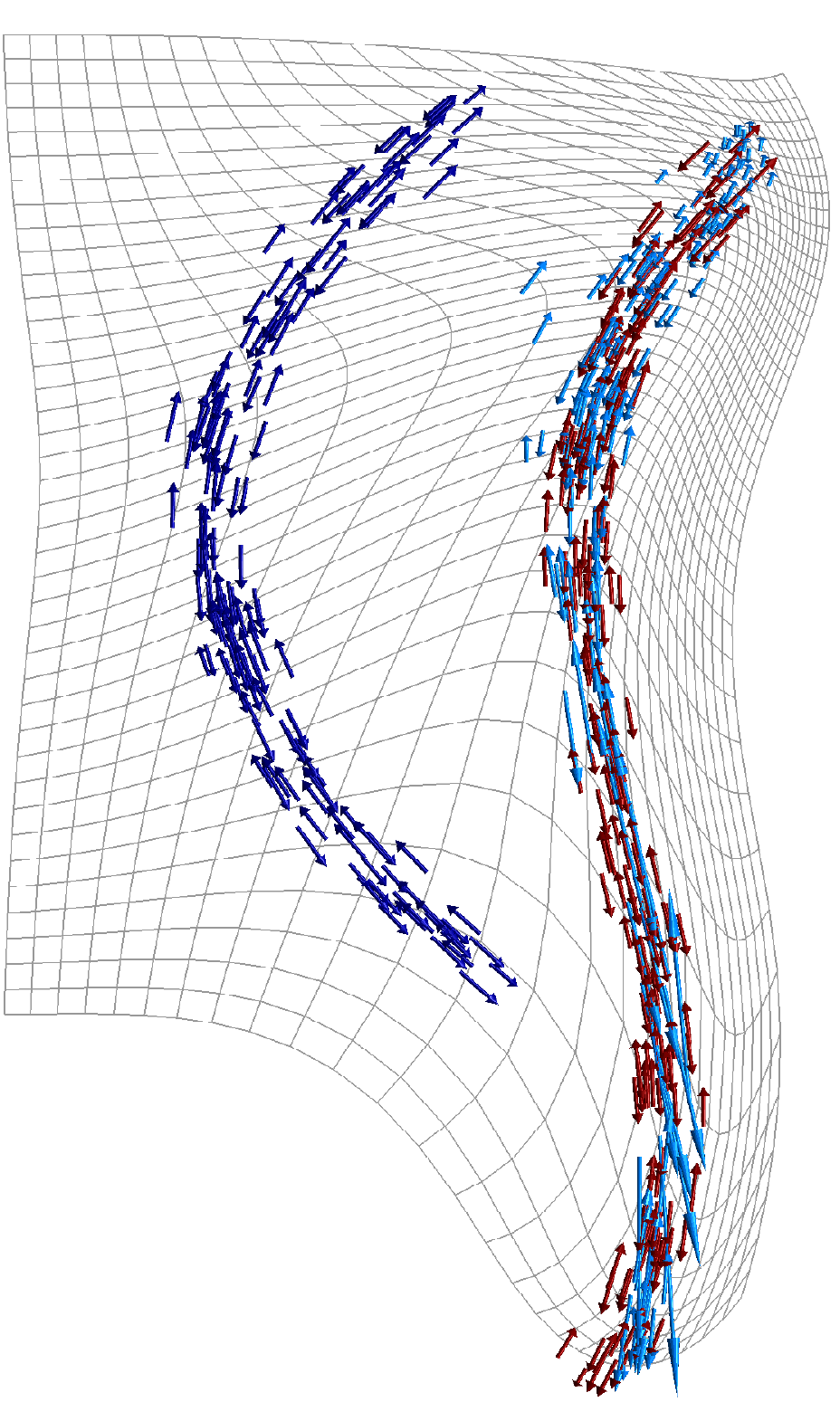}
\end{tabular}
\caption{Registration result on the example of figure \ref{matching_misoriented_fibres} with the framework of currents (left) and of varifolds (right). As expected, varifolds can deal efficiently with any orientation of the different components.}
\label{matching_misoriented_fibres_var}
\end{figure}

\begin{figure}
\leftskip -1cm
\begin{tabular}{cc}
 \includegraphics[width=7cm,height=7cm]{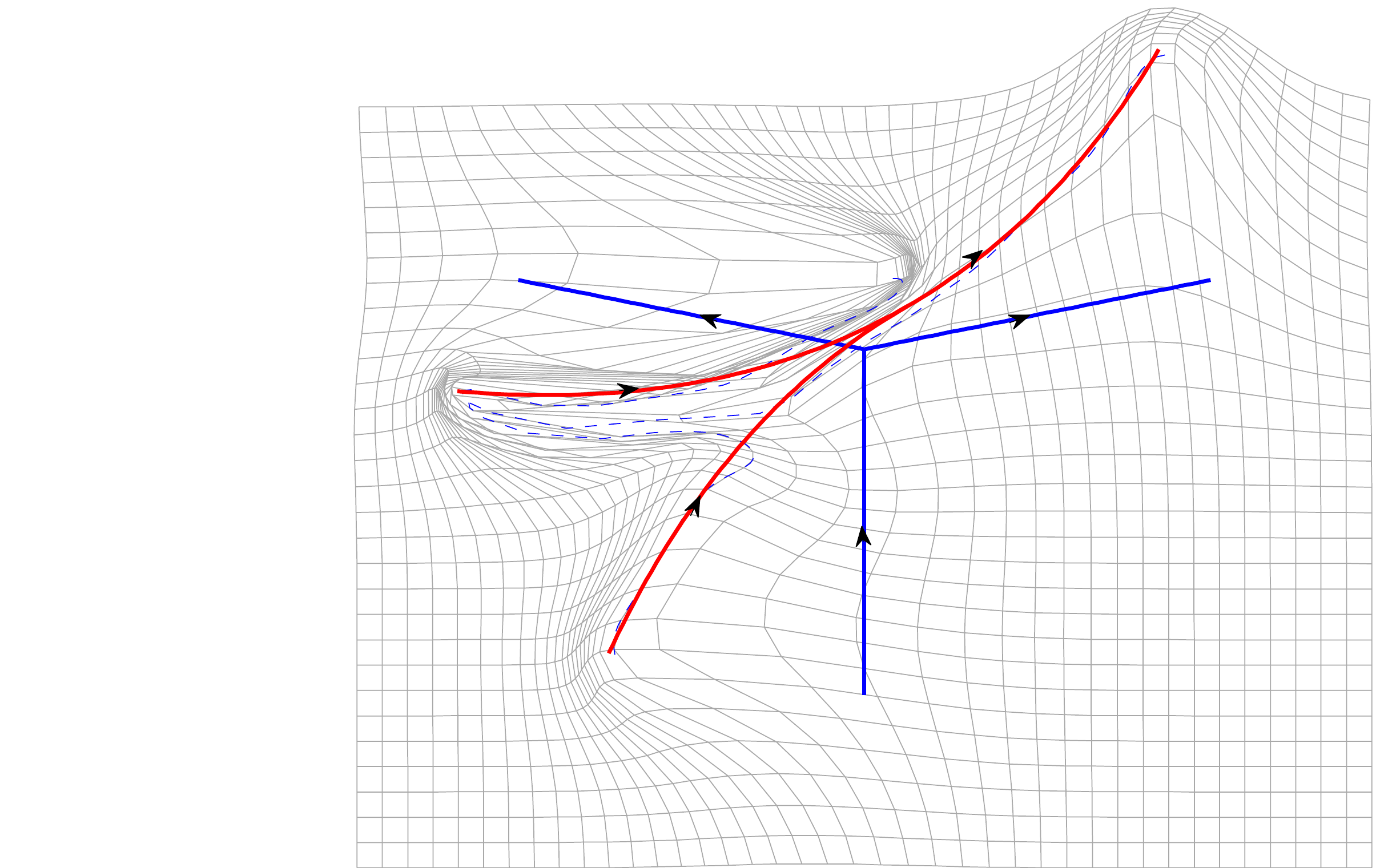} & \includegraphics[width=5.5cm,height=6.5cm]{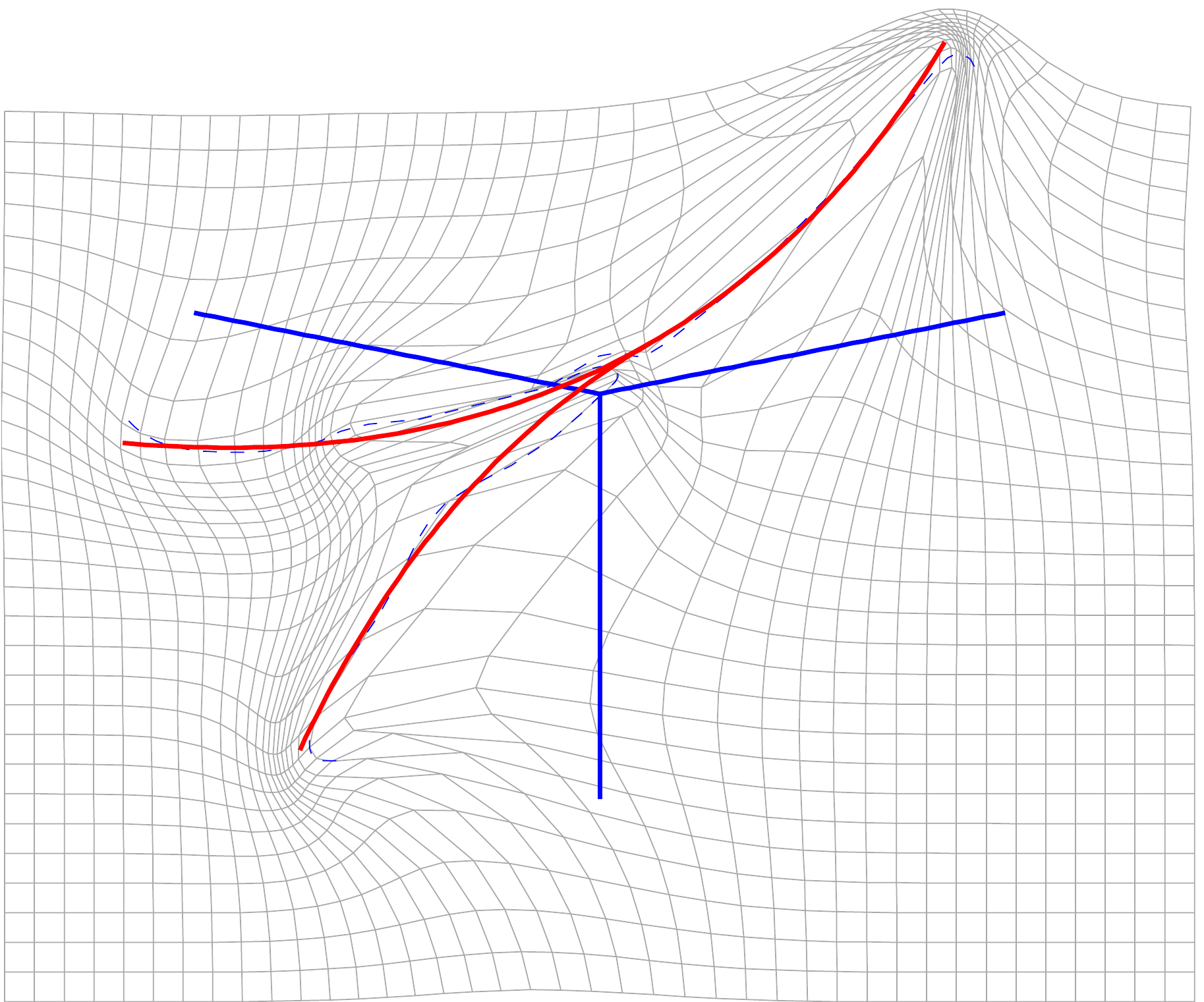}
\end{tabular}
\caption{Example of registration on trees with currents (left) and varifold (right). Observe again that, with currents, consistent orientation between the source and the target is necessary to avoid the unnatural deformation shown on the figure.}
\label{matching_trees}
\end{figure}

\begin{figure}
\begin{tabular}{cc}
\includegraphics[width=5cm,height=5cm]{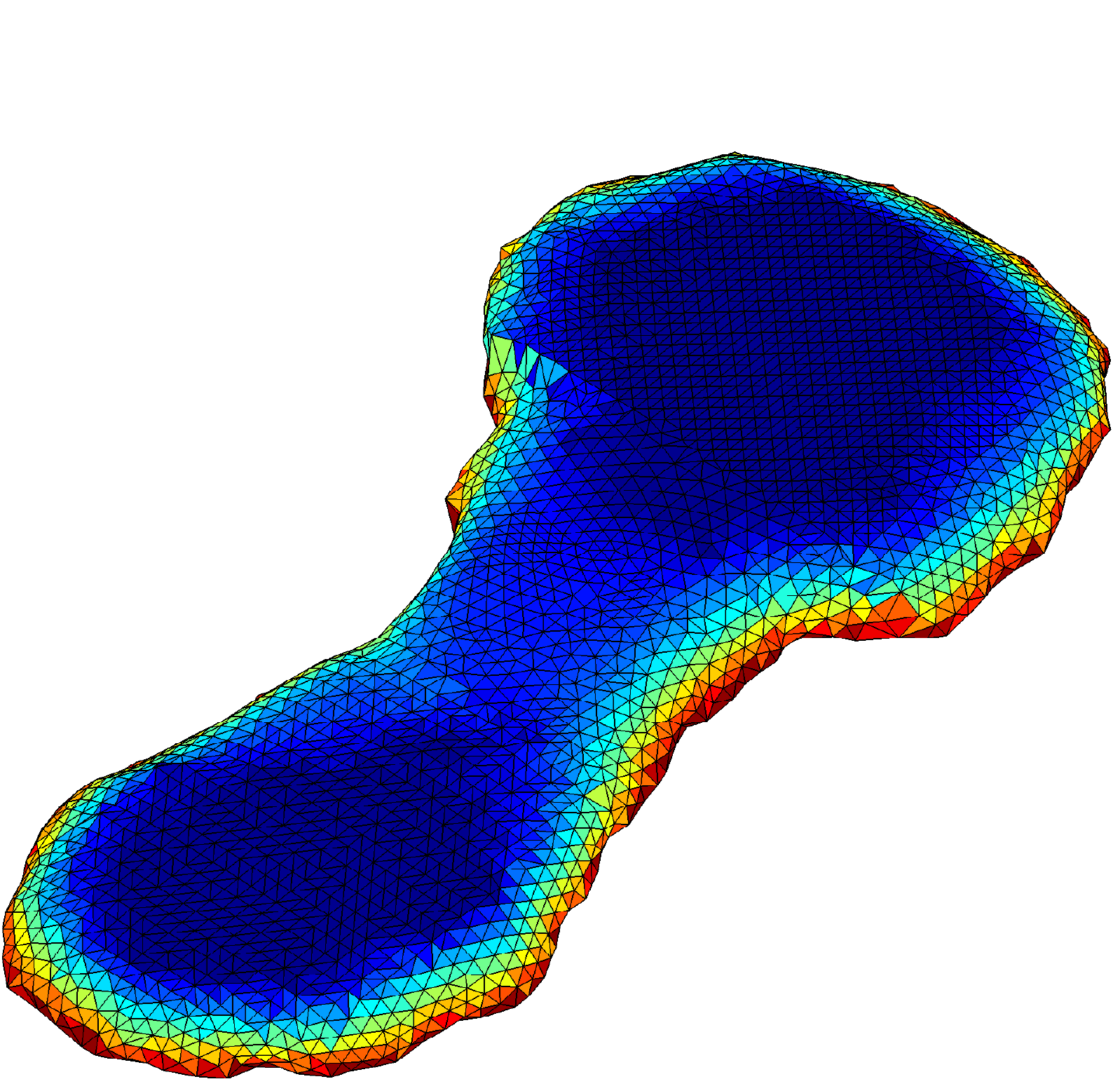} & \includegraphics[width=5cm,height=5cm]{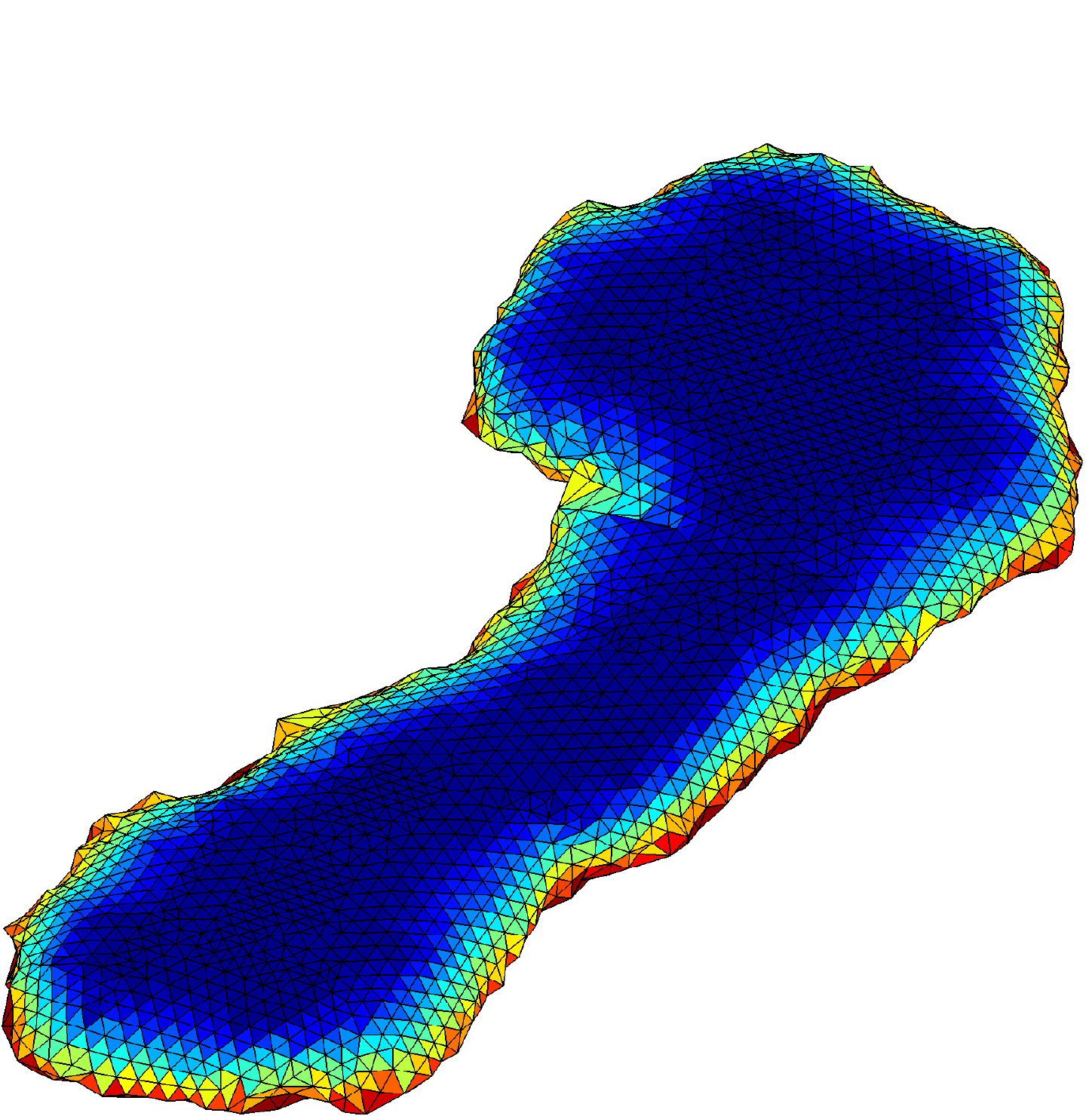} \\
Source surface & Target surface \\
\includegraphics[width=5cm,height=5cm]{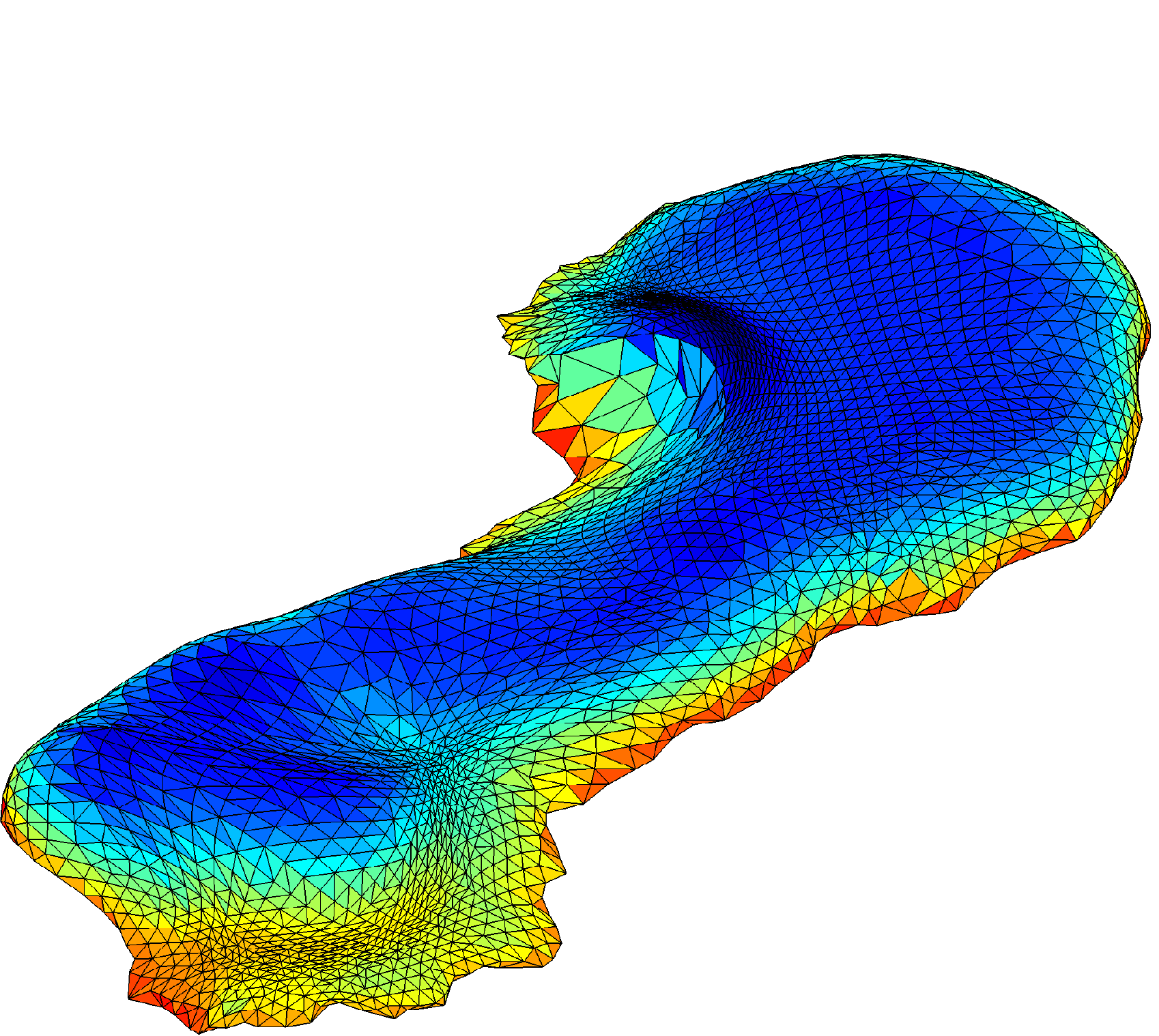} & \includegraphics[width=5cm,height=5cm]{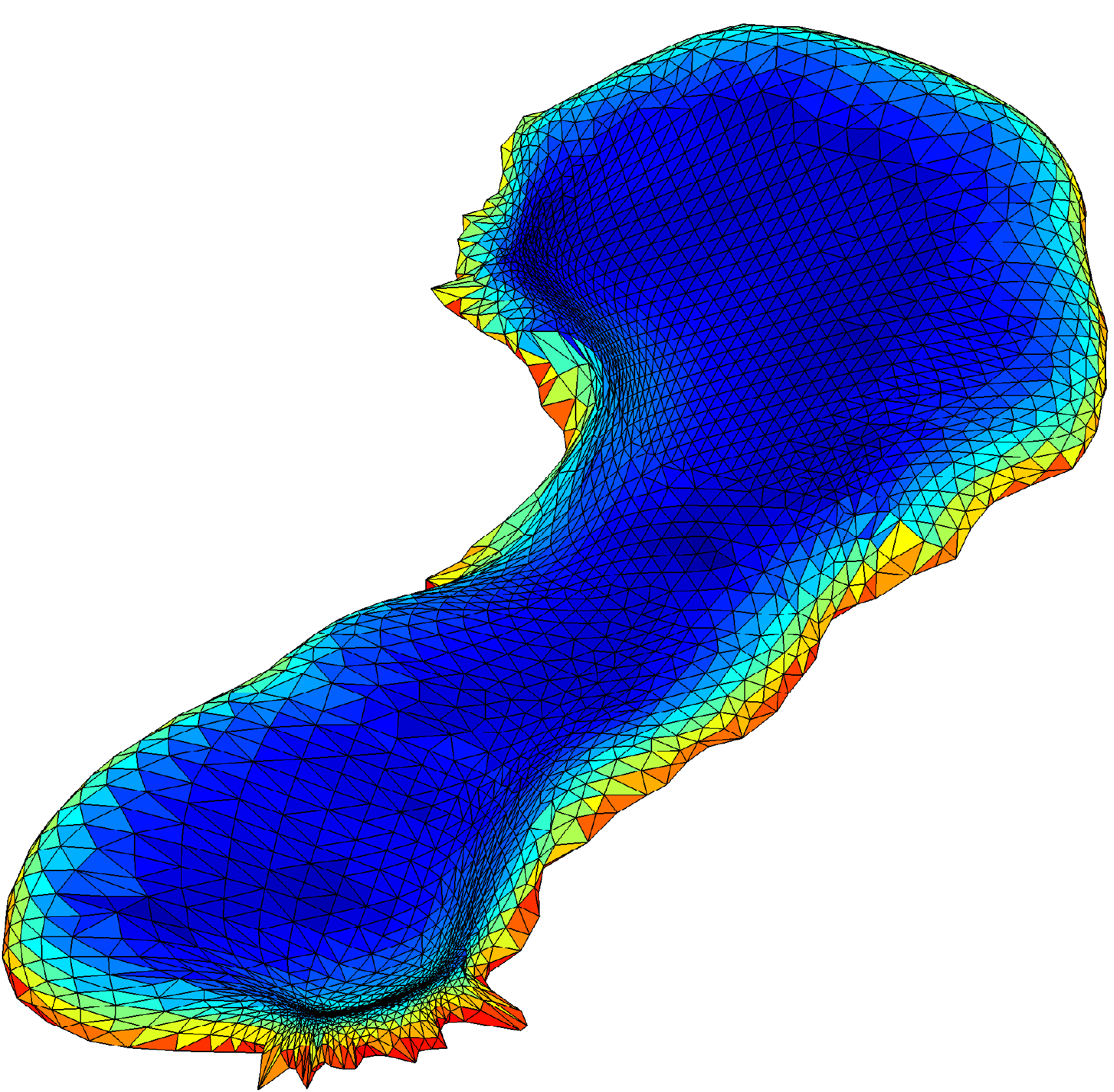} \\
Matching with currents & Matching with varifolds
\end{tabular}
\caption{Registration of two heart envelopes from mouse embryos (see Le-Garrec et al, \cite{LeGarrec}). We show the source and target envelopes on top, and the result after matching with currents and varifolds. Observe that the algorithm based on currents is able to reduce the data attachment distance between source and target by crushing both sides together instead of displacing them, unlike the varifold framework.}
\label{matching_enveloppes}
\end{figure}

\begin{figure}   
\begin{tabular}{cc}
\centering
 \includegraphics[width=5cm,height=5.5cm]{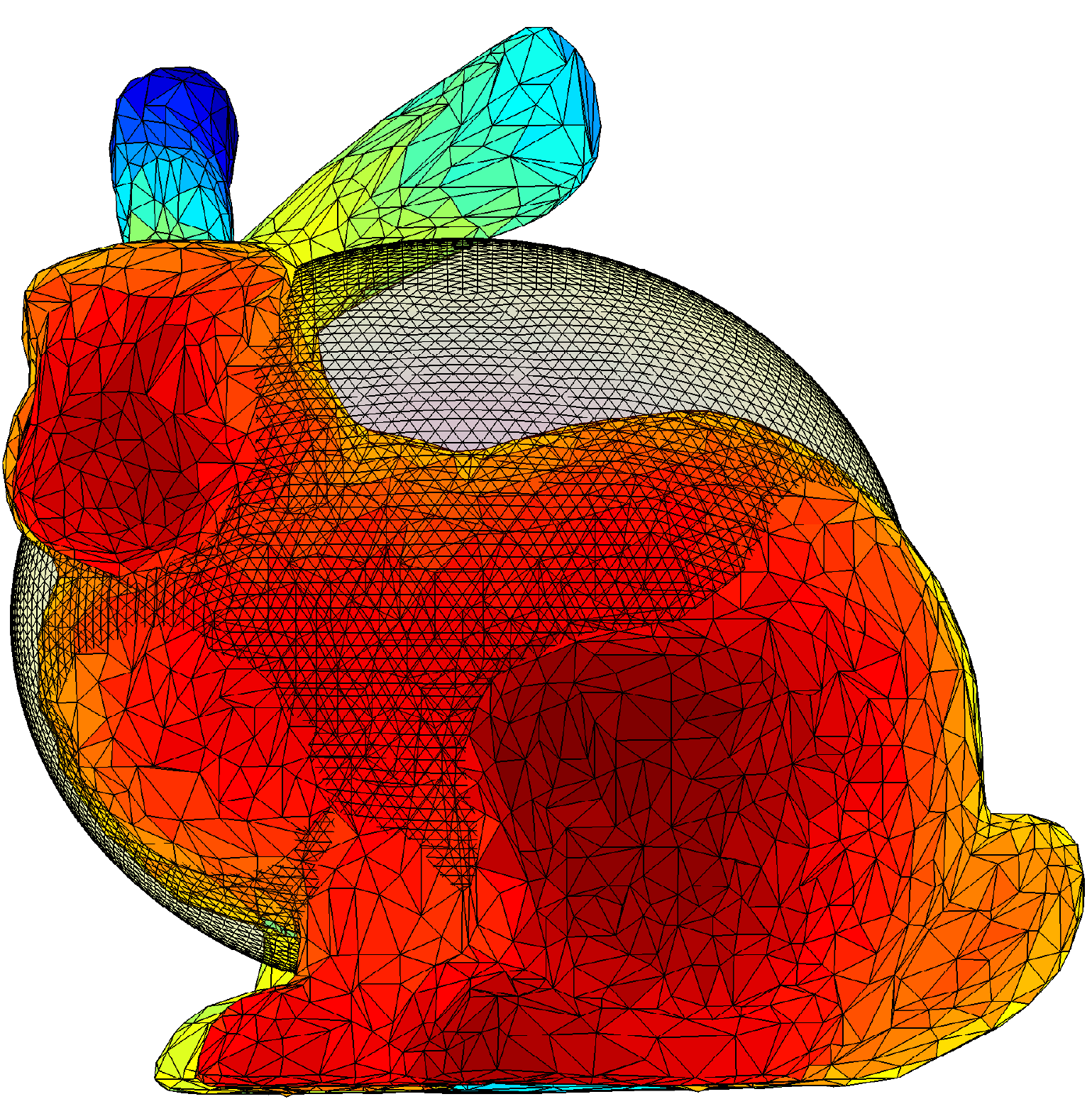} & \includegraphics[width=5cm,height=5.5cm]{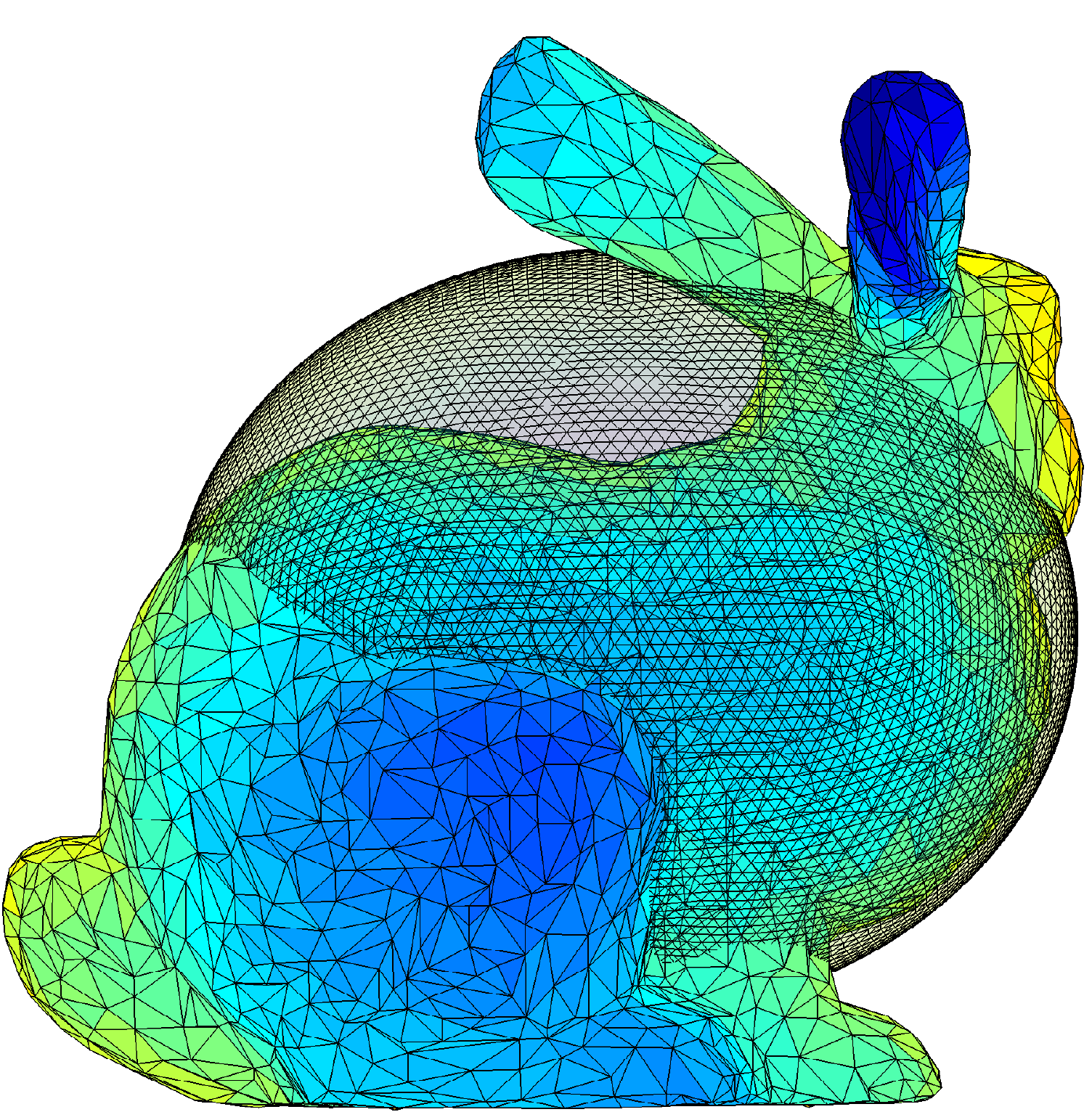} \\
 \includegraphics[width=5cm,height=5.5cm]{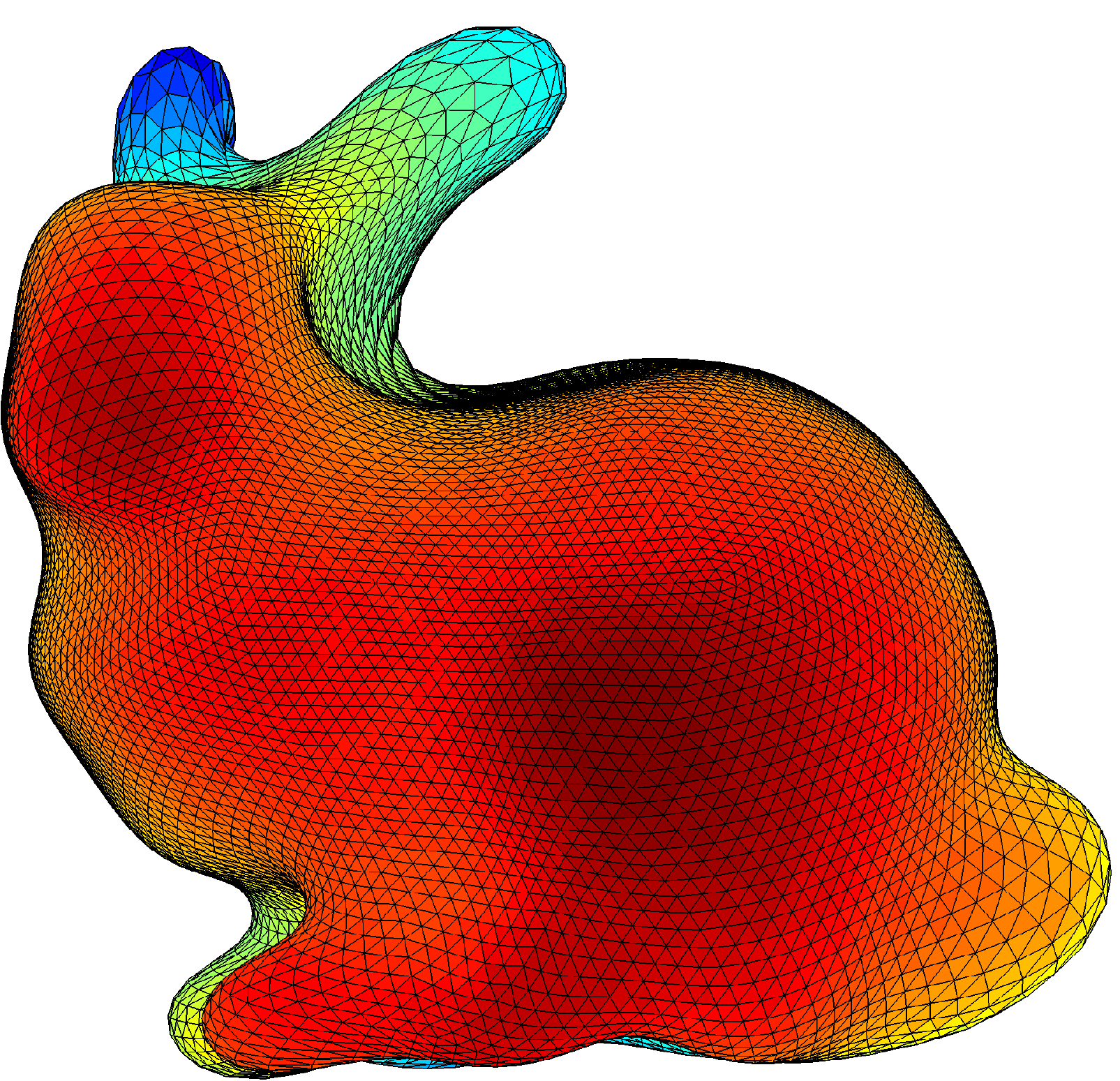} & \includegraphics[width=5cm,height=5.5cm]{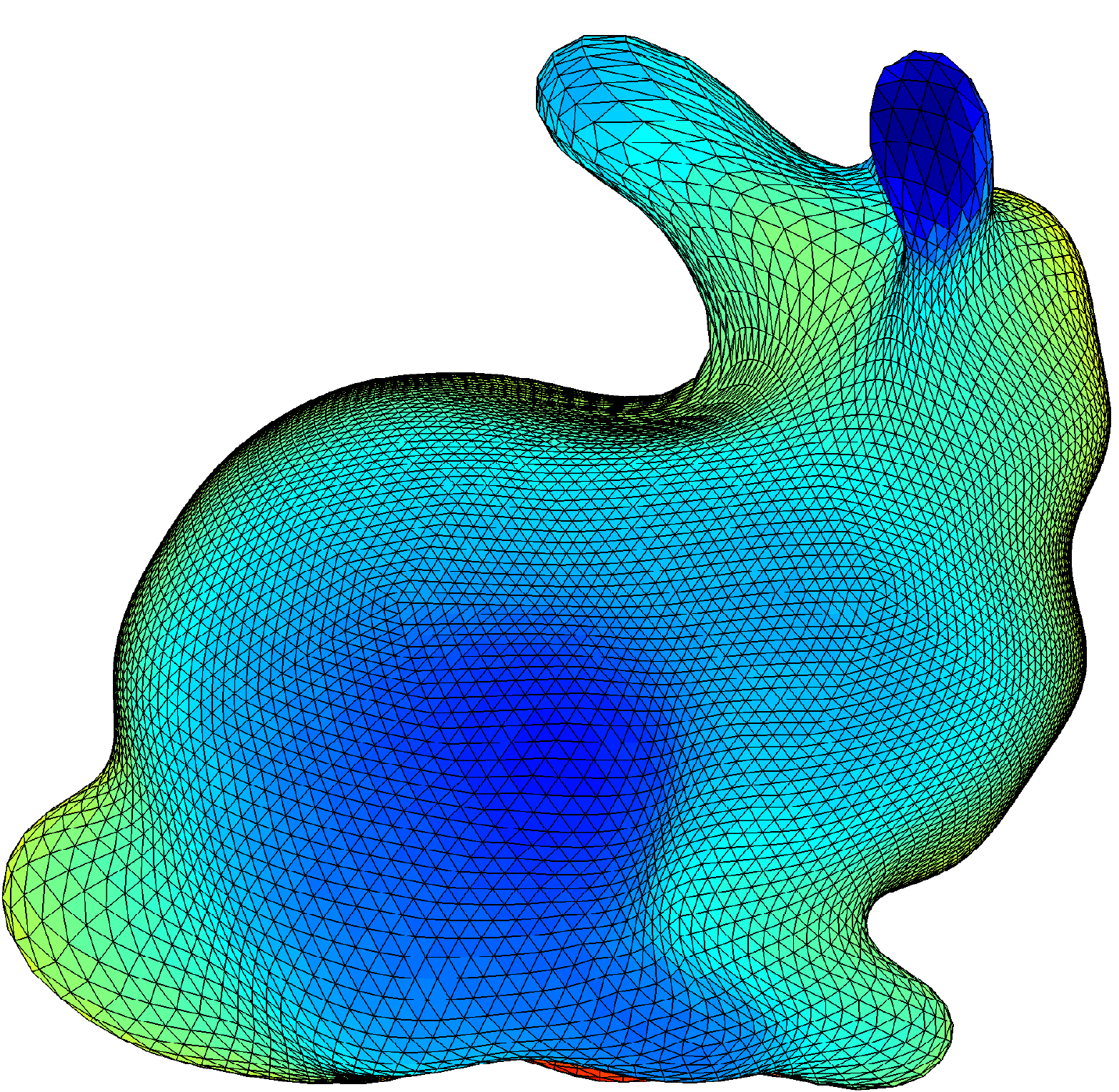} \\
 \includegraphics[width=5cm,height=5.5cm]{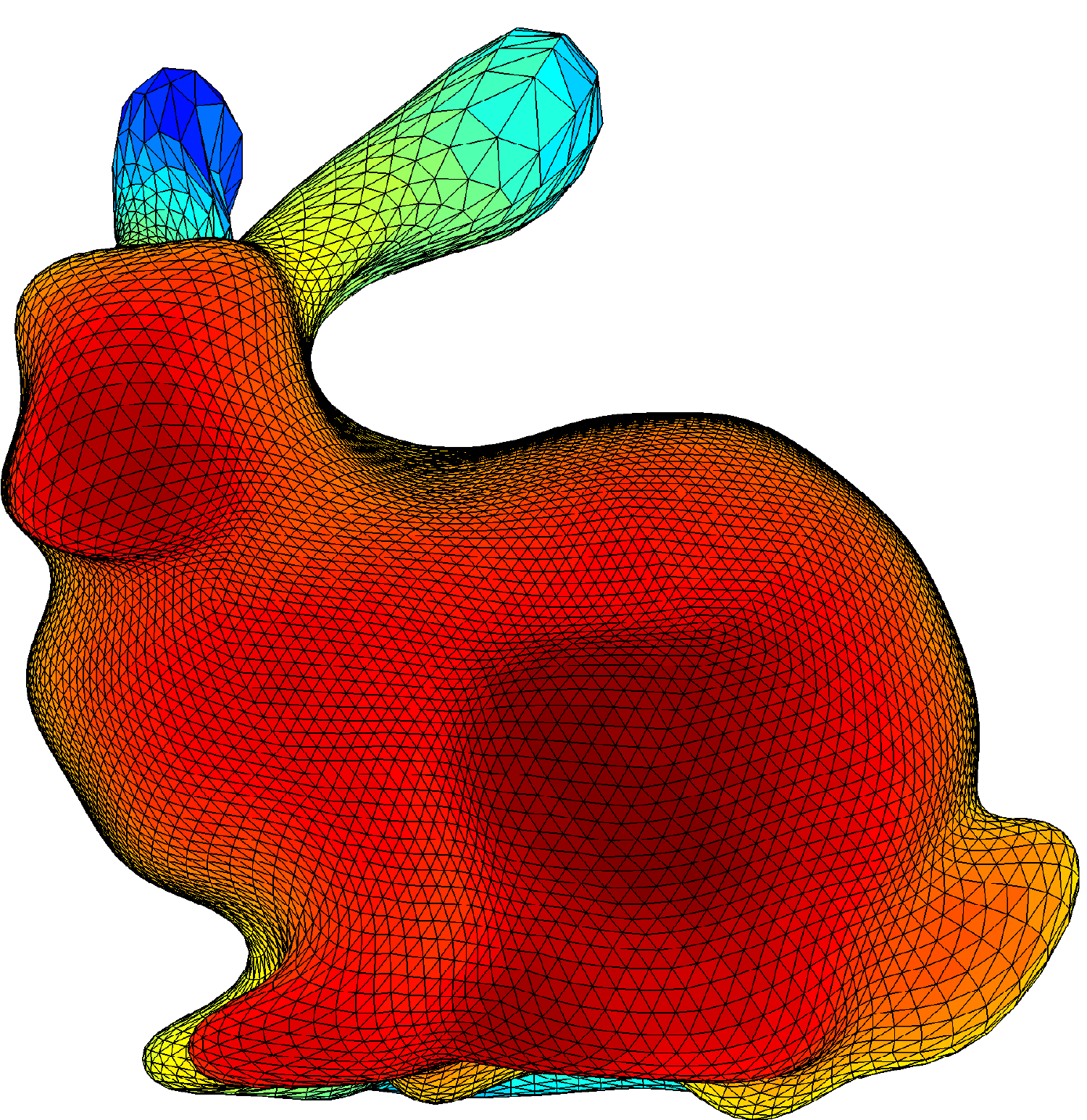} & \includegraphics[width=5cm,height=5.5cm]{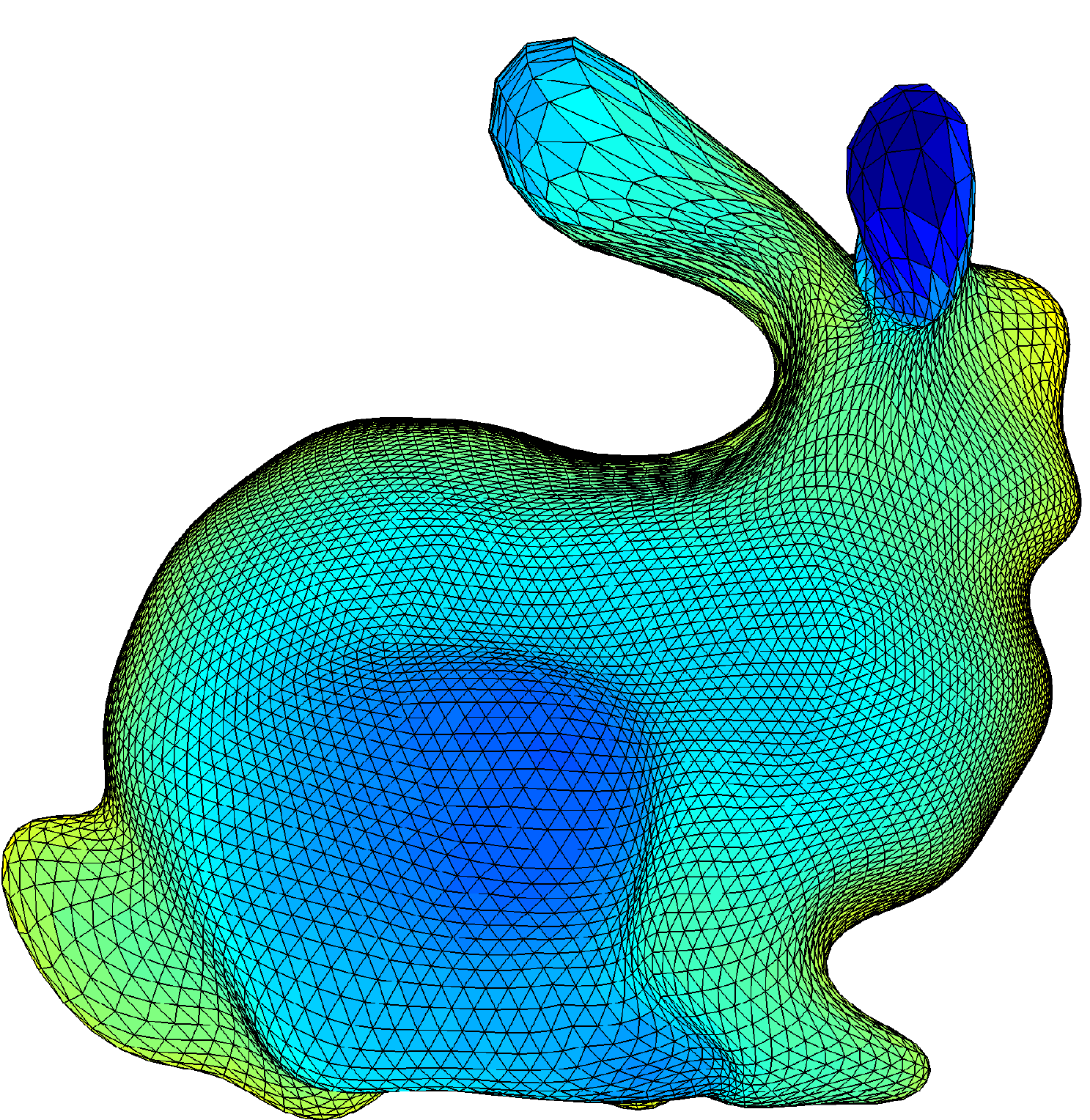}
\end{tabular}
\caption{Surface matching results between a sphere and the Stanford bunny. At top row, the source sphere and the target bunny shape. In the middle row, the result after matching with currents. At bottom row, with the varifold approach.} 
\label{matching_bunny}
\end{figure}

\section{Conclusion}
Having explored in the first place the orientation issues that come along with the use of currents in shape modelling, we have adapted in this paper the alternative concept of varifolds, by representing shapes as spatially spread distributions of unoriented tangent spaces themselves modeled as elements of the Grassmann manifold. As for currents, we embed a wide variety of objects like submanifolds and rectifiable subsets in one common functional space. The first contribution of our work was to define metrics that share theoretical requirements and enable convenient numerical computations. We have shown how this can be done by defining Hilbert structures on varifolds through reproducing kernels and proposed a general construction process for such kernels. From the theoretical side, we proved that these metrics, as opposed to the previous representation by currents, do not artificially 'eliminate' shape volume due to orientation. We also computed a variation formula of the metric with respect to shapes. In terms of numerical applications, we have presented an adaptation of large deformation shape registration based on this framework. The results we presented are mainly focused on synthetic examples for the time being, but we believe there are already important conclusions and perspectives that are brought out. The first one is the fact that our approach does not require any orientation of the shapes, which, in our opinion, can be of great interest in datasets which are not easily or even not orientable in a consistent way. Another point we would like to suggest is the idea that the tangent space distribution of a shape, even if the positions are not accurate (e.g for large scale kernels on $E$), is already giving much information on the shape itself. Because of orientation and the canceling effect, currents totally loose such information. In contrast, varifold-based algorithms are able to recover shapes, at least approximately, even at very large spatial scale. 

\section*{Acknowledgments}
Authors want to thank Stanley Durrleman and Marcel Prastawa for developing and providing the C++ code for shape registration used in the examples of this paper. This work was made possible thanks to HM-TC (Hippocampus, Memory and Temporal Consciousness) grant from the ANR (Agence Nationale de la Recherche) and a CNRS Pluridisciplinary Exploratory Project in collaboration with the team of S. Meilhac (Institut Pasteur).
  
\appendix
\section{A variation formula for varifold metrics}
This appendix is dedicated to the proof of theorem \ref{theo:variation_formula}. We assume that $X$ is a smooth compact orientable submanifold of $E$ so that we can consider the canonical volume form $\sigma_{X}$ on $X$. Given a $C^{1}$ vector field $v$ on $E$ with compact support, we consider the $1$-parameter group of diffeomorphisms $\phi_{t}$ with $\phi_{0}=Id$ and $\partial_{t} |_{t=0} \phi_{t} =v$. The whole point of the theorem is to compute the variation of $\mu_{X}(\omega)$ when $X$ is deformed by vector field $v$, that is to say~: 
\begin{equation}
\label{eq:def_variationW}
\left. \dfrac{d}{dt} \right \vert_{t=0} \mu_{\phi_{t}(X)}(\omega)=\left. \dfrac{d}{dt} \right \vert_{t=0} \int_{X_t} \omega(x,T_{x}X_t) d \mathcal{H}^{d}(x)
\end{equation}
where $X_t=\phi_t(X)$.
Note that this generalizes the first variation of a varifold computed in \cite{Allard}, for which $\omega = \mathds{1}$. Here, $\omega$ is any $C^{1}$ function on $E \times G_{d}(E)$. As already explained in section \ref{sec:var}, we have $\mu_{\phi_{t}(X)}(\omega)=\mu_{X}(\phi_{t}^{\ast} \omega)$ and~:
\begin{equation}
\label{eq:variationW1}
 \left. \dfrac{d}{dt} \right \vert_{t=0} \mu_{\phi_{t}(X)}(\omega) = \int_{X} \left. \dfrac{d}{dt} \right \vert_{t=0} (\phi_{t}^{\ast} \omega)(x,T_{x}X) d \mathcal{H}^{d}(x) =\int_X\pounds_v\omega(x,T_xX)d \mathcal{H}^{d}(x)
\end{equation}
where $\pounds_v\omega$ stands for the Lie derivative of $\omega(x,V)$ along $v$. Hereafter, to simplify the notation, for a function $(x,V)\mapsto f(x,V)$, we will simply write $\int_Xf$ instead of $\int_Xf(x,T_xX)d \mathcal{H}^{d}(x)$.
Now, $(\phi_{t}^{\ast} \omega)(x,T_{x}X)=|d_{x}\phi_{t}.T_{x}X|\omega(\phi_{t}(x),d_{x}\phi_{t}(T_{x}X))$. Let us recall that $J_{t}=|d_{x}\phi_{t}.T_{x}X|$ is the volume change in the direction of the tangent space $T_{x}X$. Taking the derivative inside the integral at $t=0$ leads to three terms~: differentiate the function $\omega$ with respect to position, with respect to the tangent space direction and differentiate the volume change. The first term is the simplest one and an immediate computation shows that it equals $\left ( \dfrac{\partial \omega}{\partial x} | v \right )$. The two others are more involved. \\
\\
\textbf{Derivative of the volume change~:} For any vector field $u$ defined on $X$, we shall denote by $u^{\top}$ and $u^{\bot}$ the tangential and normal components of $u$ with respect to the tangent space of $X$ at each point. We also introduce the connection $\nabla_{\cdot}\cdotp$ on the ambient space and an orthonormal frame of tangent vector fields $(e_{i})_{i=1,..,d}$ on $X$. Now $J_{t}=\sqrt{\det(\langle d_{x}\phi_{t}(e_{i}),d_{x}\phi_{t}(e_{j}))_{i,j}}$ so a simple calculation shows that~:
\begin{equation*}
 \left. \dfrac{d}{dt} \right \vert_{t=0} J_{t} = \sum_{i=1}^{d} \langle e_{i}, \nabla_{e_{i}}v \rangle
\end{equation*}
Writing $v=v^{\top}+v^{\bot}$ provides a first term $\sum_{i=1}^{d} \langle e_{i}, \nabla_{e_{i}}v^{\top} \rangle$ which is the tangential divergence of the vector field $v^{\top}$ denoted usually $\mdiv_{X}(v^{\top})$. The second term becomes $\sum_{i=1}^{d} \langle e_{i}, \nabla_{e_{i}}v^{\bot} \rangle$. For all $i=1,..,d$, we have $\langle e_{i}, v^{\bot} \rangle = 0$ so that after differentiation we find that $\langle e_{i}, \nabla_{e_{i}}v^{\bot} \rangle = - \langle \nabla_{e_{i}} e_{i}, v^{\bot} \rangle$. Therefore~: 
\begin{eqnarray*}
 \sum_{i=1}^{d} \langle e_{i}, \nabla_{e_{i}}v^{\bot} \rangle &=& -\sum_{i=1}^{d} \langle \nabla_{e_{i}} e_{i}, v^{\bot} \rangle \\
 &=& -\left \langle \left (\sum_{i=1}^{d} \nabla_{e_{i}} e_{i} \right )^{\bot}, v^{\bot} \right \rangle
\end{eqnarray*}
In this last expression, we recognize the \textbf{mean curvature vector} to the submanifold $X$, which is the trace of the Weingarten map and is denoted $H_{X}$. As a result, we find that~:
\begin{equation*}
\int_{X} \omega  \left. \dfrac{d}{dt} \right \vert_{t=0} J_{t} = \int_{X} \omega  \mdiv_{X}(v^{\top}) - \int_{X} \omega \langle H_{X},v^{\bot} \rangle 
\end{equation*}
Now, we will show that the first term can be rewritten as a boundary integral. Indeed, if we denote by $\tilde{\omega}$ the function defined on $X$ by $\tilde{\omega}(x)=\omega(x,T_{x}X)$, we have $\mdiv_{X}(\tilde{\omega}v^{\top})=\tilde{\omega}\mdiv_{X}(v^{\top})+\langle \nabla \tilde{\omega} | v^{\top} \rangle$. Applying the divergence theorem on the orientable manifold $X$ gives~:
\begin{equation*}
 \int_{X} \omega  \mdiv_{X}(v^{\top}) = -\int_{X} \langle \nabla \tilde{\omega} | v^{\top} \rangle + \int_{\partial X} \omega\langle \nu , v^{\top} \rangle
\end{equation*}
where $\nu$ is the unit outward normal to the boundary. \\
\\
\textbf{Derivative with respect to tangent spaces~:} We now come to the last term in equation (\ref{eq:variationW1}), which is the variation of $\omega$ with respect to the tangent space in $G_{d}(E)$. As explained in the beginning of section \ref{sec:Grass}, we can identify the tangent space to $G_{d}(E)$ at $V$ with the space of linear applications $\mathcal{L}(V,V^{\bot})$. In addition, if we set $V_{t}=d_{x} \phi_{t}(T_{x}X)$, we have~:
\begin{equation*}
 \left. \dfrac{d}{dt} \right \vert_{t=0} V_{t} = p_{T_{x}X^{\bot}} \circ \nabla v \vert_{T_{x}X} \in \mathcal{L}(T_{x}X,(T_{x}X)^{\bot})
\end{equation*}
which we will note more concisely $\left. \dfrac{d}{dt} \right \vert_{t=0} V_{t} = \nabla^{\bot}v$. The space $\mathcal{L}(T_{x}X,(T_{x}X)^{\bot})$ being trivially isomorphic to $T_{x}X^{\bot} \otimes (T_{x}X)^{*}$, we can consider $\nabla^\bot v$ being an element of that space and we introduce $\dfrac{\partial \omega}{\partial V}$ which we therefore identify to an element of $(T_{x}X^{\bot})^{*} \otimes T_{x}X$ : $\dfrac{\partial \omega}{\partial V} = \sum_{j=d+1}^{n} \eta_{j}^{*}\otimes \alpha_{j}$ for $(\eta_{d+1},..,\eta_{n})$ an orthonormal frame of $T_{x}X^{\bot}$ and $(\alpha_{j})$ vectors of $T_{x}X$ (as usual $\eta^*$ denotes the linear form $\langle \eta,.\rangle$). Then the variation we wish to compute is~:
\begin{equation*}
  \left ( \dfrac{\partial \omega}{\partial V} | \nabla^\bot v \right )=\left ( \dfrac{\partial \omega}{\partial V} | \nabla v \right ) = \sum_{j=d+1}^{n} \langle \eta_{j}, \nabla_{\alpha_{j}}v \rangle
\end{equation*}
If we introduce $\left( \dfrac{\partial \omega}{\partial V} | v \right )= \sum_{j=d+1}^{n} \eta_{j}^{*}(v)\alpha_{j} = \sum_{j=d+1}^{n} \langle \eta_{j},v \rangle\alpha_{j}$ which is a tangent vector field on $X$, we have~:
\begin{equation*}
 \mdiv_{X}\left( \dfrac{\partial \omega}{\partial V} | v \right ) = \sum_{i=1}^{d} \sum_{j=d+1}^{n} \left( \langle e_{i}, \nabla_{e_{i}} \alpha_{j} \rangle\langle \eta_{j},v \rangle + \langle e_{i}, \langle \nabla_{e_{i}}\eta_{j},v \rangle \alpha_{j} \rangle + \langle e_{i}, \langle \eta_{j},\nabla_{e_i}v \rangle \alpha_{j} \rangle \right)
\end{equation*}
The last term in the sum is also $\sum_{j=d+1}^{n} \langle \eta_{j},\nabla_{\alpha_{j}}v \rangle$, which is nothing else than $\left ( \dfrac{\partial \omega}{\partial V} | \nabla v \right )$. As for the two other terms in the sum, it's easy to see that it equals~:
\begin{equation*}
 \left ( \sum_{i=1}^{d} \langle e_{i}, \nabla_{e_{i}}\sum_{j=d+1}^{n} \eta_{j}^{*}\otimes \alpha_{j} \rangle | v \right ) = ( \mdiv_{X} \left (\dfrac{\partial \omega}{\partial V} \right ) | v )
\end{equation*}
So get eventually that~: 
\begin{equation}
\label{eq:variationW2}
 \left ( \dfrac{\partial \omega}{\partial V} | \nabla v \right ) = \mdiv_{X}\left( \dfrac{\partial \omega}{\partial V} | v \right ) - ( \mdiv_{X} \left (\dfrac{\partial \omega}{\partial V} \right ) | v )
\end{equation}
Integrating equation (\ref{eq:variationW2}) over the submanifold $X$ and using as previously the divergence theorem, we find that~:
\begin{equation}
\label{eq:variationW3}
 \int_{X} \left ( \dfrac{\partial \omega}{\partial V} | \nabla v \right ) = \int_{\partial X} \langle \nu , \left( \dfrac{\partial \omega}{\partial V} | v \right ) \rangle - \int_{X} ( \mdiv_{X} \left (\dfrac{\partial \omega}{\partial V} \right ) | v ) 
\end{equation}

\textbf{Synthesis~:} Grouping all the different terms obtained so far, we get the following ~:
\begin{eqnarray*}
 \int_X \pounds_v\omega &=& \int_{X} \left ( \dfrac{\partial \omega}{\partial x} - \mdiv_{X} \left (\dfrac{\partial \omega}{\partial V} \right ) | v \right ) -\int_{X} \langle \nabla \tilde{\omega} | v^{\top} \rangle - \omega\langle H_{X} | v^{\bot} \rangle \\
 & & + \int_{\partial X} \langle \nu, \left( \dfrac{\partial \omega}{\partial V} | v \right ) + \omega v^{\top} \rangle
\end{eqnarray*}
We remind that $\tilde{\omega}(x)=\omega(x,T_{x}X)$ so $(\nabla \tilde{\omega} | v^{\top} ) = \left ( \dfrac{\partial \omega}{\partial x} | v^{\top} \right ) + \left ( \dfrac{\partial \omega}{\partial V} | \nabla v^{\top} \right )$ and applying the result of equation (\ref{eq:variationW2}) to $v^{\top}$~:
\begin{equation*}
\left ( \dfrac{\partial \omega}{\partial V} | \nabla v^{\top} \right ) = \mdiv_{X}\left( \dfrac{\partial \omega}{\partial V} | v^{\top} \right ) - ( \mdiv_{X} \left (\dfrac{\partial \omega}{\partial V} \right ) | v^{\top} )
\end{equation*}
Using the divergence theorem on $\mdiv_{X}\left( \dfrac{\partial \omega}{\partial V} | v^{\top} \right )$ and the equality $v=v^\top+v^\bot$ we find eventually that~:
\begin{equation*}
\int_X \pounds_v\omega = \int_{X} \left ( \dfrac{\partial \omega}{\partial x} - \mdiv_{X} \left (\dfrac{\partial \omega}{\partial V} \right ) - \omega H_{X} | v^{\bot} \right ) + \int_{\partial X} \langle \nu, \left( \dfrac{\partial \omega}{\partial V} | v \right ) + \omega v^{\top} \rangle
\end{equation*}
which proves the result of theorem \ref{theo:variation_formula}. 

\section{Discrete computations for varifold LDDMM}
As explained in section \ref{sec:var_LDDMM_algo}, the LDDMM algorithm relies on the computation of a distance between the deformed source object $\phi(\mathcal{O}_{1})$ and the target object $\mathcal{O}_{2}$, and on the gradient of this distance with respect to final points $q_{1}$ of the object $\mathcal{O}_{1}$. In our context of unoriented shapes modeled as varifolds, we use distances provided by kernels and their associated RKHS presented in section \ref{sec:var_kernel}. Those are the tensor product of a kernel $k_{e}$ on the space $\mathbb{R}^3$ and a kernel $k_{t}$ on the Grassmann manifold. If $W$ is the RKHS corresponding to $k=k_{e}\otimes k_{t}$, the attachment distance we have is then~:
 \begin{equation}
\label{eq:attach_general}
 d(\phi_{1}^{v}.\mathcal{O}_{1},\mathcal{O}_{2})= \|(\phi_{1}^{v})_{\ast} \mu_{\mathcal{O}_{1}} - \mu_{\mathcal{O}_{2}} \|_{W^{\ast}}^{2} = \|\mu_{\phi_{1}^{v}(\mathcal{O}_{1})} - \mu_{\mathcal{O}_{2}} \|_{W^{\ast}}^{2} 
 \end{equation} 
Since $\phi_{1}^{v}(\mathcal{O}_{1})$ is the transported source object by the vector field at time $1$ (represented by the set of points $q_{1}$ with the previous notations), formally the problem reduces to express explicitly quantities like $\|\mu_{X} - \mu_{Y} \|_{W^{\ast}}^{2}$ where $X$ and $Y$ are two shapes of the same dimension and compute the gradient of such terms with respect to the points of $X$.  \\
\\ 
We assume now that $X$ and $Y$ are two unoriented $d$-dimensional shapes (possibly disconnected) given respectively as sets of vertex $(x_{k})_{k=1,..,N}$, $(y_{l})_{l=1,..,M}$ together with sets of unoriented simplexes $(f_{i}^{1},..,f_{i}^{d})_{i=1,..,n}$ and $(g_{j}^{1},..,g_{j}^{d})_{j=1,..,m}$. The computation of the varifold representations $\mu_{X}$ and $\mu_{Y}$ of each shape was explained in section \ref{sec:Grass} and provided by equations (\ref{eq:discrete_curve_varifold}) for curves, (\ref{eq:discrete_surface_varifold}) for surfaces. We write $\mu_{X}=\sum_{i=1}^{n} l_{i}.\delta_{(p_{i},U_{i})}$ and $\mu_{Y}=\sum_{j=1}^{m} \lambda_{j}.\delta_{(q_{j},V_{j})}$. Then the attachment distance becomes~:
 \begin{eqnarray}
\label{eq:attach_curve}
 A &=& \|\mu_{X}\|_{W^{\ast}}^{2}-2\langle \mu_{X}, \mu_{Y} \rangle_{W^{\ast}}+\|\mu_{Y}\|_{W^{\ast}}^{2} \nonumber \\
&=& \sum_{i,j=1..n} l_{i} l_{j} k_{e}(p_{i},p_{j}).k_{t}(U_{i},U_{j}) \nonumber \\
& &-2 \sum_{i=1..n}\sum_{j=1..m} l_{i} \lambda_{j} k_{e}(p_{i},q_{j}).k_{t}(U_{i},V_{j}) \nonumber \\ 
& &+ \sum_{i,j=1..n} \lambda_{i} \lambda_{j} k_{e}(q_{i},q_{j}).k_{t}(V_{i},V_{j})
 \end{eqnarray}  
Thus, computing the distance between $X$ and $Y$ consists basically in computing the varifold representation of the shapes from their sets of points and meshes and then make repeated kernel evaluations. If $k_{e}$ and $k_{t}$ are specified, the previous expression becomes totally explicit. \\
The more technical part is the computation of the gradient of the previous distance with respect to the $(x_{k})$'s, the points of the first shape. In equation (\ref{eq:attach_curve}), the attachment is a function of the $p_{i}$, $l_{i}$ and $U_{i}$ but all these terms are themselves functions of the $(x_{k})$'s, respectively as the centers, lengths and tangent directions to every cell of $X$. The resulting function that we have denoted $g((x_{k}))$ can be therefore differentiated as a composition~:
\begin{equation}
\label{eq:grad1}
 \partial_{x_{k}}g= \sum_{i=1}^{n} \left ( \partial_{p_{i}}A \circ \partial_{x_{k}}p_{i} + \partial_{l_{i}}A \circ \partial_{x_{k}}l_{i} + \partial_{U_{i}}A \circ \partial_{x_{k}}U_{i} \right )
 \end{equation}    
The differentials of $A$ with respect to the $p_{i}$, $l_{i}$ and $U_{i}$ are easily expressed from equation (\ref{eq:attach_curve}), involving the differentials of the kernels $k_{e}$ and $k_{t}$.
\begin{equation}
\label{eq:grad2}
\left\{
  \begin{array}[h]{l}
 \partial_{p_{i}} A = \sum_{j=1}^{n} l_{i} l_{j} [\partial_{1}k_{e}(p_{i},p_{j})+\partial_{2}k_{e}(p_{j},p_{i})].k_{t}(U_{i},U_{j})\\
 \hspace{1cm} -2 \sum_{j=1}^{m} l_{i} \lambda_{j} \partial_{1}k_{e}(p_{i},q_{j}).k_{t}(U_{i},V_{j}) \\
\\
 \partial_{U_{i}} A = \sum_{j=1}^{n} l_{i} l_{j} k_{e}(p_{i},p_{j}).[\partial_{1}k_{t}(U_{i},U_{j})+\partial_{2}k_{e}(U_{j},U_{i})]\\
 \hspace{1cm} -2 \sum_{j=1}^{m} l_{i} \lambda_{j} k_{e}(p_{i},q_{j}).\partial_{1}k_{t}(U_{i},V_{j}) \\
\\
 \partial_{l_{i}} A = 2\sum_{j=1}^{n} l_{j} k_{e}(p_{i},p_{j}).k_{t}(U_{i},U_{j}) \\
\hspace{1cm} -2\sum_{j=1}^{m} \lambda_{j} k_{e}(p_{i},q_{j}).k_{t}(U_{i},V_{j})
  \end{array}\right.
\end{equation}
where $\partial_{1}$ and $\partial_{2}$ denotes the derivative with respect to the first and second argument. We remind (cf section \ref{sec:kernel_Grassmann}) that, in our approach, kernel $k_{t}$ is the restriction of a kernel defined on the vector space $\mathcal{L}(E)$ so that writing its derivatives is natural. \\ 
Now, the last step consists in computing derivatives of $p_{i}$, $l_{i}$ and $U_{i}$ with respect to the points of $X$. These depend on the dimension and codimension of the shape. In the case of curves and surfaces embedded in $\mathbb{R}^{3}$, it can be simply done by differentiating equations (\ref{eq:discrete_curve_varifold}) for curves or equations (\ref{eq:discrete_surface_varifold}) for surfaces. We give the details for these two cases below. \\
In the case of \textbf{curves}, we have $p_{i}=\frac{(x_{f_{i}^{1}}+x_{f_{i}^{2}})}{2}$, $l_{i}=|x_{f_{i}^{2}}-x_{f_{i}^{1}}|$. The tangent space direction can be represented as a single normalized vector $u_{i}=\frac{x_{f_{i}^{2}}-x_{f_{i}^{1}}}{|x_{f_{i}^{2}}-x_{f_{i}^{1}}|}$, which gives for all $i \in \{1,..,n\}$ and $k \in \{1,..,N\}$
\begin{equation}
\label{eq:grad_var_curve}
\left\{
  \begin{array}[h]{l}
 \partial_{x_{k,s}} p_{i} = \dfrac{1}{2}(\delta_{\{k=f_{i}^{1}\}}+\delta_{\{k=f_{i}^{2}\}})  \\
\\
 \partial_{x_{k,s}} u_{i} = \dfrac{1}{|x_{f_{i}^{2}}-x_{f_{i}^{1}}|} \left (e_{s}-\dfrac{x_{f_{i}^{2},s}-x_{f_{i}^{1},s}}{|x_{f_{i}^{2}}-x_{f_{i}^{1}}|}u_{i} \right ) \delta_{\{k=f_{i}^{2}\}}\\
  \hspace{1cm} -\dfrac{1}{|x_{f_{i}^{2}}-x_{f_{i}^{1}}|} \left (e_{s}-\dfrac{x_{f_{i}^{2},s}-x_{f_{i}^{1},s}}{|x_{f_{i}^{2}}-x_{f_{i}^{1}}|} u_{i} \right ) \delta_{\{k=f_{i}^{1}\}}\\
\\
 \partial_{x_{k,s}} l_{i} = \dfrac{x_{f_{i}^{2},s}-x_{f_{i}^{1},s}}{|x_{f_{i}^{2}}-x_{f_{i}^{1}}|} \delta_{\{k=f_{i}^{2}\}}-\dfrac{x_{f_{i}^{2},s}-x_{f_{i}^{1},s}}{|x_{f_{i}^{2}}-x_{f_{i}^{1}}|} \delta_{\{k=f_{i}^{1}\}}
  \end{array}\right.
\end{equation}
In the previous equations, $x_{k,s}$ denotes the number $s$ coordinate of $x_{k}$ in the embedding space's canonical basis $(e_{s})$. These equations can be simply interpreted and implemented as the way to distribute the gradient computed with respect to the segments over the points of the shape $X$. \\  
In the case of \textbf{triangulated surfaces} in $\mathbb{R}^3$, we have $p_{i}=\frac{(x_{f_{i}^{1}}+x_{f_{i}^{2}}+x_{f_{i}^{3}})}{3}$, \\
$l_{i}=|(x_{f_{i}^{2}}-x_{f_{i}^{1}})\wedge(x_{f_{i}^{3}}-x_{f_{i}^{1}})|$ and again the tangent space is simply encoded by the normalized normal vector $u_{i}=\frac{(x_{f_{i}^{1}}-x_{f_{i}^{2}})\wedge(x_{f_{i}^{2}}-x_{f_{i}^{3}})}{|(x_{f_{i}^{1}}-x_{f_{i}^{2}})\wedge(x_{f_{i}^{2}}-x_{f_{i}^{3}})|}$ so that for all $i \in \{1,..,n\}$ and $k \in \{1,..,N\}$
\begin{equation}
\label{eq:grad_var_surface}
\left\{
  \begin{array}[h]{l}
 \partial_{x_{k,s}} p_{i} = \dfrac{1}{3}(\delta_{\{k=f_{i}^{1}\}}+\delta_{\{k=f_{i}^{2}\}}+\delta_{\{k=f_{i}^{3}\}})  \\
\\
 \partial_{x_{k,s}} u_{i} = \dfrac{1}{l_{i}} \left ( e_{s}\wedge (x_{f_{i}^{2}}-x_{f_{i}^{3}})-\left \langle e_{s}\wedge (x_{f_{i}^{2}}-x_{f_{i}^{3}}) \ , \ u_{i} \right \rangle u_{i}  \right ) \delta_{\{k=f_{i}^{1}\}} \\
 \hspace{2cm} + ...\\
\\
 \partial_{x_{k,s}} l_{i} = \left \langle e_{s}\wedge (x_{f_{i}^{2}}-x_{f_{i}^{3}}) \ , \ u_{i} \right \rangle \delta_{\{k=f_{i}^{1}\}} + ...
  \end{array}\right.
\end{equation}
We wrote ``$+...$`` to mean that similar terms are obtained for $k=f_{i}^{1}$ and $k=f_{i}^{2}$. \\
Combining the last relations with equations (\ref{eq:grad1}) and (\ref{eq:grad2}) provides a full description of the gradient computation in practice.

\tableofcontents

\bibliographystyle{abbrv}
\bibliography{biblio}

\end{document}